\definecolor{darkred}{rgb}{0.4,0.1,0.1}
\def\softness{0.4}
\definecolor{softred}{rgb}{1,\softness,\softness}
\definecolor{softgreen}{rgb}{\softness,1,\softness}
\definecolor{softblue}{rgb}{\softness,\softness,1}
\definecolor{softrg}{rgb}{1,1,\softness}
\definecolor{softrb}{rgb}{1,\softness,1}
\definecolor{softgb}{rgb}{\softness,1,1}
\numberwithin{figure}{section}
\numberwithin{equation}{section}
\theoremstyle{plain}
\newtheorem{thm}{Theorem}[section]
\newtheorem{lem}[thm]{Lemma}
\newtheorem{prop}[thm]{Proposition}
\newtheorem{example}[thm]{Example}
\newtheorem{cor}[thm]{Corollary}
\newtheorem{hyp}[thm]{Hypothesis}
\newtheorem{dfn}[thm]{Definition}
\theoremstyle{remark}
\newtheorem{remark}[thm]{Remark}
\theoremstyle{plain}
\newcommand{\supp}{\mathrm{supp}\,}
\newcommand{\be}{\begin{equation}}
\newcommand{\ee}{\end{equation}}
\newcommand{\beu}{\begin{equation*}}
\newcommand{\eeu}{\end{equation*}}
\newcommand{\besu}{\begin{equation*}
\begin{aligned}}
\newcommand{\eesu}{\end{aligned}
\end{equation*}}
\newcommand{\bes}{\begin{equation}
\begin{aligned}}
\newcommand{\ees}{\end{aligned}
\end{equation}}
\newcommand\cB{\mathcal B}
\newcommand\cG{\mathcal G}
\newcommand\cH{\mathcal H}
\newcommand\cP{\mathcal P}
\newcommand\cR{\mathcal R}
\newcommand\cS{\mathcal S}
\newcommand\fra{\mathfrak a}
\newcommand\ov{\overline}
\newcommand\wt{\widetilde}
\newcommand\sess{\sigma_{\rm ess}}
\newcommand\sign{{\rm sign\,}}
\newcommand\void[1]{}
\def\sess{\sigma_{\rm ess}}
\def\ran{{\rm ran\,}}
      \def\dC{{\mathbb C}}
   \def\dN{{\mathbb N}}   
      \def\dR{{\mathbb R}}
   \def\cB{{\mathcal B}}   
\def\cG{{\mathcal G}}   \def\cH{{\mathcal H}}
\def\cP{{\mathcal P}}      \def\cR{{\mathcal R}}
\def\cS{{\mathcal S}}      
   \def\cZ{{\mathcal Z}}
\newcommand{\dom}{\mathrm{dom}\,}
\title[$\delta$ and $\delta'$-interactions on Lipschitz partitions]{Schr\"odinger operators with $\delta$ and $\delta'$-interactions 
on Lipschitz surfaces and chromatic numbers of associated partitions}
\author{Jussi Behrndt \and Pavel Exner \and Vladimir Lotoreichik}
\newcommand{\Real}{\mbox{{\rm Re}}\,}
\def\@tocline#1#2#3#4#5#6#7{\relax
  \ifnum #1>\c@tocdepth 
  \else
    \par \addpenalty\@secpenalty\addvspace{#2}%
    \begingroup \hyphenpenalty\@M
    \@ifempty{#4}{%
      \@tempdima\csname r@tocindent\number#1\endcsname\relax
    }{%
      \@tempdima#4\relax
    }%
    \parindent\z@ \leftskip#3\relax \advance\leftskip\@tempdima\relax
    \rightskip\@pnumwidth plus4em \parfillskip-\@pnumwidth
    #5\leavevmode\hskip-\@tempdima
      \ifcase #1
       \or\or \hskip 1em \or \hskip 2em \else \hskip 3em \fi%
      #6\nobreak\relax
    \dotfill\hbox to\@pnumwidth{\@tocpagenum{#7}}\par
    \nobreak
    \endgroup
  \fi}
\begin{document}

\begin{abstract}
We investigate Schr\"odinger operators with $\delta$ and $\delta'$-interactions
supported on hypersurfaces, which separate the Euc\-lidean space into finitely many  bounded and unbounded Lipschitz domains. 
It turns out that the combinatorial properties of the partition and the spectral properties
of the corresponding operators are related. 
As the main result we prove an operator inequality for the Schr\"odinger operators with $\delta$ and $\delta'$-interactions
which  is based on an optimal colouring and involves the chromatic number of the partition. 
This inequality implies various relations for the spectra of the Schr\"odinger operators 
and, in particular, it allows to transform known results for  Schr\"odinger operators
with $\delta$-interactions to Schr\"odinger operators with $\delta^\prime$-inter\-actions.

\end{abstract}

\maketitle

\section{Introduction}

Schr\"o\-dinger operators with singular $\delta$-type interactions supported on discrete sets, curves and 
surfaces are used for the description
of quantum mechanical systems with a certain degree of idealization. 
The spectral properties of 
Schr\"o\-dinger operators with $\delta$ and $\delta^\prime$-interactions were investigated in
numerous mathematical and physical articles in the recent past; we mention only \cite{BN11, KM10, MS12, O10} for interactions on point sets, 
\cite{CK11, EI01, EK08, EN03, EP12, K12, KV07} on curves, and \cite{AKMN13, BLL13, EF09,  EK03} for interactions on surfaces.
For a survey and further references we refer the reader to \cite{E08} and to the standard monograph \cite{AGHH}.

In this paper we investigate attractive $\delta$ and $\delta'$-interactions 
supported on general hypersurfaces, which separate the Euclidean space $\dR^d$ into finitely many bounded and unbounded Lipschitz domains. 
We establish a connection between the combinatorial properties of these so-called Lipschitz partitions and the relation of the
Schr\"odinger operators with $\delta$ and $\delta'$-interactions to each other. More precisely, suppose that the Euclidean space $\dR^d$, $d\geq 2$, 
is split into a finite number of Lipschitz domains $\Omega_k$, $k=1,\dots,n$, and let $\Sigma$ be the union of the boundaries of all $\Omega_k$.
The chromatic number $\chi$ of the partition is defined as the minimal number of colours, 
which is sufficient to colour all domains $\Omega_k$ in such a way that any two neighbouring domains have distinct colours. 
In the two dimensional case the famous four colour theorem states that $\chi\le 4$ for any Lipschitz partition of the plane.
In the following the strengths of the $\delta$ and $\delta'$-interactions are assumed to be constant along their support $\Sigma$, 
which simplifies the explanation of our results. 
Let $\alpha \in \dR$, $\beta\in\dR\setminus\{0\}$ and define the quadratic forms 
\[
\fra_{\delta,\alpha}[f] := \big\|\nabla f\big\|_{L^2(\dR^d;\dC^d)}^2 - \alpha\big\| f|_\Sigma\big\|_{L^2(\Sigma)}^2,\quad \dom \fra_{\delta,\alpha} = H^1(\dR^d),
\]
and
\[
\begin{split}
&\fra_{\delta',\beta}[f] := \sum_{k=1}^n\big\|\nabla f_k\big\|^2_{L^2(\Omega_k;\dC^d)} -
 \sum_{k=1}^{n-1}\sum_{l=k+1}^n \beta^{-1}\big\|f_k|_{\Sigma_{kl}} - f_l|_{\Sigma_{kl}}\big\|_{L^2(\Sigma_{kl})}^2,\\
&\dom\fra_{\delta',\beta} = \bigoplus_{k=1}^n H^1(\Omega_k),
\end{split}
\]
where $f_k = f|_{\Omega_k}$ and $\Sigma_{kl}=\partial\Omega_k\cap\partial\Omega_l$, $k\not =l$.
It turns out that $\fra_{\delta,\alpha}$ and $\fra_{\delta',\beta}$ are densely defined, closed, symmetric
forms in the Hilbert space $L^2(\dR^d)$ which are semibounded from below, and hence $\fra_{\delta,\alpha}$ and $\fra_{\delta',\beta}$ 
induce self-adjoint operators 
$-\Delta_{\delta,\alpha}$ and $-\Delta_{\delta',\beta}$ in  $L^2(\dR^d)$. It will be shown in Theorem~\ref{thm:delta} that these
operators act as minus Laplacians and the functions in their domains satisfy appropriate $\delta$ and $\delta'$-boundary conditions on $\Sigma$.

Our main result, Theorem~\ref{thm:main}, is an inequality for the quadratic forms $\fra_{\delta,\alpha}$ and $\fra_{\delta',\beta}$, or equivalently, for the 
Schr\"odinger operators $-\Delta_{\delta,\alpha}$ and $-\Delta_{\delta',\beta}$ with  $\delta$-interaction of strength $\alpha$ and  $\delta'$-interaction 
of strength $\beta$, respectively. Namely, if $\alpha$, $\beta$ and the chromatic number $\chi$ of the partition satisfy 
\begin{equation}
\label{assumption}
0 < \beta \le \frac{4}{\alpha}\sin^2\big(\pi / \chi\big)
\end{equation}
then it will be shown that there exists an unitary operator $U$ in $L^2(\dR^d)$ such that
\begin{equation}
\label{inequality}
U^{-1}(-\Delta_{\delta',\beta})U\le -\Delta_{\delta,\alpha}
\end{equation}
holds. The operator $U$ can be constructed explicitly
as soon as the optimal colouring of the partition is provided. The value $4\sin^2(\pi / \chi)$ in \eqref{assumption} pops up
as the square of the edge length of the equilateral polygon with $\chi$ vertices, which is circumscribed in the unit circle on the complex plane. 
We also discuss the sharpness of Theorem~\ref{thm:main}  for some cases. First of all it is shown in Example \ref{exchi2} that the assumption 
\eqref{assumption}
is sharp if $\chi=2$. 
In Section~\ref{sec:chi3} we then discuss the case $\chi=3$.
It turns out that the weaker assumption $0 < \beta \le \tfrac{4}{\alpha}$
(corresponding to $\chi=2$ in \eqref{assumption}) is not 
sufficient for the existence of a unitary operator $U$ such that \eqref{inequality} holds for every partition with $\chi = 3$.
This fact will be shown explicitly by considering
a symmetric star-graph with three leads as the support of the $\delta$ and $\delta^\prime$-interaction.

The inequality \eqref{inequality} is particularly useful since it implies various relations of the spectra of $-\Delta_{\delta,\alpha}$ and $-\Delta_{\delta',\beta}$,
and it allows to transform known results for  Schr\"odinger operators
with $\delta$-interactions to Schr\"odinger operators with $\delta^\prime$-inter\-actions. 
We apply our main theorem and its consequences to Lipschitz partitions with compact boundary and so-called locally deformed
partitions, where also unbounded Lipschitz domains with unbounded boundaries appear. In these situations we are able to determine or to describe
the essential spectra of $-\Delta_{\delta,\alpha}$ and $-\Delta_{\delta',\beta}$, and we derive some consequences on the spectral properties of
$-\Delta_{\delta',\beta}$. In particular, it turns out that $-\Delta_{\delta',\beta}$
has a non-empty discrete spectrum if the same holds for $-\Delta_{\delta,\alpha}$, and hence we conclude results on the existence of 
deformation-induced bound states of $-\Delta_{\delta',\beta}$ from the corresponding results in \cite{EI01,EK03} for the $\delta$-case. 
We mention that various results on the spectral properties of Schr\"odinger operators with
$\delta$-interactions supported by locally deformed or weakly  straight lines and hyperplanes or under more general assumptions of asymptotic flatness exist in the mathematical literature,
see, e.g. \cite{CK11, EI01, EK03, EK05, LLP10}.

The structure of the paper is as follows. In Section~\ref{sec:prelim} some preliminary facts on the ordering of quadratic forms, Lipschitz partitions,
and  Sobolev spaces on Lipschitz domains are provided. The quadratic forms $\fra_{\delta,\alpha}$ and $\fra_{\delta',\beta}$, and the
corresponding Schr\"{o}dinger operators $-\Delta_{\delta,\alpha}$ and $-\Delta_{\delta',\beta}$ are introduced and studied in Section~\ref{sec:main}.
This section contains also the main result, Theorem~\ref{thm:main}, and some examples. The more computational aspects in 
the example of a symmetric star graph with three leads were outsourced in an appendix. The essential spectra and bound states 
of Schr\"odinger operators with $\delta$ and $\delta'$-interactions on Lipschitz partitions with compact boundary 
and locally deformed Lipschitz partitions
are studied in Section~\ref{sec4}.

\subsection*{Acknowledgements}

The authors gratefully acknowledge financial support by the Austrian 
Science Fund (FWF), project
P 25162-N26, Czech Science Foundation (GA\v{C}R), project P203/11/0701, 
and the Austria-Czech Republic cooperation grant  CZ01/2013.

\section{Preliminaries}\label{sec:prelim}

In this paper we use mainly standard facts from operator theory in Hilbert spaces and 
basic properties of Sobolev spaces on Lipschitz domains.
In this section we briefly recall and define some notions on semibounded sesquilinear forms, Lipschitz partitions and Sobolev spaces.

\subsection{Ordering of sesquilinear forms}\label{ssec:forms}

The self-adjoint operators in this paper are introduced with the help of closed, densely defined, semibounded, symmetric sesquilinear forms 
via the first representation theorem \cite[VI Theorem~2.1]{Kato}. 
For a comprehensive introduction into the theory of forms we refer the reader to \cite[Chapter VI]{Kato}, \cite[Chapter 10]{BS87},
and \cite[Chapter 4.6]{BEH08}. 


First we recall the ordering of forms and associated self-adjoint operators.

\begin{dfn}\label{dfn:forms}
Let $\fra_1$ and $\fra_2$ be closed, densely defined, symmetric sesquilinear forms in a Hilbert space $\cH$ and assume that 
$\fra_1$ and $\fra_2$ are bounded from below. Then we shall write $\fra_2\leq \fra_1$ if
\[
\dom\fra_1\subset\dom\fra_2 \quad\text{and}\quad \fra_2[f]\le\fra_1[f]\quad\text{for all}\quad f\in\dom\fra_1.
\]
If $H_1$ and $H_2$ denote the self-adjoint operators associated with $\fra_1$ and $\fra_2$ in $\cH$, respectively, then we write $H_2\leq H_1$
if and only if $\fra_2\leq \fra_1$.
\end{dfn}

We note that by \cite[VI Theorem 2.21]{Kato} two self-adjoint operators $H_1$ and $H_2$ which are 
semibounded from below by $\nu_1$ and $\nu_2$, respectively, 
satisfy $H_2\leq H_1$ if and only if for some, and hence for all, $\nu<\min\{\nu_1,\nu_2\}$ 
\begin{equation*}
(H_2 -\nu)^{-1} - (H_1 - \nu)^{-1} \ge 0.
\end{equation*}

The essential spectrum  of a self-adjoint operator $H$ is denoted by $\sess(H)$. If $\sess(H)=\varnothing$ we
set $\min\sess(H)=+\infty$ in the following definition.

\begin{dfn}\label{dfn:spec1}
Let $H$ be a self-adjoint operator in an infinite dimensional Hilbert space and assume that $H$ is bounded from below.
We set 
\[
N(H) := \sharp \big\{\lambda\in (-\infty,\min\sess(H)) :\lambda \in\sigma_{\rm p}(H)\big\}\in\dN_0\cup\{\infty\}
\]
and
denote by $\{\lambda_k(H)\}_{k=1}^\infty$ the sequence of eigenvalues of $H$ lying below $\min\sess(H)$, 
enumerated in non-decreasing order and repeated with multiplicities. In the case $N(H) <\infty$
this sequence is extended by setting $\lambda_{N(H)+k}(H)=\min\sess(H)$, $k\in\dN$.
\end{dfn}

The statements in the next theorem are consequences of the min-max principle, see, 
e.g. \cite[10.2 Theorem 4]{BS87} and \cite[Theorem XIII.2]{RS78}.

\begin{thm}\label{thm:variation}
Let $\fra_1$ and $\fra_2$ be closed, densely defined, symmetric sesquilinear forms in $\cH$
which are bounded from below and let $H_1$ and $H_2$ be the corresponding self-adjoint operators.
Assume that $\fra_2 \leq\fra_1$, or equivalenty that $H_2\leq H_1$. Let $\{\lambda_k(H_i)\}_{k=1}^\infty$ 
and $N(H_i)$, $i=1,2$, be as in Definition~\ref{dfn:spec1}.
Then the following statements hold:
\begin{itemize}\setlength{\itemsep}{1.2ex}
\item [\rm (i)] $\lambda_k(H_2)\le \lambda_k(H_1)$ for all $k\in\dN$;
\item[\rm (ii)] $\min\sess(H_2) \le \min\sess(H_1)$;
\item [\rm (iii)] If $\min\sess(H_1) = \min\sess(H_2)$ then $N(H_1) \le N(H_2)$.
\end{itemize}
\end{thm}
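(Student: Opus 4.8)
The plan is to derive all three statements from the variational (min-max) characterisation of the quantities $\lambda_k(H_i)$ introduced in Definition~\ref{dfn:spec1}. For a closed, densely defined, symmetric, lower semibounded form $\fra$ with associated self-adjoint operator $H$, this characterisation reads
\[
\lambda_k(H) = \inf_{\substack{V\subset\dom\fra\\ \dim V = k}}\ \sup_{\substack{f\in V\\ f\neq 0}}\frac{\fra[f]}{\|f\|^2},\qquad k\in\dN,
\]
and the purpose of the bookkeeping convention in Definition~\ref{dfn:spec1} is precisely that the right-hand side equals the $k$-th eigenvalue below $\min\sess(H)$ as long as such an eigenvalue exists, and equals $\min\sess(H)$ otherwise. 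I would first recall this fact from \cite[10.2 Theorem 4]{BS87} or \cite[Theorem XIII.2]{RS78}, since everything else reduces to an elementary manipulation of the formula.

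For (i), the two ingredients in the hypothesis $\fra_2\le\fra_1$ play complementary roles. First, the inclusion $\dom\fra_1\subset\dom\fra_2$ means that every $k$-dimensional subspace of $\dom\fra_1$ is also an admissible test space for $\lambda_k(H_2)$, so the infimum defining $\lambda_k(H_2)$ is taken over a larger family and can only be smaller. Second, the pointwise inequality $\fra_2[f]\le\fra_1[f]$ on $\dom\fra_1$ shows that on any fixed $V\subset\dom\fra_1$ the supremum of the Rayleigh quotient for $\fra_2$ is dominated by that for $\fra_1$. Chaining these two observations gives
\[
\lambda_k(H_2)\ \le\ \inf_{\substack{V\subset\dom\fra_1\\ \dim V = k}}\sup_{\substack{f\in V\\ f\neq 0}}\frac{\fra_2[f]}{\|f\|^2}\ \le\ \inf_{\substack{V\subset\dom\fra_1\\ \dim V = k}}\sup_{\substack{f\in V\\ f\neq 0}}\frac{\fra_1[f]}{\|f\|^2}\ =\ \lambda_k(H_1),
\]
which is exactly (i).

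Statement (ii) would then follow by passing to the limit $k\to\infty$: the min-max values form a nondecreasing sequence whose limit is the bottom of the essential spectrum, i.e.\ $\min\sess(H_i)=\lim_{k\to\infty}\lambda_k(H_i)$ (in all cases, including $\sess=\varnothing$, where the limit is $+\infty$), and the monotonicity from (i) survives the limit. For (iii) I would argue by contradiction under the hypothesis $\min\sess(H_1)=\min\sess(H_2)=:m$. If $N(H_1)>N(H_2)$ were to hold with $N(H_2)=:N<\infty$, then the convention forces $\lambda_{N+1}(H_2)=m$, whereas $N(H_1)\ge N+1$ makes $\lambda_{N+1}(H_1)$ a genuine eigenvalue strictly below $m$; this contradicts $\lambda_{N+1}(H_2)\le\lambda_{N+1}(H_1)$ from (i). The case $N(H_2)=\infty$ is trivial.

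The only genuinely delicate point is the bookkeeping built into Definition~\ref{dfn:spec1}: one must verify that the min-max formula produces exactly the padded sequence (the eigenvalues below $\min\sess$, followed by the constant value $\min\sess$), rather than a sequence that could slip into the essential spectrum. Once this identification with the stated convention is secured, (i)--(iii) are immediate, so I expect the main effort to lie in quoting the appropriate version of the min-max theorem rather than in any new computation.
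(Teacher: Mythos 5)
Your proposal is correct and is exactly the argument the paper has in mind: the paper offers no written proof, deriving all three statements directly from the min-max principle as cited in \cite[10.2 Theorem 4]{BS87} and \cite[Theorem XIII.2]{RS78}, and your write-up (domain inclusion enlarging the family of test spaces, pointwise form inequality on each fixed subspace, the limit $k\to\infty$ for the bottom of the essential spectrum, and the contradiction via the padding convention for (iii)) is precisely the standard argument those references supply. No gaps; the one point you flag as delicate, namely that the min-max values coincide with the padded sequence of Definition~\ref{dfn:spec1}, is indeed the content of the cited theorems.
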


\subsection{Lipschitz partitions of Euclidean spaces}
\label{ssec:partition}

In this short subsection we introduce the notion of finite Lipschitz partitions and discuss a
combinatorial property of these partitions. For the definition and basic properties of
Lipschitz domains we refer the reader to \cite[VI.3]{Stein}.

\begin{dfn}
\label{def:partition}
A finite family of Lipschitz domains $\cP = \{\Omega_k\}_{k=1}^n$ is called a Lipschitz partition of $\dR^d$, $d\geq 2$, 
if $$\dR^d = \bigcup_{k=1}^n\overline\Omega_k\qquad\text{and}\qquad \Omega_k\cap\Omega_l = \varnothing, \qquad k,l=1,2,\dots,n,\,\,\,k\ne l.$$ 
The union $\cup_{k=1}^n\partial\Omega_k =: \Sigma$ is the boundary of the Lipschitz partition $\cP$.
For $k\not= l$ we set  $\Sigma_{kl}:=\partial\Omega_k\cap\partial\Omega_l$ and we say 
that $\Omega_k$ and $\Omega_l$,  $k\ne l$, 
are neighbouring domains if $\sigma_{k}(\Sigma_{kl})>0$, where $\sigma_k$ denotes the Lebesgue measure on $\partial\Omega_k$.
\end{dfn}

The chromatic number of a Lipschitz partition is defined with the help of colouring mappings.

\begin{dfn}\label{dfn:chro}
Let $\cP = \{\Omega_k\}_{k=1}^n$ be a Lipschitz partition of $\dR^d$, $d\geq 2$, with $\Sigma_{kl}=\partial\Omega_k\cap\partial\Omega_l$,
$k\not= l$.
Then a mapping $\varphi \colon \{1,2,\dots,n\}\rightarrow \{0,1,\dots, m-1\}$ is called an $m$-colouring for $\cP$ if 
\[
\sigma_k(\Sigma_{kl})>0 \quad \Longrightarrow \quad \varphi(k) \ne \varphi(l)
\] 
for all $k,l=1,2,\dots, n$, $k\ne l$.
The chromatic number $\chi$ of the Lipschitz partition $\cP$ is 
defined as
\[
\chi := \min\bigr\{ m\in\dN\colon \exists\, \text{$m$-colouring mapping for}~ \cP\bigr\}.
\]
\end{dfn}

Thus the chromatic number $\chi$ of a Lipschitz partition $\cP = \{\Omega_k\}_{k=1}^n$ of $\dR^d$ 
is the minimal
number of colours, which is  sufficient to colour all domains $\Omega_k$ such that any two neighbouring domains have different colours;
recall that $\Omega_k$ and $\Omega_l$ are regarded as neighbouring domains only if the Lebesgue measure of $\Sigma_{kl}=\partial\Omega_k\cap
\partial\Omega_l$ is positive. 
As a famous example we mention the four colour theorem which states that the chromatic
number of any Lipschitz partition $\cP$ of $\dR^2$ is $\chi\le 4$.

\subsection{Sobolev spaces on arbitrary Lipschitz domains}
\label{ssec:Sobolev1}
For a Lipschitz domain $\Omega$ we denote the standard $L^2$-based Sobolev spaces on $\Omega$ and $\partial\Omega$
by $H^s(\Omega)$, $s\in\dR$, and $H^t(\partial\Omega)$, $t\in [-1,1]$, respectively.
For the definition and general properties of Sobolev
spaces on Lipschitz domains and their boundaries we refer the reader to \cite{McLean} and \cite{Stein}.
Recall that for a Lipschitz domain $\Omega\subset\dR^d$ there exists an {\it extension operator}
\[
E\colon L^2(\Omega)\rightarrow L^2(\dR^d)
\]
satisfying the following conditions: 
\begin{itemize}\setlength{\itemsep}{1.2ex}
\item [\rm (i)] $(E f)\upharpoonright \Omega = f$ for all $f\in L^2(\Omega)$;
\item [\rm (ii)] $E (H^k(\Omega))\subset H^k(\dR^d)$ for all $k\in\dN_0$;
\item [\rm (iii)] $E:H^k(\Omega)\rightarrow  H^k(\dR^d)$ is continuous for all $k\in\dN_0$.  
\end{itemize}

The useful estimate on the trace in the next lemma is essentially a consequence of the continuity of the trace map
and the above mentioned properties of the extension operator.
For the convenience of the reader we provide a short proof.

\begin{lem}\label{lem:trace}
Let $\Omega\subset\dR^d$ be a bounded or unbounded Lipschitz domain.  Then for any $\varepsilon >0$ 
there exists a constant $C(\varepsilon) > 0$ such that
\[
\|f|_{\partial\Omega}\|_{L^2(\partial\Omega)}^2 \le \varepsilon \|\nabla f\|_{L^2(\Omega;\dC^d)}^2 + C(\varepsilon)\|f\|^2_{L^2(\Omega)}
\]
holds for all $f\in H^1(\Omega)$.
\end{lem}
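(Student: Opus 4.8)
The plan is to factor the trace estimate through a fractional Sobolev space $H^s(\Omega)$ with $s\in(\tfrac12,1)$ and to extract the small factor $\varepsilon$ from an interpolation inequality rather than from any scaling of the (possibly unbounded) domain. Concretely, I would fix once and for all some $s$ with $\tfrac12<s<1$ and establish two estimates: a continuity estimate for the trace $H^s(\Omega)\to L^2(\partial\Omega)$, and a multiplicative interpolation inequality controlling $\|f\|_{H^s(\Omega)}$ by a product of $\|f\|_{H^1(\Omega)}$ and $\|f\|_{L^2(\Omega)}$ with the exponents $s$ and $1-s$. A single application of Young's inequality then converts this product, which is homogeneous of degree one in each factor, into a sum in which the top-order term carries an arbitrarily small prefactor.

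For the first estimate I would invoke the continuity of the trace map on the Lipschitz domain $\Omega$, which maps $H^s(\Omega)$ boundedly into $H^{s-\frac12}(\partial\Omega)$; since $s-\tfrac12>0$ the embedding $H^{s-\frac12}(\partial\Omega)\hookrightarrow L^2(\partial\Omega)$ is bounded, and hence $\|f|_{\partial\Omega}\|_{L^2(\partial\Omega)}\le C_1\|f\|_{H^s(\Omega)}$ for all $f\in H^1(\Omega)\subset H^s(\Omega)$. For the second estimate the extension operator $E$ of Section~\ref{ssec:Sobolev1} is the key tool: writing $F:=Ef\in H^1(\dR^d)$, which satisfies $F\upharpoonright\Omega=f$ together with $\|F\|_{L^2(\dR^d)}\le C\|f\|_{L^2(\Omega)}$ and $\|F\|_{H^1(\dR^d)}\le C\|f\|_{H^1(\Omega)}$ by properties (i)--(iii), I would estimate $\|f\|_{H^s(\Omega)}\le\|F\|_{H^s(\dR^d)}$ and then bound the right-hand side on the whole space by Plancherel and H\"older's inequality with exponents $1/s$ and $1/(1-s)$:
\[
\|F\|_{H^s(\dR^d)}^2=\int_{\dR^d}(1+|\xi|^2)^{s}|\wh F(\xi)|^2\,\drm\xi \le\|F\|_{H^1(\dR^d)}^{2s}\,\|F\|_{L^2(\dR^d)}^{2(1-s)}.
\]
Combining the two steps yields $\|f|_{\partial\Omega}\|_{L^2(\partial\Omega)}^2\le C_2\,\|f\|_{H^1(\Omega)}^{2s}\,\|f\|_{L^2(\Omega)}^{2(1-s)}$ with a constant $C_2$ depending only on $\Omega$ and $s$.

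Finally I would apply Young's inequality in the weighted form $a^{s}b^{1-s}\le \delta\, a+C_\delta\, b$, valid for all $a,b\ge0$ and any $\delta>0$, to the choice $a=\|f\|_{H^1(\Omega)}^2$ and $b=\|f\|_{L^2(\Omega)}^2$. Using $\|f\|_{H^1(\Omega)}^2=\|\nabla f\|_{L^2(\Omega;\dC^d)}^2+\|f\|_{L^2(\Omega)}^2$ and choosing $\delta=\varepsilon/C_2$, the top-order term becomes exactly $\varepsilon\|\nabla f\|_{L^2(\Omega;\dC^d)}^2$, while all remaining contributions are multiples of $\|f\|_{L^2(\Omega)}^2$ and may be collected into the constant $C(\varepsilon)$. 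I expect the only genuinely delicate point to be the justification of the first two estimates for \emph{unbounded} Lipschitz domains with possibly unbounded boundary: the trace continuity into $H^{s-1/2}(\partial\Omega)$ and the existence of a bounded extension operator must be invoked in a form valid in this generality (as provided by the references in Section~\ref{ssec:Sobolev1}), after which the Fourier-analytic interpolation on $\dR^d$ and the concluding Young step are completely routine and insensitive to the boundedness of $\Omega$.
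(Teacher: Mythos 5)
Your proposal is correct and follows essentially the same route as the paper: both factor the trace estimate through $H^s(\Omega)$ for a fixed $s\in(\tfrac12,1)$, pass to $\dR^d$ via the extension operator, and interpolate between $H^1$ and $L^2$ to extract the small factor $\varepsilon$. The only difference is cosmetic: where the paper quotes the additive inequality $\|Ef\|_{H^s(\dR^d)}\le\varepsilon\|Ef\|_{H^1(\dR^d)}+C_1(\varepsilon)\|Ef\|_{L^2(\dR^d)}$ from the literature, you derive the multiplicative form $\|F\|_{H^s(\dR^d)}^2\le\|F\|_{H^1(\dR^d)}^{2s}\|F\|_{L^2(\dR^d)}^{2(1-s)}$ by Plancherel and H\"older and then convert it with Young's inequality, which makes that step self-contained.
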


\begin{proof}
Let $f\in H^1(\Omega)$, fix some $s\in (\frac12,1)$ and let $Ef\in H^1(\dR^d)$ be the extension of $f$.
The continuity of the trace \cite{M87,Necas} and the properties of the extension operator
imply that there exists $c>0$ such that
\begin{equation*}
\|f|_{\partial\Omega}\|_{L^2(\partial\Omega)}\le c\|f\|_{H^s(\Omega)}\le c\|Ef\|_{H^s(\dR^d)}.
\end{equation*}
Hence for $\varepsilon >0$ there exists a constant $C_1(\varepsilon) >0$ such that
\[
\|f|_{\partial\Omega}\|_{L^2(\partial\Omega)}\le c\|Ef\|_{H^s(\dR^d)}\leq
\varepsilon\|Ef\|_{H^1(\dR^d)} + C_1(\varepsilon)\|Ef\|_{L^2(\dR^d)},
\]
see, e.g. \cite[Theorem 3.30]{HT} or \cite[Satz 11.18 (e)]{W00}. As $E$ is continuous (see property (iii) for $k=0$ and $k=1$))
we conclude that for $\varepsilon >0$ there exists $C_2(\varepsilon) >0$ such that
\[
\|f|_{\partial\Omega}\|_{L^2(\partial\Omega)}\le \varepsilon\|f\|_{H^1(\Omega)} + C_2(\varepsilon)\|f\|_{L^2(\Omega)}.
\]
Thus for $\varepsilon >0$ there exists a constant $C_3(\varepsilon) >0$ such that
\[
\|f|_{\partial\Omega}\|_{L^2(\partial\Omega)}^2\le \varepsilon\|f\|_{H^1(\Omega)}^2 + C_3(\varepsilon)\|f\|_{L^2(\Omega)}^2
\]
and hence the assertion follows from $\|f\|_{H^1(\Omega)}^2 = \|\nabla f\|^2_{L^2(\Omega;\dC^d)} + \|f\|_{L^2(\Omega)}^2$.
\end{proof}

For our purposes it is convenient to define the Laplacian and the Neumann trace in a weak sense in $L^2$.

\begin{dfn}\label{dfn:LaplacianNeumann}
Let $\Omega$ be a Lipschitz domain and let $u \in H^1(\Omega)$. 
\begin{itemize}
 \item [{\rm (i)}] If there exists $f\in L^2(\Omega)$ such that
\[
(\nabla u, \nabla v)_{L^2(\Omega;\dC^d)} = (f, v)_{L^2(\Omega)}\quad \text{for all}~ v\in H^1_0(\Omega)
\]
then we define $-\Delta u := f$ and say that $\Delta u \in L^2(\Omega)$.
 \item [{\rm (ii)}] If $\Delta u \in L^2(\Omega)$ and there exists $b \in L^2(\partial\Omega)$ such that
\[
(\nabla u,\nabla v)_{L^2(\Omega;\dC^d)} - (-\Delta u,v)_{L^2(\Omega)} = (b, v|_{\partial\Omega})_{L^2(\partial\Omega)}
\quad \text{for all}~v \in H^1(\Omega)
\]
then we define $\partial_\nu u|_{\partial\Omega} := b$ and say that $\partial_\nu u|_{\partial\Omega} \in L^2(\partial\Omega)$.
\end{itemize}
\end{dfn}

We note that $\Delta u$ and $\partial_\nu u|_{\partial\Omega} $ in the above definition (if they exist) are unique since 
$H^1_0(\Omega)$ is dense in $L^2(\Omega)$ and
the space $\{v|_{\partial\Omega}\colon v\in H^1(\Omega)\}$ is dense in $L^2(\partial\Omega)$, respectively; 
cf.  \cite[Theorem 3.37]{McLean}. 

\begin{dfn}\label{dfn:deltatrace}
Let $\cP = \{\Omega_k\}_{k=1}^n$ be a Lipschitz partition of $\dR^d$, $d\geq 2$, with boundary $\Sigma$. 
Let $u \in H^1(\dR^d)$, denote the restrictions
of $u$ onto $\Omega_k$ by $u_k$
and assume that 
$\Delta u_k \in L^2(\Omega_k)$ for all $k=1,2,\dots,n$. If there exists $b \in L^2(\Sigma)$ such that
\[
\big(\nabla u,\nabla v\big)_{L^2(\dR^d;\dC^d)} - \big(\oplus_{k=1}^n(-\Delta u_k),v\big)_{L^2(\dR^d)} = 
(b, v|_{\Sigma})_{L^2(\Sigma)}\!\quad \text{for all}~v \in H^1(\dR^d)
\]
then we define $\partial_\cP u|_\Sigma := b$ and say that $\partial_\cP u|_\Sigma  \in L^2(\Sigma)$. 
\end{dfn}

Let $\cP = \{\Omega_k\}_{k=1}^n$ be a Lipschitz partition with boundary $\Sigma$ and let $u\in H^1(\dR^d)$. 
As $\{v|_{\Sigma}\colon v\in H^1(\dR^d)\}$ is dense in $L^2(\Sigma)$ it follows that
$\partial_\cP u|_\Sigma$ (if it exists) is unique. 

\begin{remark}\label{remdeltalocal}
Let $\cP = \{\Omega_k\}_{k=1}^n$ be a Lipschitz partition of $\dR^d$ and assume that $\partial_\cP u|_\Sigma  \in L^2(\Sigma)$ exists for some 
$u\in H^1(\dR^d)$ in the sense of Definition~\ref{dfn:deltatrace}. Let $\Omega_k$ and $\Omega_l$ be neighbouring domains and assume
that the Neumann traces $\partial_{\nu_k}u_k\vert_{\partial\Omega_k}\in L^2(\partial\Omega_k)$ and 
$\partial_{\nu_l}u_l\vert_{\partial\Omega_l}\in L^2(\partial\Omega_l)$ exist in the sense of Definition~\ref{dfn:LaplacianNeumann}~(ii).
Let $\Gamma$ be a bounded open subset of $\Sigma_{kl}$ which is part of a Lipschitz dissection in the sense of \cite[page 99]{McLean}
such that $\overline\Gamma\cap\Sigma_{km}=\varnothing$ for $m=1,\dots,n$ with $m\not=l,k$. Then it follows that
\begin{equation}\label{deltalocal}
 \partial_\cP u\vert_\Gamma=\partial_{\nu_k}u_k\vert_\Gamma+\partial_{\nu_l}u_l\vert_\Gamma
\end{equation}
holds. In particular, if $\cP = \{\Omega_1,\Omega_2\}$ with 
$\Omega_2=\dR^d\setminus\overline\Omega_1$, boundary $\Sigma=\partial\Omega$ and the Neumann
traces exist then
\begin{equation*}
 \partial_\cP u|_{\Sigma}=\partial_{\nu_1}u_1\vert_{\Sigma}+\partial_{\nu_2}
 u_2\vert_{\Sigma}.
\end{equation*}
\end{remark}

\section{Schr\"odinger operators with $\delta$ and $\delta'$-interactions associated with Lipschitz partitions}
\label{sec:main}

In this section we define and study self-adjoint Schr\"odinger operators with $\delta$ and $\delta'$-interac\-tions 
supported on the boundary $\Sigma$ of a Lipschitz partition $\cP = \{\Omega_k\}_{k=1}^n$ of $\dR^d$, $d \ge 2$.
As the main result we prove an operator inequality 
between the $\delta$ and $\delta'$-operator, which implies a certain ordering of their spectra.  
The key assumption for this inequality is expressed in terms of the chromatic number of the Lipschitz partition.

\subsection{Free and Neumann Laplacians}

Let in the following $\cP = \{\Omega_k\}_{k=1}^n$ be a Lipschitz partition of $\dR^d$ 
with the boundary $\Sigma$. The functions $f\in L^2(\dR^d)$ will be decomposed in the form
\[
f = \oplus_{k=1}^n f_k,\qquad f_k := f\vert_{\Omega_k}\in L^2(\Omega_k),\qquad\,k =1,2,\dots,n.
\]

The {\it free Laplacian} $-\Delta_{\rm free}$ and the {\it Neumann Laplacian} $-\Delta_{\rm N}$ with Neumann boundary conditions on $\Sigma$ 
are defined as the self-adjoint operators in $L^2(\dR^d)$ associated with the sesquilinear forms 
\begin{equation}
\label{freeNform}
\begin{split}
\fra_{\rm free}[f,g] := \big(\nabla f, \nabla g\big)_{L^2(\dR^d;\dC^d)},&\qquad \dom \fra_{\rm free} := H^1(\dR^d),\\
\fra_{\rm N}[f,g]  := \sum_{k=1}^n\big(\nabla f_k, \nabla g_k\big)_{L^2(\Omega_k;\dC^d)},&\qquad \dom \fra_{\rm N} := 
\bigoplus_{k=1}^n H^1(\Omega_k),
\end{split}
\end{equation}
which  are symmetric, closed and semibounded from below, see, e.g. \cite[\S VII.1.1-2]{EE}. Note that 
$\dom(-\Delta_{\rm free})=H^2(\dR^d)$ but the functions in $\dom(-\Delta_{\rm N})$ have only local $H^2$-regularity, that is,
$\dom(-\Delta_{\rm N})\subset H^2_{\rm loc}(\dR^d\setminus\Sigma)$.

\subsection{Definition of Schr\"{o}dinger operators with $\delta$ and $\delta^\prime$-interactions via sesquilinear forms}
\label{ssec:delta}

In this subsection we define Schr\"odinger operators with $\delta$ and $\delta^\prime$-interactions supported
on possibly non-compact boundaries of Lipschitz partitions with the help of corresponding
sesquilinear forms; cf. \cite{BEKS94} for the case of $\delta$-interactions and \cite{BLL13} for the case 
of $\delta'$-interactions on smooth hypersurfaces.
The domains of these operators are characterized and, in particular, 
the boundary conditions are given explicitly. For the special case of smooth domains with compact boundaries
the present description reduces to the one in  \cite{BLL13},  where 
a different approach via extension theory of symmetric operators and boundary triple techniques from \cite{BL07} was used. 
We also refer to \cite{AKMN13, AGS87, S88} for an approach via separation of variables in the case of 
spherically symmetric supports of interactions.

Let $\cP = \{\Omega_k\}_{k=1}^n$ be a Lipschitz partition of $\dR^d$ 
with the boundary $\Sigma$, let $\alpha,\beta:\Sigma\rightarrow\dR$ be such that $\alpha,\beta^{-1}\in L^\infty(\Sigma)$ 
and define the symmetric sesquilinear forms $\fra_{\delta,\alpha}$ and $\fra_{\delta',\beta}$
by 
\begin{equation}\label{deltaform}
\fra_{\delta,\alpha}[f, g] := \big(\nabla f, \nabla g\big)_{L^2(\dR^d;\dC^d)} - 
\big(\alpha f|_\Sigma, g|_\Sigma\big)_{L^2(\Sigma)},\quad \dom\fra_{\delta,\alpha} = H^1(\dR^d),
\end{equation}
and
\begin{equation}\label{delta'form}
\begin{split}
\fra_{\delta',\beta}[f,g] &:= \sum_{k=1}^n \big(\nabla f_k,\nabla g_k\big)_{L^2(\Omega_k;\dC^d)}\\  
&\quad-
\sum_{k=1}^{n-1}\sum_{l =k+1}^n 
\big(\beta^{-1}_{kl}(f_k|_{\Sigma_{kl}} - f_l|_{\Sigma_{kl}}), g_k|_{\Sigma_{kl}} - g_l|_{\Sigma_{kl}}\big)_{L^2(\Sigma_{kl})},\\
\dom \fra_{\delta',\beta} &= \bigoplus_{k=1}^n H^1(\Omega_k),
\end{split}
\end{equation}
respectively; here $\Sigma_{kl}=\partial\Omega_k\cap\partial\Omega_l$ for $k,l=1,2,\dots,n$, $k\ne l$, and $\beta_{kl}$ 
denotes the restrictions of $\beta$ to $\Sigma_{kl}$. The traces $f_k|_{\Sigma_{kl}}$ are
understood as restrictions of the trace $f_k|_{\partial\Omega_k}$ onto $\Sigma_{kl}$.
Note that $\sigma_k(\Sigma_{kl}) = \sigma_l(\Sigma_{kl}) = 0$ if the domains $\Omega_k$ and $\Omega_l$ are not neighbouring and that 
$$L^2(\Sigma)=\bigoplus_{k=1}^{n-1}\bigoplus_{l=k+1}^{n} L^2(\Sigma_{kl})\quad\text{and}\quad 
L^2(\partial\Omega_k) = \bigoplus_{l=1,\, l\ne k}^n L^2(\Sigma_{kl}).$$

\begin{prop}
\label{prop:deltaform}
The symmetric sesquilinear forms $\fra_{\delta,\alpha}$ and $\fra_{\delta',\beta}$ are
closed and semibounded from below. 
\end{prop}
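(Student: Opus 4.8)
The plan is to regard both forms as perturbations of the nonnegative closed forms $\fra_{\rm free}$ and $\fra_{\rm N}$ from \eqref{freeNform}, and to show that the respective interaction terms are infinitesimally form-bounded with respect to the gradient parts. Write $\fra_{\delta,\alpha}=\fra_{\rm free}+\frb_\delta$ with $\frb_\delta[f]:=-\big(\alpha f|_\Sigma,f|_\Sigma\big)_{L^2(\Sigma)}$ on $H^1(\dR^d)$, and $\fra_{\delta',\beta}=\fra_{\rm N}+\frb_{\delta'}$ with $\frb_{\delta'}$ the double sum of interface terms on $\bigoplus_{k=1}^n H^1(\Omega_k)$. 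Since $\fra_{\rm free}$ and $\fra_{\rm N}$ are already closed and bounded from below by $0$, it suffices by the stability of form-closedness under relatively bounded perturbations (a KLMN-type result, see, e.g.\ \cite[VI \S1.6]{Kato} or \cite[Chapter 10]{BS87}) to produce, for every $\varepsilon>0$, a constant $C(\varepsilon)>0$ with $|\frb_\delta[f]|\le\varepsilon\,\fra_{\rm free}[f]+C(\varepsilon)\|f\|^2_{L^2(\dR^d)}$ on $H^1(\dR^d)$, and analogously $|\frb_{\delta'}[f]|\le\varepsilon\,\fra_{\rm N}[f]+C(\varepsilon)\|f\|^2_{L^2(\dR^d)}$ on $\bigoplus_k H^1(\Omega_k)$; choosing $\varepsilon$ small then forces the relative bound strictly below $1$ and yields closedness and semiboundedness on the same domains.

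For $\fra_{\delta,\alpha}$ I would first use $\alpha\in L^\infty(\Sigma)$ to get $|\frb_\delta[f]|\le\|\alpha\|_{L^\infty(\Sigma)}\,\|f|_\Sigma\|^2_{L^2(\Sigma)}$. Because $f\in H^1(\dR^d)$ the one-sided traces coincide on each interface $\Sigma_{kl}$, so writing $L^2(\Sigma)=\bigoplus_{k<l}L^2(\Sigma_{kl})$ and $L^2(\partial\Omega_k)=\bigoplus_{l\ne k}L^2(\Sigma_{kl})$ gives $\|f|_\Sigma\|^2_{L^2(\Sigma)}=\tfrac12\sum_{k=1}^n\|f_k|_{\partial\Omega_k}\|^2_{L^2(\partial\Omega_k)}$, the factor $\tfrac12$ accounting for each interface being shared by two domains. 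Applying Lemma~\ref{lem:trace} on each $\Omega_k$, summing, and using $\sum_k\|\nabla f_k\|^2=\|\nabla f\|^2_{L^2(\dR^d;\dC^d)}=\fra_{\rm free}[f]$ together with $\sum_k\|f_k\|^2=\|f\|^2$, one obtains $\|f|_\Sigma\|^2\le\tfrac{\varepsilon}{2}\fra_{\rm free}[f]+\tfrac{C(\varepsilon)}{2}\|f\|^2$, i.e.\ the required infinitesimal bound; the relative bound is then $<1$ as soon as $\varepsilon<2/\|\alpha\|_{L^\infty(\Sigma)}$.

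For $\fra_{\delta',\beta}$ I would argue the same way on $\bigoplus_k H^1(\Omega_k)$, where the one-sided traces need not agree. Using $\beta^{-1}\in L^\infty(\Sigma)$ and the elementary inequality $\|a-b\|^2\le 2\|a\|^2+2\|b\|^2$ yields $|\frb_{\delta'}[f]|\le\|\beta^{-1}\|_{L^\infty(\Sigma)}\sum_{k<l}\|f_k|_{\Sigma_{kl}}-f_l|_{\Sigma_{kl}}\|^2\le 2\|\beta^{-1}\|_{L^\infty(\Sigma)}\sum_{k=1}^n\|f_k|_{\partial\Omega_k}\|^2$, where reassembling the interface norms again produces the full boundaries $\partial\Omega_k$. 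Lemma~\ref{lem:trace} applied on each $\Omega_k$ and summed then gives $|\frb_{\delta'}[f]|\le 2\|\beta^{-1}\|_{L^\infty(\Sigma)}\big(\varepsilon\,\fra_{\rm N}[f]+C(\varepsilon)\|f\|^2\big)$, which is infinitesimal relative to $\fra_{\rm N}$, with relative bound $<1$ once $\varepsilon<1/\big(2\|\beta^{-1}\|_{L^\infty(\Sigma)}\big)$; invoking the KLMN theorem in both cases completes the argument. Since the only analytically nontrivial estimate is the trace inequality, which is already established in Lemma~\ref{lem:trace}, I expect no real obstacle here: the remaining work is the combinatorial bookkeeping of correctly reassembling the interface integrals $\Sigma_{kl}$ into the boundaries $\partial\Omega_k$ (with the factor $\tfrac12$ in the $\delta$-case from double-counted interfaces and the factor $2$ in the $\delta'$-case from the trace difference), after which relative form-boundedness is immediate.
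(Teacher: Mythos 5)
Your proposal is correct and follows essentially the same route as the paper's own proof: the same decompositions $\fra_{\delta,\alpha}=\fra_{\rm free}+\frb_\delta$ and $\fra_{\delta',\beta}=\fra_{\rm N}+\frb_{\delta'}$, the same reassembly of the interface norms into $\tfrac12\sum_k\|f_k|_{\partial\Omega_k}\|^2$ (respectively $2\sum_k\|f_k|_{\partial\Omega_k}\|^2$ via $\|a-b\|^2\le 2\|a\|^2+2\|b\|^2$), the same application of Lemma~\ref{lem:trace} on each $\Omega_k$, and the same closedness-under-relatively-bounded-perturbation theorem (\cite[VI Theorem 1.33]{Kato}). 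Even your thresholds $\varepsilon<2/\|\alpha\|_{L^\infty(\Sigma)}$ and $\varepsilon<1/\bigl(2\|\beta^{-1}\|_{L^\infty(\Sigma)}\bigr)$ agree with the constants implicit in the paper's estimates.
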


\begin{proof}
We verify the assertion for $\fra_{\delta,\alpha}$ first. For this
note that $\fra_{\delta,\alpha}=\fra_{\rm free} + \fra^\prime$, where $\fra_{\rm free}$ is as in \eqref{freeNform} and 
\[
\fra'[f,g] := -(\alpha f,g)_{L^2(\Sigma)},\qquad \dom\fra' = H^1(\dR^d).
\] 
We show that $\fra'$ is bounded with 
respect to $\fra_{\rm free}$ with form bound $< 1$. In fact, for $f = \oplus_{k=1}^n f_k\in H^1(\dR^d)$ we have
\begin{equation}
\label{frq'}
\big|\fra'[f]\big| \le\|\alpha\|_\infty\big\|f|_\Sigma \big\|_{L^2(\Sigma)}^2 = \|\alpha\|_\infty\,\frac{1}{2}\sum_{k=1}^n \big\| f_k|_{\partial \Omega_k}\big\|^2_{L^2(\partial\Omega_k)}.
\end{equation}
According to Lemma~\ref{lem:trace} for any $\varepsilon > 0$ and  $k=1,2,\dots,n$,
there exists $C_k(\varepsilon) > 0$ such that 
\begin{equation}\label{inequality_delta1}
\big\| f_k|_{\partial\Omega_k}\big\|_{L^2(\partial\Omega_k)}^2 \le \varepsilon \|\nabla f_k\|^2_{L^2(\Omega_k;\dC^d)} 
+ C_k(\varepsilon)\|f_k\|_{L^2(\Omega_k)}^2.
\end{equation}
Therefore \eqref{frq'} yields
\begin{equation*}
 \begin{split}
  \big|\fra'[f]\big|& \le \Vert\alpha\Vert_\infty\,\frac{\varepsilon}{2}\sum_{k=1}^n\|\nabla f_k\|^2_{L^2(\Omega_k;\dC^d)}  + 
  \Vert\alpha\Vert_\infty\,\frac{1}{2}\sum_{k=1}^n C_k(\varepsilon)\|f_k\|_{L^2(\Omega_k)}^2\\
  & \le \|\alpha\|_\infty\,\frac{\varepsilon}{2}\,\fra_{\rm free}[f] + 
\|\alpha\|_\infty\,\frac{\max_k C_k(\varepsilon)}{2}\,\|f\|_{L^2(\dR^d)}^2
 \end{split}
\end{equation*}
for all $f\in \dom \fra' = \dom \fra_{\rm free}$. 
Thus, for sufficiently small $\varepsilon$ the form $\fra'$ is form bounded with respect to the form $\fra_{\rm free}$ with 
form bound $< 1$. Then by \cite[VI~Theorem~1.33]{Kato} the form 
$\fra_{\delta,\alpha}$ is closed and semibounded from below.  

Next we prove the statement for $\fra_{\delta',\beta}$. As above we have
$\fra_{\delta',\beta}=\fra_{\rm N} +\fra''$, where $\fra_{\rm N}$ is as in \eqref{freeNform} and 
\begin{equation*}
 \begin{split}
  \fra''[f,g] &:= -
\sum_{k=1}^{n-1}\sum_{l =k+1}^n 
\big(\beta^{-1}_{kl}(f_k|_{\Sigma_{kl}} - f_l|_{\Sigma_{kl}}), g_k|_{\Sigma_{kl}} - g_l|_{\Sigma_{kl}}\big)_{L^2(\Sigma_{kl})},\\
 \dom\fra'' &= \bigoplus_{k=1}^n H^1(\Omega_k).
 \end{split}
\end{equation*}
We show that $\fra''$ is bounded with 
respect to $\fra_{\rm N}$ with form bound $< 1$. In fact, 
\begin{equation*}
 \begin{split}
  \big|\fra''[f]\big| &\le \|\beta^{-1}\|_\infty \sum_{k=1}^{n-1}\sum_{l =k+1}^n \big\|f_k|_{\Sigma_{kl}} - 
  f_l|_{\Sigma_{kl}}\big\|_{L^2(\Sigma_{kl})}^2\\
 &\le 2\|\beta^{-1}\|_\infty \sum_{k=1}^{n-1}\sum_{l =k+1}^n\Bigl(\,\big\|f_k|_{\Sigma_{kl}}\big\|_{L^2(\Sigma_{kl})}^2 + 
  \big\|f_l|_{\Sigma_{kl}}\big\|_{L^2(\Sigma_{kl})}^2\,\Bigr)\\
  &=2 \|\beta^{-1}\|_\infty\sum_{k=1}^n \big\| f_k|_{\partial\Omega_k}\big\|_{L^2(\partial\Omega_k)}^2
 \end{split}
\end{equation*}
and with the help of \eqref{inequality_delta1} (see Lemma~\ref{lem:trace}) we conclude that for any $\varepsilon > 0$ and 
$k=1,2,\dots,n$, 
there exists $C_k(\varepsilon) > 0$ such that
\begin{equation*}
 \begin{split}
  |\fra''[f]| &\le 2\varepsilon\,\|\beta^{-1}\|_\infty \sum_{k=1}^n \|\nabla f_k\|^2_{L^2(\Omega_k;\dC^d)} +  
   2\|\beta^{-1}\|_\infty \sum_{k=1}^n C_k(\varepsilon)\|f_k\|_{L^2(\Omega_k)}^2\\
   &\le 2\varepsilon\,\|\beta^{-1}\|_\infty\, \fra_{\rm N}[f] + 2\|\beta^{-1}\|_\infty\,\max_k C_k(\varepsilon)\,\|f\|_{L^2(\dR^d)}^2
 \end{split}
\end{equation*}
for all $f\in \dom\fra'' = \dom\fra_{\rm N}$. Hence for  $\varepsilon >0$ sufficiently small $\fra''$ is bounded with 
respect to $\fra_{\rm N}$
with form bound $< 1$. As above it follows from 
\cite[VI Theorem 1.33]{Kato} that $\fra_{\delta^\prime,\beta}$ is closed and semibounded from below.  
\end{proof}

It follows from Proposition~\ref{prop:deltaform} and the first representation theorem \cite[VI Theorem 2.1]{Kato} that 
there are unique self-adjoint operators $-\Delta_{\delta,\alpha}$ and $-\Delta_{\delta^\prime,\beta}$  in $L^2(\dR^d)$ 
associated with the sesquilinear forms $\fra_{\delta,\alpha}$ and $\fra_{\delta^\prime,\beta}$, respectively,
such that
$$(-\Delta_{\delta,\alpha} f,g)_{L^2(\dR^d)}=\fra_{\delta,\alpha}[f,g]\quad\text{and}\quad
(-\Delta_{\delta^\prime,\beta} f,g)_{L^2(\dR^d)}=\fra_{\delta^\prime,\beta}[f,g]
$$
for $f\in\dom(-\Delta_{\delta,\alpha})\subset\dom \fra_{\delta,\alpha}$, $g\in\dom\fra_{\delta,\alpha}$,
and $f\in\dom(-\Delta_{\delta^\prime,\beta})\subset\dom \fra_{\delta^\prime,\beta}$, $g\in\dom\fra_{\delta^\prime,\beta}$,
respectively.

\begin{dfn}\label{def:delta'}
The selfadjoint operator  $-\Delta_{\delta,\alpha}$ {\rm (}$-\Delta_{\delta',\beta}${\rm )} in $L^2(\dR^d)$  
is called Schr\"o\-dinger operator with $\delta$-interaction of strength $\alpha$ {\rm (}$\delta'$-interaction 
of strength $\beta$, respectively{\rm )} 
supported on $\Sigma$. 
\end{dfn}

Observe that by the definition of the forms $\fra_{\delta,\alpha}$ and $\fra_{\delta^\prime,\beta}$ the $\delta$-interaction
is strong if $\alpha$ is big, and the
$\delta^\prime$-interaction is strong if $\beta$ is small.

In the next theorem we characterize the action and domain of $-\Delta_{\delta,\alpha}$ and $-\Delta_{\delta',\beta}$.

\begin{thm}
\label{thm:delta}
Let $\cP = \{\Omega_k\}_{k=1}^n$ be a Lipschitz partition of $\dR^d$ 
with the boundary $\Sigma$, let $\alpha,\beta:\Sigma\rightarrow\dR$ be such that $\alpha,\beta^{-1}\in L^\infty(\Sigma)$ 
and
let $-\Delta_{\delta,\alpha}$ and $-\Delta_{\delta',\beta}$ be the self-adjoint operators associated with $\fra_{\delta,\alpha}$
and $\fra_{\delta',\beta}$, respectively.
Then the following holds.
\begin{itemize}
 \item [{\rm (i)}] $-\Delta_{\delta,\alpha}f = \oplus_{k=1}^n (-\Delta f_k)$ and $f= \oplus_{k=1}^n f_k\in\dom(-\Delta_{\delta,\alpha})$  if and only if
\vskip 0.2cm
 \begin{itemize}\setlength{\itemsep}{0.2cm}
\item[\rm (a)] $f\in H^1(\dR^d)$,
\item[\rm (b)] $\Delta f_k  \in L^2(\Omega_k)$ for all $k =1,2,\dots, n$, 
\item[\rm (c)] $\partial_\cP f|_\Sigma\in L^2(\Sigma)$ exists in the sense of Definition~\ref{dfn:deltatrace} 
and 
\[
\partial_\cP f|_\Sigma = \alpha f|_{\Sigma}.
\]
\end{itemize}
 \item [{\rm (ii)}] $-\Delta_{\delta^\prime,\beta}f = \oplus_{k=1}^n (-\Delta f_k)$ and $f= \oplus_{k=1}^n f_k\in\dom(-\Delta_{\delta^\prime,\beta})$ 
   if and only if
\vskip 0.2cm
\begin{itemize}\setlength{\itemsep}{0.2cm}
\item[\rm (a$^\prime$)] $f_k \in H^1(\Omega_k)$ for all $k=1,2,\dots,n$,
\item[\rm (b$^\prime$)] $\Delta f_k  \in L^2(\Omega_k)$ for all $k=1,2,\dots,n$,
\item[\rm (c$^\prime$)] $\partial_{\nu_k}f_k|_{\partial\Omega_k}\in L^2(\partial\Omega_k)$ for all $k=1,2,\dots,n$ in the sense of 
Definition~\ref{dfn:LaplacianNeumann}~{\rm (ii)} and 
\begin{equation*}
f_k|_{\partial\Omega_k} - \oplus_{ l\ne k} f_l|_{\Sigma_{kl}} 
= \beta_k\partial_{\nu_k}f_k|_{\partial\Omega_k}, \quad k=1,2,\dots,n.
\end{equation*}
\end{itemize}
\end{itemize}
\end{thm}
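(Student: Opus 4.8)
The plan is to identify the self-adjoint operators associated with $\fra_{\delta,\alpha}$ and $\fra_{\delta',\beta}$ via the first representation theorem \cite[VI Theorem~2.1]{Kato}, which applies since both forms are closed and semibounded by Proposition~\ref{prop:deltaform}. Concretely, $f$ lies in the domain of the operator associated with such a form $\fra$ exactly when $f\in\dom\fra$ and there is $h\in L^2(\dR^d)$ with $\fra[f,g]=(h,g)_{L^2(\dR^d)}$ for all $g\in\dom\fra$, in which case the operator sends $f$ to $h$. Throughout I would exploit the orthogonal decompositions $L^2(\Sigma)=\bigoplus_{k<l}L^2(\Sigma_{kl})$ and $L^2(\partial\Omega_k)=\bigoplus_{l\ne k}L^2(\Sigma_{kl})$ together with the density of $\{v|_\Sigma:v\in H^1(\dR^d)\}$ in $L^2(\Sigma)$ and of $\{v|_{\partial\Omega_k}:v\in H^1(\Omega_k)\}$ in $L^2(\partial\Omega_k)$.

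For assertion~(i) I would first test $\fra_{\delta,\alpha}[f,g]=(h,g)$ against functions $g\in H^1_0(\Omega_k)$ extended by zero to $\dR^d$. Such $g$ have vanishing trace on $\Sigma$, so the boundary term in \eqref{deltaform} drops out and a comparison with Definition~\ref{dfn:LaplacianNeumann}~(i) gives $\Delta f_k\in L^2(\Omega_k)$ with $-\Delta f_k=h_k$; this yields both the action $-\Delta_{\delta,\alpha}f=\oplus_k(-\Delta f_k)$ and condition~(b). Inserting $h=\oplus_k(-\Delta f_k)$ back into the identity for arbitrary $g\in H^1(\dR^d)$ and rearranging produces $(\nabla f,\nabla g)_{L^2(\dR^d;\dC^d)}-(\oplus_k(-\Delta f_k),g)_{L^2(\dR^d)}=(\alpha f|_\Sigma,g|_\Sigma)_{L^2(\Sigma)}$; since $\alpha f|_\Sigma\in L^2(\Sigma)$, Definition~\ref{dfn:deltatrace} then shows that $\partial_\cP f|_\Sigma$ exists and equals $\alpha f|_\Sigma$, which is condition~(c). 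The converse simply reverses these manipulations, using the uniqueness of $-\Delta f_k$ and of $\partial_\cP f|_\Sigma$.

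For assertion~(ii) the form domain $\bigoplus_k H^1(\Omega_k)$ already encodes~(a$'$). Testing $\fra_{\delta',\beta}[f,g]=(h,g)$ with $g_k\in H^1_0(\Omega_k)$ and $g_l=0$ for $l\ne k$ again annihilates the interface terms and gives $-\Delta f_k=h_k$, i.e.\ the action and~(b$'$). The decisive step is the existence of the individual Neumann traces: here I would test with $g_k\in H^1(\Omega_k)$ arbitrary and $g_l=0$ for $l\ne k$. Collecting the interface contributions on each $\Sigma_{kl}$ and using $\Sigma_{kl}=\Sigma_{lk}$, $\beta_{kl}=\beta_{lk}$, the identity reduces to $(\nabla f_k,\nabla g_k)_{L^2(\Omega_k;\dC^d)}-(-\Delta f_k,g_k)_{L^2(\Omega_k)}=\big(\oplus_{l\ne k}\beta^{-1}_{kl}(f_k-f_l)|_{\Sigma_{kl}},\,g_k|_{\partial\Omega_k}\big)_{L^2(\partial\Omega_k)}$. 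Because the right-hand boundary datum lies in $L^2(\partial\Omega_k)$ (as $\beta^{-1}\in L^\infty$ and the traces are square integrable), Definition~\ref{dfn:LaplacianNeumann}~(ii) guarantees that $\partial_{\nu_k}f_k|_{\partial\Omega_k}$ exists and equals $\oplus_{l\ne k}\beta^{-1}_{kl}(f_k-f_l)|_{\Sigma_{kl}}$; reading off the component on $\Sigma_{kl}$ gives exactly the $\delta'$-condition in~(c$'$). The converse follows by reassembling the interface terms into the double sum of \eqref{delta'form}.

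I expect the main obstacle to be precisely this existence of the $L^2$ Neumann traces in part~(ii): unlike the $\delta$-case, where $\partial_\cP f|_\Sigma=\alpha f|_\Sigma$ falls out of a single global identity, here one must isolate a single domain $\Omega_k$ to extract $\partial_{\nu_k}f_k|_{\partial\Omega_k}$, which is feasible only because the form domain $\bigoplus_k H^1(\Omega_k)$ admits test functions supported in one $\Omega_k$. The remaining work is careful bookkeeping, namely correctly pairing the $(k,l)$ and $(l,k)$ contributions of the interface double sum and invoking the density of the boundary trace spaces to pass from the integrated identities to the almost-everywhere boundary conditions on each $\Sigma_{kl}$.
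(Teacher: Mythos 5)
Your proposal is correct and follows essentially the same route as the paper's proof: both rest on the first representation theorem, test first with $H^1_0(\Omega_k)$ functions extended by zero to identify the action $-\Delta_{\delta,\alpha}f=\oplus_k(-\Delta f_k)$ (resp.\ for $-\Delta_{\delta',\beta}$), then test with arbitrary elements of the form domain --- in the $\delta'$-case with functions supported in a single $\Omega_k$ --- to read off $\partial_\cP f|_\Sigma=\alpha f|_\Sigma$ and, via the datum $b_k=\oplus_{l\ne k}\beta_{kl}^{-1}(f_k|_{\Sigma_{kl}}-f_l|_{\Sigma_{kl}})\in L^2(\partial\Omega_k)$, the $\delta'$-boundary condition, with the converse obtained by reversing these identities (summing the single-domain contributions, exactly as in the paper's Step III). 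Your closing remark correctly identifies the same key point the paper exploits, namely that the form domain $\bigoplus_k H^1(\Omega_k)$ admits test functions supported in one domain.
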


\begin{proof}
The proof of items (i) and (ii) consists of three steps each. First we show that $-\Delta_{\delta,\alpha}$ 
and $-\Delta_{\delta^\prime,\beta}$
act as minus Laplacians on each $\Omega_k$. In the second step we verify that
$f\in\dom(-\Delta_{\delta,\alpha})$ ($f\in\dom(-\Delta_{\delta^\prime,\beta})$) satisfies the conditions (a)--(c) ((a$^\prime$)--(c$^\prime$), 
respectively), and in the last step we prove the converse implication.
\vskip 0.2cm 
\noindent 
(i) {\bf Step I.}  Let $f\in\dom(-\Delta_{\delta,\alpha})$, $g_k\in H^1_0(\Omega_k)$ for some $k=1,2,\dots,n$, and
extend $g_k$ by zero to $\wt g_k\in H^1(\dR^d)=\dom\fra_{\delta,\alpha}$. From $\wt g_k|_\Sigma = 0$ and the first representation theorem  we obtain
\[
\begin{split}
\bigl((-\Delta_{\delta,\alpha}f)_k,g_k\bigr)_{L^2(\Omega_k)} &= (-\Delta_{\delta,\alpha}f,\wt g_k)_{L^2(\dR^d)} = \fra_{\delta,\alpha}[f,\wt g_k] \\
&= (\nabla f, \nabla \wt g_k)_{L^2(\dR^d;\dC^d)} - (\alpha f|_\Sigma,\wt g_k|_\Sigma)_{L^2(\Sigma)} \\
&=(\nabla f, \nabla \wt g_k)_{L^2(\dR^d;\dC^d)} = (\nabla f_k, \nabla  g_k)_{L^2(\Omega_k;\dC^d)}.
\end{split}
\]
Therefore, by Definition~\ref{dfn:LaplacianNeumann}~(i) we have
$
(-\Delta_{\delta,\alpha}f)_k = -\Delta f_k \in L^2(\Omega_k)$ for all $k=1,2,\dots,n$, that is, 
\[
-\Delta_{\delta,\alpha}f = \oplus_{k=1}^n (-\Delta f_k) \in L^2(\dR^d).
\]

\noindent 
{\bf Step II.} Let $f$ be a function in $\dom(-\Delta_{\delta,\alpha})$.
Then $f$ satisfies condition (a) since 
$\dom(-\Delta_{\delta,\alpha})\subset \dom\fra_{\delta,\alpha} = H^1(\dR^d)$.
Condition (b) is satisfied as we have shown in Step I. Hence it remains to
check condition (c). For this let $h \in \dom\fra_{\delta,\alpha}$.
From Step I and the first representation theorem we conclude 
\begin{equation*}
\begin{split}
\bigl(\oplus_{k=1}^n(-\Delta f_k), h\bigr)_{L^2(\dR^d)}&=(-\Delta_{\delta,\alpha}f,h)_{L^2(\dR^d)}
= \fra_{\delta,\alpha}[f,h] \\
&= (\nabla f,\nabla h)_{L^2(\dR^d;\dC^d)}
- (\alpha f|_{\Sigma}, h|_\Sigma)_{L^2(\Sigma)}
\end{split}
\end{equation*}
which yields 
\begin{equation*}
(\nabla f,\nabla h)_{L^2(\dR^d;\dC^d)} - \bigl(\oplus_{k=1}^n(-\Delta f_k), h\bigr)_{L^2(\dR^d)} =
(\alpha f|_{\Sigma}, h|_\Sigma)_{L^2(\Sigma)}.
\end{equation*}
Hence, by Definition~\ref{dfn:deltatrace} we have $\partial_\cP f|_\Sigma\in L^2(\Sigma)$ 
and 
$\partial_\cP f|_\Sigma= \alpha f|_\Sigma$,
that is, $f$ satisfies condition (c).

\vskip 0.2cm
\noindent {\bf Step III.}
Assume that $f$ satisfies the conditions (a)-(c)
and let $h\in\dom\fra_{\delta,\alpha}$.
By condition (a) we have
$f\in\dom\fra_{\delta,\alpha}$ and hence
\[
\fra_{\delta,\alpha}[f,h] = (\nabla f,\nabla h)_{L^2(\dR^d;\dC^d)}
- (\alpha f|_{\Sigma}, h|_\Sigma)_{L^2(\Sigma)}.
\]
The conditions  (b) and (c) together with Definition~\ref{dfn:deltatrace} imply that 
$$
(\nabla f,\nabla h)_{L^2(\dR^d;\dC^d)}- \bigl(\oplus_{k=1}^n (-\Delta f_k), h\bigr)_{L^2(\dR^d)}= 
(\partial_\cP f|_{\Sigma}, h|_\Sigma)_{L^2(\Sigma)}=
(\alpha f|_{\Sigma}, h|_\Sigma)_{L^2(\Sigma)}
$$
and hence
$
\fra_{\delta,\alpha}[f,h] = (\oplus_{k=1}^n (-\Delta f_k), h)_{L^2(\dR^d)}
$
for all $h\in\dom\fra_{\delta,\alpha}$. The first representation theorem
yields $f\in \dom(-\Delta_{\delta,\alpha})$.

\vskip 0.2cm\noindent
(ii) {\bf Step I.} The same reasoning as in (i) Step I yields
$-\Delta_{\delta',\beta}f  =\oplus_{k=1}^n(-\Delta f_k)\in L^2(\dR^d)$ for all $f\in  \dom(-\Delta_{\delta',\beta})$.

\vskip 0.2cm
\noindent {\bf Step II.} Let $f$ be a function in $\dom(-\Delta_{\delta^\prime,\beta})$.
Then $f$ satisfies condition (a$^\prime$) since 
$\dom(-\Delta_{\delta^\prime,\beta})\subset \dom\fra_{\delta^\prime,\beta}$
and condition (b$^\prime$) holds by Step I. We
check condition (c$^\prime$). For this let $h_k\in H^1(\Omega_k)$ for
some $k= 1,2,\dots,n$, and let $\wt h_k\in\dom\fra_{\delta',\beta}$ 
be its extension by zero.
From Step I and the first representation theorem we conclude 
\begin{equation} \label{fra4}
\begin{split}
&(-\Delta f_k, h_k)_{L^2(\Omega_k)}=(-\Delta_{\delta',\beta}f,\wt h_k)_{L^2(\dR^d)} =\fra_{\delta',\beta}[f,\wt h_k]\\
&\quad =(\nabla f_k, \nabla h_k)_{L^2(\Omega_k;\dC^d)}- 
\sum_{l=1,\, l\ne k}^n \bigl(\beta_{kl}^{-1}(f_k|_{\Sigma_{kl}} -f_l|_{\Sigma_{kl}}), h_k|_{\Sigma_{kl}}\bigr)_{L^2(\Sigma_{kl})}
\end{split}
\end{equation}
where we used that $\wt h_k|_{\Sigma_{pq}} =0$ if $k\not= p,q$.
For $k=1,\dots,n$ we set
\[
b_k := \bigoplus_{l=1,\,l\ne k}^n\big(\beta_{kl}^{-1}(f_k|_{\Sigma_{kl}} -f_l|_{\Sigma_{kl}})\big)\in \bigoplus_{l=1,\,l\ne k}^n
L^2(\Sigma_{kl})\,=\,L^2(\partial\Omega_k).
\]
From \eqref{fra4} we then obtain
\begin{equation*}
(\nabla f_k,  \nabla h_k)_{L^2(\Omega_k;\dC^d)} - (-\Delta f_k,h_k)_{L^2(\Omega_k)} = (b_k, h_k|_{\partial\Omega_k})_{L^2(\partial\Omega_k)}
\end{equation*}
for all $h_k\in H^1(\Omega_k)$ and $k=1,\dots,n$.
Hence, $\partial_{\nu_k}f_k|_{\partial\Omega_k}\in L^2(\partial\Omega_k)$ exists in the sense of Definition~\ref{dfn:LaplacianNeumann}~(ii) and
the boundary condition
\[
\beta_k \partial_{\nu_k}f_k|_{\partial\Omega_k} = \beta_k b_k = f_k|_{\partial\Omega_k} - \bigoplus_{l=1,\, l\ne k}^n f_l|_{\Sigma_{kl}},
\qquad k=1,\dots, n,
\]
holds, that is, condition (c$^\prime$) is valid for all $f\in \dom(-\Delta_{\delta^\prime,\beta})$. \\[0.2ex]

\vskip 0.2cm
\noindent {\bf Step III.} 
Assume that $f$ satisfies conditions (a$^\prime$)-(c$^\prime$), and let $h\in\dom\fra_{\delta',\beta}$. 
Fix some $k=1,\dots,n$ and let $\widetilde h_k$ be the extension of $h_k\in H^1(\Omega_k)$ by zero.  
By condition (a$^\prime$) $f\in\dom\fra_{\delta',\beta}$ and hence 
$$
\fra_{\delta',\beta}[f,\widetilde h_k] = (\nabla f_k,\nabla h_k)_{L^2(\Omega_k;\dC^d)} -\sum_{l=1,\,l\ne k}^n \bigl(\beta_{kl}^{-1}\,(f_k|_{\Sigma_{kl}}-f_l|_{\Sigma_{kl}}), h_k|_{\Sigma_{kl}}\bigr)_{L^2(\Sigma_{kl})}.
$$
On the other hand Definition~\ref{dfn:LaplacianNeumann}~(ii) and conditions (b$^\prime$) and (c$^\prime$) imply
\begin{equation*}
\begin{split}
&(\nabla f_k,\nabla h_k)_{L^2(\Omega_k;\dC^d)} - (-\Delta f_k, h_k)_{L^2(\Omega_k)}\\
&\qquad\qquad\qquad=\bigl(\beta_k^{-1}\, (f_k|_{\partial\Omega_k} - 
\oplus_{ l\ne k} f_l|_{\Sigma_{kl}}), h_k|_{\partial\Omega_k}\bigr)_{L^2(\partial\Omega_k)}
\end{split}
\end{equation*}
and hence $\fra_{\delta',\beta}[f,\widetilde h_k] = (-\Delta f_k,h_k)_{L^2(\Omega_k)} $ for 
$k=1,2,\dots, n$. Summing up we conclude
\begin{equation*}
  \fra_{\delta',\beta}[f,h]  = \sum_{k=1}^n \fra_{\delta',\beta}[f,\widetilde h_k] =
\sum_{k=1}^n (-\Delta f_k,h_k)_{L^2(\Omega_k)}
 =\bigl(\oplus_{k=1}^n(-\Delta f_k),h\bigr)_{L^2(\dR^d)}
\end{equation*}
for any $h\in\dom\fra_{\delta',\beta}$. This implies $f\in\dom(-\Delta_{\delta',\beta})$.
\end{proof}

We remark that the condition 
$
\partial_\cP f|_\Sigma = \alpha f|_{\Sigma}
$
for the functions in $\dom(-\Delta_{\delta,\alpha})$ in Theorem~\ref{thm:delta}~(i)-(c) reflects the classical $\delta$-jump boundary 
conditions on common boundaries of the domains in 
the partition; cf. \cite[I. Theorem 3.1.1]{AGHH} and Remark~\ref{remdeltalocal}. Similarly the condition 
\[
f_k|_{\partial\Omega_k} - \oplus_{l\ne k}f_l|_{\Sigma_{kl}} = \beta_k\partial_{\nu_k}f_k|_{\partial\Omega_k},\qquad k=1,2,\dots,n,
\]
for the functions in $\dom(-\Delta_{\delta^\prime,\beta})$ in Theorem~\ref{thm:delta}~(ii)-(c$^\prime$)
corresponds to the classical $\delta'$-jump boundary conditions; cf. \cite[I. equation (4.5)]{AGHH}. 
Note also that our sign choice for $\alpha$ and $\beta$ in the definition of the forms in \eqref{deltaform}-\eqref{delta'form} 
and the associated operators is opposite with respect to \cite{AGHH}.

Observe that for a function $f$ in $\dom(-\Delta_{\delta,\alpha})$ or $\dom(-\Delta_{\delta^\prime,\beta})$ it follows from 
Theorem~\ref{thm:delta}~(i)-(b), (ii)-(b$^\prime$) and elliptic regularity that 
$$f_k\in H^2_{\rm loc}(\Omega_k),\qquad k=1,\dots,n.$$
It is not surprising that additional assumptions on the smoothness of the boundary (or parts of the boundary) 
and the coefficients $\alpha,\beta^{-1}$ lead to $H^2$-regularity of $f$ up to the boundary (or parts of it, respectively). 
We first recall a result from \cite{BLL13} for a particular smooth partition and turn to a more general situation
in the lemma below.

\begin{prop}
Let $\Omega$ be a bounded domain with $C^\infty$-boundary and consider the partition 
$\cP = \{\Omega, \dR^d\setminus \ov{\Omega}\}$ with boundary $\Sigma=\partial\Omega$.
Then the following holds.
\begin{itemize}
 \item [{\rm (i)}] If $\alpha\in L^\infty(\Sigma)$ then both domains
                   $\dom(-\Delta_{\delta,\alpha})$ and
                   $\dom(-\Delta_{\delta',\beta})$ are contained in $H^{3/2}(\Omega)\oplus H^{3/2}(\dR^d\setminus\ov\Omega)$.
 \item [{\rm (ii)}] If $\alpha\in W^{1,\infty}(\Sigma)$ then both domains
                   $\dom(-\Delta_{\delta,\alpha})$ and
                   $\dom(-\Delta_{\delta',\beta})$ are contained in $H^2(\Omega)\oplus H^2(\dR^d\setminus\ov\Omega)$.
 \end{itemize}
\end{prop}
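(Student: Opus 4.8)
The plan is to reduce both statements to a single interior--boundary regularity principle for the Laplacian on a domain with smooth compact boundary and then to read off the required smoothness of the Cauchy data from the boundary conditions in Theorem~\ref{thm:delta}. The principle I shall use is the classical shift theorem (see, e.g., \cite{McLean}): if $\Omega_0$ is either $\Omega$ or the exterior domain $\dR^d\setminus\ov\Omega$ and $u\in H^1(\Omega_0)$ satisfies $\Delta u\in L^2(\Omega_0)$, then $u\in H^{3/2}(\Omega_0)$ as soon as $\partial_\nu u|_{\partial\Omega_0}\in L^2(\partial\Omega_0)$, and $u\in H^2(\Omega_0)$ as soon as $\partial_\nu u|_{\partial\Omega_0}\in H^{1/2}(\partial\Omega_0)$; for such $u$ these are equivalent, respectively, to $u|_{\partial\Omega_0}\in H^1(\partial\Omega_0)$ and $u|_{\partial\Omega_0}\in H^{3/2}(\partial\Omega_0)$. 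For the smooth bounded domain $\Omega$ this is standard, and for the exterior domain the only effect of unboundedness occurs away from the compact boundary, where interior elliptic regularity applies.

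For $-\Delta_{\delta',\beta}$ both statements follow quickly. By Theorem~\ref{thm:delta}~(ii) every $f=f_1\oplus f_2\in\dom(-\Delta_{\delta',\beta})$ satisfies $f_k\in H^1(\Omega_k)$, $\Delta f_k\in L^2(\Omega_k)$ and $\partial_{\nu_k}f_k|_{\partial\Omega_k}\in L^2(\partial\Omega_k)$, so the first part of the principle gives $f_k\in H^{3/2}(\Omega_k)$, which is (i). For (ii) I would first use (i) to obtain $f_k|_\Sigma\in H^1(\Sigma)$, insert this into the $\delta'$-boundary condition $f_1|_\Sigma-f_2|_\Sigma=\beta\,\partial_{\nu_1}f_1|_\Sigma$, and solve for the Neumann trace; under the corresponding smoothness hypothesis on the interaction strength (for the $\delta'$-operator this is $\beta^{-1}\in W^{1,\infty}(\Sigma)$, so that multiplication by $\beta^{-1}$ preserves $H^1(\Sigma)$) this yields $\partial_{\nu_k}f_k|_\Sigma\in H^1(\Sigma)\subset H^{1/2}(\Sigma)$, and the second part of the principle gives $f_k\in H^2(\Omega_k)$.

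For $-\Delta_{\delta,\alpha}$ the subtlety is that Theorem~\ref{thm:delta}~(i) controls only the \emph{sum} $\partial_\cP f|_\Sigma=\partial_{\nu_1}f_1|_\Sigma+\partial_{\nu_2}f_2|_\Sigma=\alpha\,f|_\Sigma$, not the individual Neumann traces, while the two Dirichlet traces coincide, $f_1|_\Sigma=f_2|_\Sigma=:g\in H^{1/2}(\Sigma)$. The key step will be to bootstrap the regularity of $g$ on $\Sigma$. I would split $f_k=f_k^0+f_k^h$, where $f_k^0\in H^2(\Omega_k)$ solves the Dirichlet problem with right-hand side $-\Delta f_k$ and zero boundary data, and $f_k^h$ is the harmonic extension of $g$; then $\partial_{\nu_1}f_1^h|_\Sigma+\partial_{\nu_2}f_2^h|_\Sigma=(\Lambda^{\mathrm{int}}+\Lambda^{\mathrm{ext}})g$, where $\Lambda^{\mathrm{int}},\Lambda^{\mathrm{ext}}$ are the interior and exterior Dirichlet-to-Neumann operators on the smooth surface $\Sigma$. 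Both are first-order elliptic pseudodifferential operators with principal symbol $|\xi|$, so their sum is elliptic of order one. Since $\partial_{\nu_k}f_k^0|_\Sigma\in H^{1/2}(\Sigma)$ and $\alpha g\in L^2(\Sigma)$ (here $\alpha\in L^\infty(\Sigma)$ and $g\in H^{1/2}(\Sigma)$ are used), the identity $(\Lambda^{\mathrm{int}}+\Lambda^{\mathrm{ext}})g=\alpha g-\partial_{\nu_1}f_1^0|_\Sigma-\partial_{\nu_2}f_2^0|_\Sigma$ has right-hand side in $L^2(\Sigma)$, and ellipticity yields $g\in H^1(\Sigma)$; the principle then gives $f_k\in H^{3/2}(\Omega_k)$, proving (i). For (ii) I would repeat this with $\alpha\in W^{1,\infty}(\Sigma)$, so that $\alpha g\in H^1(\Sigma)$ and the right-hand side above lies in $H^{1/2}(\Sigma)$; ellipticity of order one upgrades $g$ to $H^{3/2}(\Sigma)$, and the principle gives $f_k\in H^2(\Omega_k)$.

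I expect the main obstacle to be the $\delta$-case, specifically the verification that $\Lambda^{\mathrm{int}}+\Lambda^{\mathrm{ext}}$ is elliptic of order one and that its mapping properties on the closed manifold $\Sigma$ deliver the stated gain of regularity; the exterior Dirichlet-to-Neumann map in dimension $d=2$, where the finite-energy harmonic extension in $H^1(\dR^2\setminus\ov\Omega)$ must be normalised with care, requires a little extra attention but does not affect the principal symbol. The alternative, and essentially equivalent, route is to quote the corresponding regularity statement of \cite{BLL13}, obtained there via boundary triple techniques and the mapping properties of the associated Weyl function.
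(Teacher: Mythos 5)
You should first be aware that the paper contains no proof of this proposition at all: it is explicitly \emph{recalled} from \cite{BLL13}, where it is obtained by quasi boundary triple techniques, the regularity being read off from mapping properties of the associated Weyl function --- which, for the partition $\{\Omega,\dR^d\setminus\ov\Omega\}$, is built precisely from the interior and exterior Dirichlet-to-Neumann maps. Your argument is therefore a genuinely different, self-contained PDE route, and in substance it is the direct counterpart of that machinery: the elliptic first-order pseudodifferential operator $\Lambda^{\mathrm{int}}+\Lambda^{\mathrm{ext}}$ whose invertibility modulo smoothing you exploit is exactly the object that appears (in resolvent form) in \cite{BLL13}. Your route has two advantages: it makes transparent where each hypothesis enters ($\alpha\in L^\infty(\Sigma)$ puts the right-hand side of the boundary identity in $L^2(\Sigma)$, hence $g\in H^1(\Sigma)$ and $H^{3/2}$-regularity; $\alpha\in W^{1,\infty}(\Sigma)$ upgrades this to $H^{1/2}(\Sigma)$, $g\in H^{3/2}(\Sigma)$ and $H^2$-regularity), and it correctly exposes that the paper's clause for $-\Delta_{\delta',\beta}$ in item (ii) tacitly requires a smoothness hypothesis on the interaction strength, namely $\beta^{-1}\in W^{1,\infty}(\Sigma)$, since the stated condition on $\alpha$ alone cannot yield $H^2$-regularity for the $\delta'$-operator; your treatment of the $\delta'$-case via the shift principle and the jump condition is correct. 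What the boundary-triple route of \cite{BLL13} buys instead is that the regularity statement comes as a corollary of machinery (Krein-type resolvent formulas, Schatten--von Neumann estimates) needed there anyway.

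One technical point in your $\delta$-case should be tightened. In the exterior domain the decomposition $f_2=f_2^0+f_2^h$ cannot be produced by solving $-\Delta f_2^0=-\Delta f_2$ with $f_2^0\in H^2\cap H^1_0(\dR^d\setminus\ov\Omega)$, because $0$ lies in the continuous spectrum of the exterior Dirichlet Laplacian, so this problem is not solvable for arbitrary $L^2$ data; moreover the finite-energy harmonic extension of $g$ fails to lie in $L^2$ of the exterior for $d\le 4$, not only in the $d=2$ case you flag. Both issues are harmless and fixable by standard means: either replace $\Delta$ by $\Delta-1$, i.e.\ use $(-\Delta_{\rm D}+1)^{-1}$ and the Dirichlet-to-Neumann map of the corresponding Helmholtz-type problem, which has the same principal symbol $\vert\xi\vert$ and leaves the ellipticity of the sum intact; or localize with a cutoff $\chi$ equal to $1$ near the compact boundary $\Sigma$ and run the decomposition on a bounded smooth domain, noting that the asserted regularity is local near $\Sigma$ while $(1-\chi)f\in H^2(\dR^d)$ follows from $f\in H^1$ and $\Delta f_k\in L^2$. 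With that adjustment your bootstrap closes and the proof is complete.
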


In the next lemma we establish local $H^2$-regularity up to parts of the boundary $\Sigma$ of a Lipschitz partition
$\cP = \{\Omega_k\}_{k=1}^n$ under the assumption that the corresponding part of the boundary and $\alpha,\beta^{-1}\in L^\infty(\Sigma)$
are locally $C^{1,1}$ and $C^1$, respectively. This observation, which is essentially a consequence of the boundary conditions
in Theorem~\ref{thm:delta}~(i)-(c), (ii)-(c$^\prime$) and \cite[Theorem 4.18, Theorem 4.20]{McLean}, will be used in the proof of 
Theorem~\ref{thm:compact}.

\begin{lem}\label{h2bound}
Let $\Omega_k$ and $\Omega_l$ be neighbouring domains of a Lipschitz partition $\cP = \{\Omega_k\}_{k=1}^n$ and
let $\Gamma$ be a bounded open subset of $\Sigma_{kl}=\partial\Omega_k\cap\partial\Omega_l$ which is $C^{1,1}$ and part of a Lipschitz dissection of $\partial\Omega_k$
 in the sense of \cite[page 99]{McLean}. Assume that $\Gamma\cap\Sigma_{km}=\varnothing$
 for $m=1,\dots,n$ with $m\not=l,k$. Then for any relatively open subset $\gamma$ of $\Gamma$ with $\overline\gamma\subset\Gamma$ there exists an open set 
 $G_{kl}$ of $\Omega_k\cup\Omega_l\cup\Gamma$ such that $\overline\gamma\subset G_{kl}$ and the following holds for $j=k,l$.
 \begin{itemize}
  \item [{\rm (i)}] If $\alpha|_{\Gamma}\in C^1(\Gamma)$ and $f\in\dom(-\Delta_{\delta,\alpha})$ then $f_j\in H^2(\Omega_j\cap G_{kl})$;
  \item [{\rm (ii)}] If $\beta^{-1}|_{\Gamma}\in C^1(\Gamma)$ and $f\in\dom(-\Delta_{\delta^\prime,\beta})$ then 
   $f_j\in H^2(\Omega_j\cap G_{kl})$. \end{itemize}
\end{lem}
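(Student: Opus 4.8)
The plan is to reduce the statement to interior-boundary regularity up to the flat face of a half-ball and then to invoke the one-sided Dirichlet and Neumann regularity results \cite[Theorem 4.18, Theorem 4.20]{McLean}. Since the assertion is local near $\gamma$, I would first fix a relatively open $\gamma$ with $\overline\gamma\subset\Gamma$, choose an open set $U\supset\overline\gamma$ with $\overline U\cap\Sigma\subset\Gamma$ together with a cut-off $\zeta\in C_0^\infty(\dR^d)$ equal to $1$ on $U$ and with $\supp\zeta\cap\Sigma\subset\Gamma$, and set $G_{kl}:=U\cap(\Omega_k\cup\Omega_l\cup\Gamma)$. Using that $\Gamma$ is $C^{1,1}$ and part of a Lipschitz dissection, I would flatten $\Gamma$ by $C^{1,1}$-diffeomorphisms, turning each of $\Omega_k,\Omega_l$ near $\gamma$ into a piece of a half-ball and $-\Delta$ into a divergence-form operator with Lipschitz coefficients, the conormal derivative playing the role of the Neumann trace. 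Throughout we use that $-\Delta f_j=g_j\in L^2(\Omega_j)$ by Theorem~\ref{thm:delta}~(i)-(b), (ii)-(b$^\prime$).

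For the $\delta'$-case (ii) the two sides decouple. By Theorem~\ref{thm:delta}~(ii)-(c$^\prime$), the function $f_k$ carries on $\Gamma$ the Neumann datum $\partial_{\nu_k}f_k|_\Gamma=\beta_{kl}^{-1}(f_k|_\Gamma-f_l|_\Gamma)$. Since $f_k,f_l\in H^1$ their traces lie in $H^{1/2}(\Gamma)$, and as $\beta^{-1}|_\Gamma\in C^1(\Gamma)$ is a multiplier of $H^{1/2}(\Gamma)$, this datum is itself in $H^{1/2}(\Gamma)$ — crucially, using nothing about $f_k$ beyond its $H^1$-trace, so that there is no circularity. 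Localising with $\zeta$ and applying the Neumann regularity theorem \cite[Theorem 4.20]{McLean} to $f_k$ (and symmetrically to $f_l$) on the flattened half-ball then yields $f_j\in H^2(\Omega_j\cap G_{kl})$.

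The $\delta$-case (i) is the main obstacle, since here only the continuity $f_k|_\Gamma=f_l|_\Gamma=:\varphi$ and the sum $\partial_{\nu_k}f_k|_\Gamma+\partial_{\nu_l}f_l|_\Gamma=\alpha\varphi$ of the two conormal derivatives are available (Remark~\ref{remdeltalocal} and Theorem~\ref{thm:delta}~(i)-(c)); the problem is a genuine transmission problem that does not decouple. My plan is to upgrade the common trace $\varphi$ from $H^{1/2}$ to $H^{3/2}$ by means of Dirichlet-to-Neumann maps. Write $f_j=w_j+h_j$ locally, where $w_j$ solves $-\Delta w_j=g_j$ with $w_j|_\Gamma=0$ and $h_j$ is harmonic with $h_j|_\Gamma=\varphi$; by \cite[Theorem 4.18]{McLean} one has $w_j\in H^2$, hence $\partial_{\nu_j}w_j|_\Gamma\in H^{1/2}(\Gamma)$. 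Denoting by $\Lambda_j\varphi:=\partial_{\nu_j}h_j|_\Gamma$ the localised Dirichlet-to-Neumann map of $\Omega_j$, which is a first-order elliptic pseudodifferential operator on $\Gamma$ with positive principal symbol, the jump condition reads
\[
(\Lambda_k+\Lambda_l)\varphi=\alpha\varphi-\partial_{\nu_k}w_k|_\Gamma-\partial_{\nu_l}w_l|_\Gamma .
\]
Since $\alpha|_\Gamma\in C^1(\Gamma)$ keeps $\alpha\varphi\in H^{1/2}(\Gamma)$ while the remaining terms lie in $H^{1/2}(\Gamma)$, ellipticity of the order-one operator $\Lambda_k+\Lambda_l$ gives $\varphi\in H^{3/2}(\Gamma)$ locally. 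Feeding this improved Dirichlet datum back into \cite[Theorem 4.18]{McLean} yields $h_j\in H^2$, whence $f_j=w_j+h_j\in H^2(\Omega_j\cap G_{kl})$.

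The delicate technical points I expect are: making the localisation compatible with the Lipschitz dissection, so that the flattening and the pseudolocal mapping properties of $\Lambda_j$ act only on the $C^{1,1}$-piece $\Gamma$ and never see the Lipschitz corners where $\Gamma$ meets the remaining $\Sigma_{km}$ — this is exactly what forces $\overline\gamma\subset\Gamma$ and $G_{kl}\subset\Omega_k\cup\Omega_l\cup\Gamma$; and checking that $C^1$-coefficients indeed act as $H^{1/2}$-multipliers, which is what distinguishes the present $H^2$-conclusion from the weaker $H^{3/2}$-statement valid under mere $L^\infty$-hypotheses. Undoing the flattening and reassembling the cut-off pieces then returns the assertion for $j=k,l$.
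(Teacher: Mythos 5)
Your part (ii) is correct and is essentially the paper's own argument: since $f_k,f_l\in H^1$, the traces lie in $H^{1/2}(\Gamma)$, a $C^1$ coefficient is an $H^{1/2}(\Gamma)$-multiplier (\cite[Theorem 3.20]{McLean}), so Theorem~\ref{thm:delta}~(ii)-(c$^\prime$) furnishes the Neumann datum $\partial_{\nu_k}f_k|_\Gamma=\beta_{kl}^{-1}\bigl(f_k|_\Gamma-f_l|_\Gamma\bigr)\in H^{1/2}(\Gamma)$, and one-sided regularity up to the $C^{1,1}$ piece of the dissection gives $f_j\in H^2$ near $\gamma$. (Minor slip: the one-sided Neumann statement you need here is \cite[Theorem 4.18~(ii)]{McLean}; Theorem 4.20 of McLean is the \emph{transmission} regularity theorem.)

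In part (i) there is a genuine gap, and the detour is unnecessary. The paper does not decouple the two sides or bootstrap the trace: it applies \cite[Theorem 4.20]{McLean} directly to the transmission problem across $\Gamma$. Since $f\in H^1(\dR^d)$ the Dirichlet jump across $\Gamma$ vanishes, and by Remark~\ref{remdeltalocal} and \eqref{deltalocal} the jump of the conormal derivatives equals $\partial_\cP f|_\Gamma=\alpha f|_\Gamma$, which lies in $H^{1/2}(\Gamma)$ by the multiplier property (a priori interpreted in $H^{-1/2}(\Gamma)$); Theorem 4.20 then yields $H^2$-regularity on both sides at once. Your alternative hinges on the claim that the localised Dirichlet-to-Neumann maps $\Lambda_j$ are first-order \emph{elliptic pseudodifferential} operators, so that $(\Lambda_k+\Lambda_l)\varphi\in H^{1/2}(\Gamma)$ forces $\varphi\in H^{3/2}(\Gamma)$. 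But $\Gamma$ is only $C^{1,1}$: after flattening, the operator has merely Lipschitz coefficients, the classical pseudodifferential calculus is unavailable, and the asserted elliptic-regularity shift of one full order for $\Lambda_k+\Lambda_l$ is exactly the hard analytic content of the transmission regularity theorem you are trying to re-derive --- at this regularity level, asserting it is circular. (The auxiliary splittings $f_j=w_j+h_j$ also require well-posed local mixed problems that your sketch does not set up, though that is repairable with your cut-off.) The fix is one line: replace the entire Dirichlet-to-Neumann bootstrap by a citation of \cite[Theorem 4.20]{McLean} applied to the zero Dirichlet jump and the Neumann jump $\alpha f|_\Gamma\in H^{1/2}(\Gamma)$, which is how the paper proves (i).
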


\begin{proof}
(i) For $f\in\dom(-\Delta_{\delta,\alpha})$ we have $f_k\in H^1(\Omega_k)$ and hence $f_k\vert_{\Gamma}\in H^{1/2}(\Gamma)$.
 Therefore $\alpha|_{\Gamma}\in C^1(\Gamma)$ yields $\alpha f_k\vert_{\Gamma}\in H^{1/2}(\Gamma)$ by \cite[Theorem 3.20]{McLean}.
 The boundary condition in Theorem~\ref{thm:delta}~(i)-(c) and its local form in \eqref{deltalocal} 
 (interpreted in $H^{-1/2}(\Gamma)$ if the Neumann traces $\partial_{\nu_k}u_k\vert_\Gamma$ and 
 $\partial_{\nu_l}u_l\vert_\Gamma$ do not exist in $L^2(\Gamma)$) 
 together with \cite[Theorem 4.20]{McLean} implies the statement.
  
(ii) As in the proof of (i) we have $f_k\vert_{\Gamma}\in H^{1/2}(\Gamma)$ for $f\in\dom(-\Delta_{\delta^\prime,\beta})$ and the assumption
$\beta^{-1}|_{\Gamma}\in C^1(\Gamma)$ together with Theorem~\ref{thm:delta}~(ii)-(c$^\prime$) and 
\cite[Theorem~3.20]{McLean} implies
$\beta^{-1}\bigl( f_k|_\Gamma - f_l|_\Gamma\bigr) 
=  \partial_{\nu_k}f_k|_{\Gamma} \in H^{1/2}(\Gamma).$
Now the assertion follows from \cite[Theorem 4.18~(ii)]{McLean}.
\end{proof}

\subsection{An operator inequality for Schr\"{o}dinger operators with $\delta$ and $\delta^\prime$-interactions}
\label{ssec:ineq}
Let again $\cP = \{\Omega_k\}_{k=1}^n$ be a Lipschitz partition of $\dR^d$ with boundary $\Sigma$, 
and let $\alpha,\beta:\Sigma\rightarrow\dR$ be such that $\alpha,\beta^{-1}\in L^\infty(\Sigma)$.
In Theorem~\ref{thm:main} below we prove an operator inequality for the Schr\"{o}dinger operators $-\Delta_{\delta,\alpha}$
and $-\Delta_{\delta',\beta}$ which is intimately 
related with the chromatic number $\chi$ of the partition $\cP$. 
\begin{thm}
\label{thm:main}
Let $\cP = \{\Omega_k\}_{k=1}^n$ be a Lipschitz partition of $\dR^d$ 
with boundary $\Sigma$ and chromatic number $\chi$. Let $\alpha,\beta\colon\Sigma\rightarrow\dR$
be such that $\alpha,\beta^{-1}\in L^\infty(\Sigma)$ and assume 
that
\begin{equation}
\label{mainassump}
0 < \beta \le \frac{4}{\alpha}\sin^2\big(\pi / \chi\big).
\end{equation}
Then there exists a unitary operator $U\colon L^2(\dR^d)\rightarrow L^2(\dR^d)$
such that the self-adjoint operators $-\Delta_{\delta,\alpha}$ and $-\Delta_{\delta',\beta}$ satisfy the inequality
\[
U^{-1}(-\Delta_{\delta',\beta})U \le -\Delta_{\delta,\alpha}.
\]
\end{thm}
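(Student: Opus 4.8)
The plan is to reduce the operator inequality to the equivalent form inequality via Definition~\ref{dfn:forms}, and to realize $U$ as multiplication by a piecewise-constant unimodular function whose values are $\chi$-th roots of unity assigned according to an optimal colouring. Since $\chi$ is the chromatic number, there is an optimal $\chi$-colouring $\varphi\colon\{1,\dots,n\}\to\{0,1,\dots,\chi-1\}$ in the sense of Definition~\ref{dfn:chro}. I would set $\omega_k:=\erm^{2\pi\irm\varphi(k)/\chi}$ and define $U\colon L^2(\dR^d)\to L^2(\dR^d)$ by $(Uf)_k:=\omega_k f_k$ on each $\Omega_k$. Because $|\omega_k|=1$, the operator $U$ is multiplication by a function of modulus one and hence unitary, and $U$ is exactly the construction the statement promises once an optimal colouring is fixed.

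By Definition~\ref{dfn:forms} the asserted inequality $U^{-1}(-\Delta_{\delta',\beta})U\le-\Delta_{\delta,\alpha}$ is equivalent to the form inequality between $\fra_{\delta,\alpha}$ and the conjugated form $f\mapsto\fra_{\delta',\beta}[Uf]$. The first step is the domain inclusion: for $f\in\dom\fra_{\delta,\alpha}=H^1(\dR^d)$ each $f_k\in H^1(\Omega_k)$, and multiplication by the constant $\omega_k$ keeps $\omega_k f_k\in H^1(\Omega_k)$, so $Uf\in\bigoplus_{k=1}^n H^1(\Omega_k)=\dom\fra_{\delta',\beta}$; this is precisely the inclusion $\dom\fra_{\delta,\alpha}\subset U^{-1}(\dom\fra_{\delta',\beta})$ required for the ordering. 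The second, computational step evaluates $\fra_{\delta',\beta}[Uf]$. Since each $\omega_k$ is a unimodular constant, $\nabla(\omega_k f_k)=\omega_k\nabla f_k$, so the gradient contributions are preserved and, for $f\in H^1(\dR^d)$, sum back to $\|\nabla f\|^2_{L^2(\dR^d;\dC^d)}$. For the interface terms the decisive observation is that a function in $H^1(\dR^d)$ has matching one-sided traces, $f_k|_{\Sigma_{kl}}=f_l|_{\Sigma_{kl}}=:f|_{\Sigma_{kl}}$, so each jump collapses to $\omega_k f_k|_{\Sigma_{kl}}-\omega_l f_l|_{\Sigma_{kl}}=(\omega_k-\omega_l)\,f|_{\Sigma_{kl}}$. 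Using $L^2(\Sigma)=\bigoplus_{k<l}L^2(\Sigma_{kl})$ this yields
\[
\fra_{\delta',\beta}[Uf]=\|\nabla f\|^2_{L^2(\dR^d;\dC^d)}-\sum_{k<l}\big(\beta_{kl}^{-1}\,|\omega_k-\omega_l|^2\, f|_{\Sigma_{kl}},\,f|_{\Sigma_{kl}}\big)_{L^2(\Sigma_{kl})}.
\]

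The inequality then follows by comparing this with $\fra_{\delta,\alpha}[f]=\|\nabla f\|^2-\sum_{k<l}\big(\alpha_{kl}f|_{\Sigma_{kl}},f|_{\Sigma_{kl}}\big)_{L^2(\Sigma_{kl})}$ interface by interface. The geometric input is that the $\chi$-th roots of unity are the vertices of a regular $\chi$-gon inscribed in the unit circle, so two distinct vertices are at squared distance at least the squared side length $4\sin^2(\pi/\chi)$; that is, $|\omega_k-\omega_l|^2\ge 4\sin^2(\pi/\chi)$ whenever $\varphi(k)\ne\varphi(l)$. For neighbouring domains $\sigma_k(\Sigma_{kl})>0$, hence $\varphi(k)\ne\varphi(l)$ by the colouring property, while for non-neighbouring pairs the interface terms vanish in both forms. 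Thus, on each $\Sigma_{kl}$ of positive measure, assumption \eqref{mainassump} (which forces $\alpha,\beta>0$ since $\sin^2(\pi/\chi)>0$ for $\chi\ge2$) rearranges to $\alpha_{kl}\le 4\beta_{kl}^{-1}\sin^2(\pi/\chi)\le\beta_{kl}^{-1}|\omega_k-\omega_l|^2$ pointwise a.e., giving the termwise domination $\big(\alpha_{kl}f,f\big)\le\big(\beta_{kl}^{-1}|\omega_k-\omega_l|^2 f,f\big)$ on $L^2(\Sigma_{kl})$ and therefore $\fra_{\delta',\beta}[Uf]\le\fra_{\delta,\alpha}[f]$.

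I expect the genuinely delicate points to be bookkeeping rather than analysis: fixing the direction of the form-domain inclusion under the unitary conjugation, and justifying that the jump term collapses because $H^1(\dR^d)$-traces agree across $\Sigma$. The geometric minimum $4\sin^2(\pi/\chi)$ and the rearrangement of \eqref{mainassump} are elementary. The conceptual heart — and the step I would isolate first — is recognizing that the unitary must be a colour-indexed phase multiplication, so that the chromatic number enters precisely through the minimal angular separation of the $\chi$-th roots of unity.
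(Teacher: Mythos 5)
Your proposal is correct and follows essentially the same route as the paper's proof: the same colour-indexed unitary multiplication by $\chi$-th roots of unity, the same collapse of the jump terms using the matching one-sided traces of $H^1(\dR^d)$-functions, and the same geometric bound $|\omega_k-\omega_l|^2\ge 4\sin^2(\pi/\chi)$ for distinctly coloured neighbouring domains. The only cosmetic difference is that the paper packages the conjugated form as an exact equality $\fra_{\delta',\beta}[U f]=\fra_{\delta,\alpha_\cZ}[f]$ with the auxiliary coefficient $\alpha_\cZ(x)=|z_k-z_l|^2\beta_{kl}^{-1}(x)$ and then uses monotonicity $\alpha\le\alpha_\cZ$, whereas you carry out the same termwise comparison interface by interface.
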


\begin{proof}
By the definition of the chromatic number (Definition~\ref{dfn:chro}) 
there exists an optimal colouring mapping
\[
\phi\colon \{1,2,\dots,n\}\rightarrow \{0,1,\dots,\chi-1\}
\]
such that for any $k,l = 1,2,\dots,n$, $k\ne l$, we have
\[
\sigma_{k}(\Sigma_{kl}) > 0 \quad \Longrightarrow\quad \phi(k) \ne \phi(l).
\]
Next, we define $n$ complex numbers $\cZ := \{z_k\}_{k=1}^n$ on the unit circle by
\[
z_k := \exp\Big(\tfrac{2\pi\phi(k)}{\chi}i\Big), \qquad k=1,2,\dots,n.
\]
Among the $z_k$ there are only $\chi$ distinct numbers.
The points $z_k$, $k=1,\dots,n$, on the unit circle form the vertices 
of an equilateral polygon with $\chi$ edges. The square of the length of these edges is
\begin{equation}\label{edgelength}
2 - 2\cos\big(2\pi /\chi \big)= 4\sin^2\big(\pi /\chi\big).
\end{equation}
Observe that for any $k,l=1,2,\dots,n$, $k\ne l$, with 
$\sigma_k(\Sigma_{kl}) >0$ we have 
\begin{equation*}
\begin{split}
|z_k -z_l|^2 &= \Big[\cos\Big(\tfrac{2\pi\phi(k)}{\chi}\Big) - \cos\Big(\tfrac{2\pi\phi(l)}{\chi}\Big)\Big]^2 + \Big[\sin\Big(\tfrac{2\pi\phi(k)}{\chi}\Big) - \sin\Big(\tfrac{2\pi\phi(l)}{\chi}\Big)\Big]^2 \\
&= 2-2 \cos\Big(\tfrac{2\pi\phi(k)}{\chi}\Big)\cos\Big(\tfrac{2\pi\phi(l)}{\chi}\Big)-2 \sin\Big(\tfrac{2\pi\phi(k)}{\chi}\Big)\sin\Big(\tfrac{2\pi\phi(l)}{\chi}\Big)\\
&= 2 - 2\cos\Big(\tfrac{2\pi}{\chi}(\phi(k) - \phi(l))\Big) \ge 2 - 2\cos\big(2\pi /\chi \big),
\end{split}
\end{equation*}                                                  
where we used standard trigonometric identities in the third equality and $$\phi(k)-\phi(l)\in\{-(\chi-1),\dots,\chi-1\}\setminus\{0\}$$ 
in the last estimate.
Together with \eqref{edgelength} we find 
$4\sin^2 (\pi / \chi )\leq |z_k -z_l|^2$ and hence by the assumption \eqref{mainassump}
\begin{equation}\label{home}
 0 <\alpha\leq \frac{4}{\beta} \sin^2\big(\pi / \chi \big) \leq \alpha_{\cZ},
\end{equation}
where $\alpha_{\cZ}(x)= |z_k - z_l|^2\beta_{kl}^{-1}(x)$, $x\in\Sigma_{kl}$, and $k,l =1,2,\dots,n$, $ k\ne l$.

Define a unitary mapping $U_{\cZ}\colon L^2(\dR^d) \rightarrow L^2(\dR^d)$ by 
$$(U_{\cZ}f)(x) :=  z_k f_k(x),\qquad x\in\Omega_k,\qquad k=1,\dots,n,$$
and a corresponding sesquilinear form $\wt\fra_{\delta',\beta}$ by
\[
\wt\fra_{\delta',\beta}[f,g] := \fra_{\delta',\beta}[U_\cZ f, U_\cZ g],\qquad \dom\wt\fra_{\delta',\beta} =\dom\fra_{\delta',\beta}.
\]
Observe that $\wt\fra_{\delta',\beta}$ is a closed, densely defined, symmetric form which is semibounded from below, and that the selfadjoint operator
associated with $\wt\fra_{\delta',\beta}$ is 
\begin{equation*}
U_\cZ^{-1}(-\Delta_{\delta',\beta}) U_\cZ, \qquad \dom\bigl(U_\cZ^{-1}(-\Delta_{\delta',\beta}) U_\cZ\bigr)=U_\cZ^{-1}\bigl(\dom(-\Delta_{\delta',\beta})\bigr).
\end{equation*}

We claim that the inequality $\wt\fra_{\delta',\beta}\leq \fra_{\delta,\alpha_\cZ}$ holds. In fact, 
$$\dom\fra_{\delta,\alpha_{\cZ}}=H^1(\dR^d)\subset\bigoplus_{k=1}^n H^1(\Omega_k)= \dom\wt\fra_{\delta',\beta}$$
is clear and for $f\in\dom\fra_{\delta,\alpha_{\cZ}}$ we have $f\vert_{\Sigma_{kl}}=f_k\vert_{\Sigma_{kl}}=f_l\vert_{\Sigma_{kl}}$. Therefore
we obtain
\begin{equation*}
 \begin{split}
  \wt\fra_{\delta',\beta}[f] & =\fra_{\delta',\beta}[U_\cZ f ]\\ 
  & = \sum_{k=1}^n \Vert z_k\nabla  f_k\Vert^2_{L^2(\Omega_k;\dC^d)}-
\sum_{k=1}^{n-1}\sum_{l =k+1}^n 
\vert z_k-z_l\vert^2 \big(\beta^{-1}_{kl} f|_{\Sigma_{kl}}, f|_{\Sigma_{kl}}\big)_{L^2(\Sigma_{kl})}\\
& = \|\nabla f\|^2_{L^2(\dR^d;\dC^d)} - (\alpha_\cZ f\vert_\Sigma,f\vert_\Sigma)_{L^2(\Sigma)}=\fra_{\delta,\alpha_{\cZ}}[f]
\end{split}
\end{equation*}
for all $f\in\dom\fra_{\delta,\alpha_{\cZ}}$, and hence $\wt\fra_{\delta',\beta}\leq \fra_{\delta,\alpha_\cZ}$. 
Moreover, as $\alpha\leq\alpha_\cZ$ by \eqref{home} we also have $\fra_{\delta,\alpha_\cZ}\leq \fra_{\delta,\alpha}$. This implies
$$\wt\fra_{\delta',\beta}\leq \fra_{\delta,\alpha}$$ and hence $U_\cZ^{-1}(-\Delta_{\delta',\beta})U_\cZ \le -\Delta_{\delta,\alpha}$.
\end{proof}

As an immediate consequence of Theorem~\ref{thm:main} and Theorem~\ref{thm:variation} we obtain the 
following corollary on the relation of the spectra of
$-\Delta_{\delta,\alpha}$ and $-\Delta_{\delta',\beta}$.

\begin{cor}\label{thm:main2}
Let $\cP = \{\Omega_k\}_{k=1}^n$ be a Lipschitz partition of $\dR^d$ 
with boundary $\Sigma$. Let $\alpha,\beta\colon\Sigma\rightarrow\dR$
be such that $\alpha,\beta^{-1}\in L^\infty(\Sigma)$ and assume 
that 
\begin{equation*}
0 < \beta \le \frac{4}{\alpha}\sin^2\big(\pi / \chi\big).
\end{equation*}
Denote by $\{\lambda_k(-\Delta_{\delta,\alpha})\}_{k=1}^\infty$ and $\{\lambda_k(-\Delta_{\delta',\beta})\}_{k=1}^\infty$
the eigenvalues of the operators $-\Delta_{\delta,\alpha}$ and $-\Delta_{\delta',\beta}$, respectively, below the bottom
of their essential spectra, enumerated in non-decreasing order and repeated with multiplicities, and let 
$N(-\Delta_{\delta,\alpha})$ and $N(-\Delta_{\delta',\beta})$ be their total numbers as in Definition~\ref{dfn:spec1}.
Then the following holds.
\begin{itemize}\setlength{\itemsep}{1.2ex}
\item [\rm (i)] $\lambda_k(-\Delta_{\delta',\beta})\le \lambda_k(-\Delta_{\delta,\alpha})$ 
for all $k\in\dN$;
\item [\rm (ii)] $\min\sess(-\Delta_{\delta',\beta})\le \min\sess(-\Delta_{\delta,\alpha})$;
\item [\rm (iii)] If $\min\sess(-\Delta_{\delta,\alpha})=\min\sess(-\Delta_{\delta',\beta})$
then $N(-\Delta_{\delta,\alpha})\leq N(-\Delta_{\delta',\beta})$.
\end{itemize}
\end{cor}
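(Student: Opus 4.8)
The plan is to reduce the entire statement to Theorem~\ref{thm:variation} by exploiting the unitary operator supplied by Theorem~\ref{thm:main}. Since the hypothesis \eqref{mainassump} is exactly the assumption of Theorem~\ref{thm:main}, I would first invoke that theorem to obtain a unitary $U\colon L^2(\dR^d)\rightarrow L^2(\dR^d)$ with
\[
U^{-1}(-\Delta_{\delta',\beta})U \le -\Delta_{\delta,\alpha}.
\]
I then set $H_1 := -\Delta_{\delta,\alpha}$ and $H_2 := U^{-1}(-\Delta_{\delta',\beta})U$. Both are self-adjoint and bounded from below, and by construction $H_2\le H_1$, so the pair $(H_1,H_2)$ satisfies all hypotheses of Theorem~\ref{thm:variation} and its three conclusions apply verbatim.

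The second and only substantive step is to transfer the conclusions from $H_2$ back to $-\Delta_{\delta',\beta}$. This hinges on the fact that the spectral quantities in Definition~\ref{dfn:spec1} are invariant under unitary equivalence. Since $H_2$ and $-\Delta_{\delta',\beta}$ are unitarily equivalent, they have identical spectra and identical essential spectra; in particular $\min\sess(H_2)=\min\sess(-\Delta_{\delta',\beta})$, the eigenvalues lying below the essential spectrum coincide with multiplicities so that $\lambda_k(H_2)=\lambda_k(-\Delta_{\delta',\beta})$ for all $k\in\dN$, and consequently $N(H_2)=N(-\Delta_{\delta',\beta})$. I would record these three identities explicitly, as they are precisely the bridge between the abstract Theorem~\ref{thm:variation} and the concrete operators named in the corollary.

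With this identification the three assertions follow immediately. Theorem~\ref{thm:variation}~(i) gives $\lambda_k(-\Delta_{\delta',\beta})=\lambda_k(H_2)\le\lambda_k(H_1)=\lambda_k(-\Delta_{\delta,\alpha})$, which is (i); Theorem~\ref{thm:variation}~(ii) gives $\min\sess(-\Delta_{\delta',\beta})=\min\sess(H_2)\le\min\sess(H_1)=\min\sess(-\Delta_{\delta,\alpha})$, which is (ii); and under the equality hypothesis $\min\sess(-\Delta_{\delta,\alpha})=\min\sess(-\Delta_{\delta',\beta})$, Theorem~\ref{thm:variation}~(iii) yields $N(-\Delta_{\delta,\alpha})=N(H_1)\le N(H_2)=N(-\Delta_{\delta',\beta})$, which is (iii). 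Since the argument is nothing more than the composition of Theorem~\ref{thm:main} with the min-max type Theorem~\ref{thm:variation}, there is no genuine obstacle here; the only care required is the bookkeeping of the unitary invariance of the eigenvalue sequence, the essential spectrum, and the counting function $N$, which is what the middle step makes precise.
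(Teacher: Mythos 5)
Your proposal is correct and follows exactly the route the paper intends: the corollary is stated there as an immediate consequence of Theorem~\ref{thm:main} combined with Theorem~\ref{thm:variation}, which is precisely your composition of the two. Your explicit bookkeeping of the unitary invariance of the eigenvalue sequence, the essential spectrum, and the counting function $N$ is the only step the paper leaves implicit, and it is handled correctly.
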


According to the four colour theorem the chromatic number of a Lipschitz partition $\cP$ of $\dR^2$ is $\chi\le 4$;
cf. \cite{AH77, AHK77} or \cite[\S 8.2]{MT01}. This implies the following corollary in the case $d=2$.

\begin{cor}
Let $\cP = \{\Omega_k\}_{k=1}^n$ be a Lipschitz partition of $\dR^2$ 
with boundary $\Sigma$ and chromatic number $\chi$. Let $\alpha,\beta\colon\Sigma\rightarrow\dR$
be such that $\alpha,\beta^{-1}\in L^\infty(\Sigma)$ and assume 
that
\begin{equation*}
0 < \beta \le \frac{2}{\alpha}.
\end{equation*}
Then there exists a unitary operator $U\colon L^2(\dR^2)\rightarrow L^2(\dR^2)$
such that the self-adjoint operators $-\Delta_{\delta,\alpha}$ and $-\Delta_{\delta',\beta}$ satisfy the inequality
\[
U^{-1}(-\Delta_{\delta',\beta})U \le -\Delta_{\delta,\alpha},
\]
and hence the assertions in Corollary~\ref{thm:main2} hold.
\end{cor}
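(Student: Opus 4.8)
The plan is to deduce this corollary directly from Theorem~\ref{thm:main} by checking that its hypothesis \eqref{mainassump} is automatically satisfied in the planar case. The only external ingredient I would invoke is the four colour theorem, which guarantees $\chi\le 4$ for every Lipschitz partition of $\dR^2$; everything else is the elementary observation that under $\chi\le 4$ the threshold $\tfrac{4}{\alpha}\sin^2(\pi/\chi)$ in \eqref{mainassump} is never smaller than the threshold $\tfrac{2}{\alpha}$ appearing here.

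First I would observe that the standing assumption $0<\beta\le \tfrac{2}{\alpha}$ forces $\alpha>0$, so that the quantity $\tfrac{4}{\alpha}\sin^2(\pi/\chi)$ makes sense as a positive upper bound and dividing by $\alpha$ is legitimate. Next, discarding the trivial case $n=1$ (where $\Sigma$ carries no interface of positive $\sigma_k$-measure and both $-\Delta_{\delta,\alpha}$ and $-\Delta_{\delta',\beta}$ coincide with the free Laplacian, so $U=I$ works), any genuine partition has at least two neighbouring domains and hence $\chi\ge 2$. I would then record the monotonicity fact that $\chi\mapsto 4\sin^2(\pi/\chi)$ is decreasing on the integers $\chi\ge 2$, since $\pi/\chi$ decreases through the interval $(0,\pi/2]$ on which $\sin$ is increasing. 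Combined with the four colour theorem's bound $\chi\le 4$ this yields
\[
4\sin^2(\pi/\chi)\ \ge\ 4\sin^2(\pi/4)\ =\ 2 .
\]

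With $\alpha>0$ this gives $\tfrac{4}{\alpha}\sin^2(\pi/\chi)\ge \tfrac{2}{\alpha}\ge\beta$, so that \eqref{mainassump} holds for the partition $\cP$. Applying Theorem~\ref{thm:main} then produces a unitary operator $U$ on $L^2(\dR^2)$, explicitly built from an optimal colouring as in its proof, with $U^{-1}(-\Delta_{\delta',\beta})U\le-\Delta_{\delta,\alpha}$, which is exactly the asserted inequality. The concluding claim that the assertions of Corollary~\ref{thm:main2} hold is then immediate, since those assertions were derived there solely from this operator inequality together with the min--max statements in Theorem~\ref{thm:variation}; no additional argument is needed.

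There is essentially no genuine obstacle here beyond bookkeeping, as the substantive content is external, namely the four colour theorem. The only points demanding a little care are to verify the direction of the monotonicity, so that among the admissible values $\chi\in\{2,3,4\}$ the smallest, and hence binding, value of $4\sin^2(\pi/\chi)$ occurs at $\chi=4$, and to note that the hypothesis $0<\beta\le\tfrac{2}{\alpha}$ already encodes $\alpha>0$, which is what makes the comparison with \eqref{mainassump} valid.
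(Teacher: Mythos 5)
Your proposal is correct and is exactly the paper's (implicit) argument: the corollary is stated right after invoking the four colour theorem, and the intended proof is precisely that $\chi\le 4$ gives $4\sin^2(\pi/\chi)\ge 4\sin^2(\pi/4)=2$ by the monotonicity of $\chi\mapsto 4\sin^2(\pi/\chi)$ for $\chi\ge 2$, so that $0<\beta\le \tfrac{2}{\alpha}$ implies the hypothesis \eqref{mainassump} of Theorem~\ref{thm:main}. Your extra bookkeeping (noting $\alpha>0$, discarding the trivial case, checking the direction of monotonicity) is harmless and sound.
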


For the case of a  Lipschitz partition with chromatic number $\chi=2$ Theorem~\ref{thm:main} reads as follows.

\begin{cor}
\label{cor2}
Let $\cP = \{\Omega_k\}_{k=1}^n$ be a Lipschitz partition of $\dR^d$ 
with boundary $\Sigma$ and chromatic number $\chi=2$. Let $\alpha,\beta\colon\Sigma\rightarrow\dR$
be such that $\alpha,\beta^{-1}\in L^\infty(\Sigma)$ and assume 
that
\begin{equation*}
0 < \beta \le \frac{4}{\alpha}.
\end{equation*}
Then there exists a unitary operator $U\colon L^2(\dR^d)\rightarrow L^2(\dR^d)$
such that the self-adjoint operators $-\Delta_{\delta,\alpha}$ and $-\Delta_{\delta',\beta}$ satisfy the operator inequality
\[
U^{-1}(-\Delta_{\delta',\beta})U \le -\Delta_{\delta,\alpha},
\]
and hence the assertions in Corollary~\ref{thm:main2} hold.
\end{cor}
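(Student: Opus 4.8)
The plan is to obtain this statement as the immediate specialization of Theorem~\ref{thm:main} to the case $\chi = 2$. The only arithmetic I would record is that $\sin^2(\pi/2) = 1$, so that for a partition with chromatic number $\chi = 2$ the bound in \eqref{mainassump} reads
\[
0 < \beta \le \frac{4}{\alpha}\sin^2\big(\pi/2\big) = \frac{4}{\alpha}.
\]
This is precisely the hypothesis of the corollary, and hence Theorem~\ref{thm:main} applies verbatim and yields a unitary operator $U$ with $U^{-1}(-\Delta_{\delta',\beta})U \le -\Delta_{\delta,\alpha}$. In this sense the corollary contains no new content; it merely isolates the arithmetically simplest instance of \eqref{mainassump}.

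Nevertheless, I would make the unitary explicit, since the construction in the proof of Theorem~\ref{thm:main} collapses to something very concrete here. A chromatic number $\chi = 2$ means there is an optimal colouring $\phi\colon\{1,\dots,n\}\to\{0,1\}$, so the associated points on the unit circle are
\[
z_k = \exp\big(\pi\phi(k)\,i\big) = (-1)^{\phi(k)} \in \{+1,-1\},\qquad k=1,\dots,n.
\]
Thus $U_\cZ$ acts as multiplication by $+1$ on the domains of one colour and by $-1$ on those of the other. For any pair of neighbouring domains $\Omega_k,\Omega_l$ one has $\phi(k)\ne\phi(l)$ and therefore $|z_k - z_l|^2 = |1-(-1)|^2 = 4 = 4\sin^2(\pi/2)$, which is exactly the squared edge length appearing in \eqref{edgelength}. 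Feeding this into the estimate \eqref{home} reproduces the inequality $\alpha \le \alpha_\cZ$ with the sharp constant $4$, and the form comparison $\wt\fra_{\delta',\beta}\le\fra_{\delta,\alpha_\cZ}\le\fra_{\delta,\alpha}$ goes through unchanged.

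Finally, once the operator inequality is established, the spectral assertions of Corollary~\ref{thm:main2} follow directly, since they are themselves an immediate consequence of the operator inequality together with the min-max statements recorded in Theorem~\ref{thm:variation}. I do not anticipate any genuine obstacle: the sole verification is that $\sin^2(\pi/\chi)$ attains the value $1$ at $\chi=2$, and the existence of a two-colouring is part of the hypothesis $\chi = 2$, so no additional combinatorial input is needed. The entire argument is thus a one-line reduction to Theorem~\ref{thm:main}, optionally supplemented by the explicit description of $U$ above.
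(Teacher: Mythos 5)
Your proposal is correct and coincides with the paper's treatment: the corollary is stated there without a separate proof, being exactly the specialization of Theorem~\ref{thm:main} to $\chi=2$ via $\sin^2(\pi/2)=1$. Your explicit description of the unitary as multiplication by $(-1)^{\phi(k)}=\pm 1$ on the two colour classes, with $|z_k-z_l|^2=4$ for neighbouring domains, is precisely what the construction in the proof of Theorem~\ref{thm:main} collapses to in this case, so nothing is missing.
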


The following example shows that Corollary~\ref{cor2} is sharp.

\begin{example}\label{exchi2}
Consider the Lipschitz partition $\cP=\{\dR^2_+,\dR^2_-\}$ of $\dR^2$ in the upper and lower half plane with boundary $\Sigma=\dR$. 
For constants $\alpha,\beta>0$ the spectra of the operators
$-\Delta_{\delta,\alpha}$ and $-\Delta_{\delta',\beta}$ can be computed via separation of variables; they are given by 
\[
\sigma(-\Delta_{\delta,\alpha}) = \sess(-\Delta_{\delta,\alpha})= [-\alpha^2/4,\infty)
\]
and
\[
\sigma(-\Delta_{\delta^\prime,\beta}) = \sess(-\Delta_{\delta^\prime,\beta})= [-4/\beta^2 ,\infty),
\]
respectively. Hence if  $\beta > 4/\alpha$ then 
\[
\min\sess(-\Delta_{\delta',\beta}) = -\frac{4}{\beta^2} > -\frac{\alpha^2}{4}=\min\sess(-\Delta_{\delta,\alpha})
\]
and it follows from Corollary~\ref{thm:main2}~(ii) that 
there exists no unitary operator $U$ in $L^2(\dR^2)$ for
which the operator inequality $U^{-1}(-\Delta_{\delta',\beta})U \le -\Delta_{\delta,\alpha}$ holds.
\end{example}

Another situation which is worth to mention is the case of a Lipschitz partition of $\dR^2$ which consists of a bounded domain and its complement, 
so that the chromatic number $\chi$ is again $2$.

\begin{example}
Consider the partition $\cP=\{\Omega,\dR^2\setminus\overline\Omega\}$, where $\Omega\subset\dR^2$ is a bounded domain with smooth boundary $\Sigma$, and let
 $\alpha,\beta>0$ be constant. In this case 
 $$\sess(-\Delta_{\delta,\alpha}) = \sess(-\Delta_{\delta',\beta}) = [0,\infty)$$ 
 and $N(-\Delta_{\delta,\alpha})\rightarrow +\infty$ as $\alpha \rightarrow +\infty$ according to \cite[Theorem 1]{EY02}. On the other hand
 we have $N(-\Delta_{\delta',\beta}) <\infty$ for any fixed $\beta > 0$ by \cite[Theorem 3.14\,(ii)]{BLL13}. Hence it follows from
 Corollary~\ref{thm:main2}\,(iii) and Theorem~\ref{thm:main} that for $\beta > 0$ 
there exists a sufficiently large $\alpha > 0$ such that the 
inequality $U^{-1}(-\Delta_{\delta',\beta})U \le -\Delta_{\delta,\alpha}$ 
fails for any unitary operator
of $L^2(\dR^2)$. 
\end{example}

In the next example, which forms a separate subsection, we discuss a particular situation with chromatic number $\chi=3$.

\subsection{An example: A symmetric star graph with three leads in $\dR^2$}\label{sec:chi3}

We consider a symmetric star graph in $\dR^2$ with three
leads such that any two leads form an angle
of degree $2\pi/3$, see Figure~\ref{stargraph}.

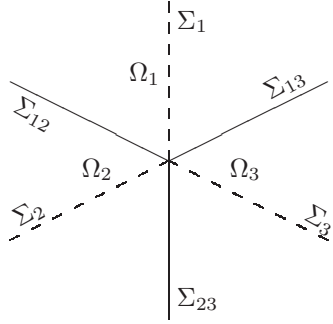
\begin{figure}[H]
\begin{center}
\begin{picture}(120,120)
\put(60,60){\line(2,1){60}}
\put(60,60){\line(-2,1){60}}
\put(60,60){\line(0,-1){60}}
\dashline{4}[0.7](60,60)(60,120)
\dashline{4}[0.7](60,60)(0,30)
\dashline{4}[0.7](60,60)(120,30)
\put(45, 90){$\Omega_1$}
\put(83, 54){$\Omega_3$}
\put(27, 54){$\Omega_2$}
\put(95, 83){\begin{turn}{29}$\Sigma_{13}$\end{turn}}
\put(0, 79){\begin{turn}{-29}$\Sigma_{12}$\end{turn}}
\put(0, 35){\begin{turn}{29}$\Sigma_{2}$\end{turn}}
\put(110, 38){\begin{turn}{-29}$\Sigma_{3}$\end{turn}}

\put(63, 5){$\Sigma_{23}$}
\put(63, 110){$\Sigma_{1}$}

\end{picture}

\end{center}
\caption{The star-graph $\Sigma = \Sigma_{12}\cup\Sigma_{23}\cup\Sigma_{13}$ separates
the Euclidean space $\dR^2$ into three congruent domains $\Omega_1$, $\Omega_2$ and $\Omega_3$ with bisector leads $\Sigma_1$, $\Sigma_2$ and $\Sigma_3$, respectively.} 
\label{stargraph}
\end{figure}

Let in the following $\alpha,\beta>0$ be real constants,
and let $-\Delta_{\delta,\alpha}$ and $-\Delta_{\delta',\beta}$ be the corresponding self-adjoint operators
with $\delta$ and $\delta^\prime$-interactions, respectively, supported on the star graph. Then we have
\begin{equation}\label{mindelta}
\min\sigma(-\Delta_{\delta,\alpha}) = - \frac{\alpha^2}{3}
\end{equation}
and
\begin{equation}\label{mindeltaprime}
\min\sigma(-\Delta_{\delta',\beta}) 
\ge - \Bigg(\frac{12\sqrt{3}-2}{9}\Bigg)^2\frac{1}{\beta^2}.
\end{equation}
Whereas \eqref{mindelta} is essentially a consequence of \cite[Lemma 2.6]{LP08} (and can be viewed as a strengthening of 
\cite[Theorem 3.2]{BEW09} in the present situation) the proof of \eqref{mindeltaprime}
is of more computational nature. Both proofs are outsourced in the appendix.

Clearly the chromatic number of the partition  of $\dR^2$ corresponding to the star graph in Figure~\ref{stargraph}
is $\chi=3$ and hence the operator inequality 
\begin{equation*}
U^{-1}(-\Delta_{\delta',\beta})U \le -\Delta_{\delta,\alpha}
\end{equation*}
for the corresponding Laplacians 
in Theorem~\ref{thm:main} is valid under the condition
\begin{equation}\label{easter1}
0\leq\beta\leq \frac{4}{\alpha}\sin^2\bigl(\pi/3\bigr)=\frac{3}{\alpha}.
\end{equation}
We point out that the assumption \eqref{easter1} can not be replaced by the weaker
assumption 
\begin{equation*}
0\leq\beta\leq \frac{4}{\alpha}\sin^2\bigl(\pi/2\bigr)=\frac{4}{\alpha},
\end{equation*}
which corresponds to the case $\chi=2$ in Theorem~\ref{thm:main}. In fact, for $c^*:=4-\frac{2\sqrt{3}}{9}$ we have $3<c^*<4$, and if we 
choose $\alpha,\beta>0$ such that
$\beta > c^* \frac{1}{\alpha}$ then we conclude $\min\sigma(-\Delta_{\delta,\alpha})<\min\sigma(-\Delta_{\delta',\beta})$ 
from \eqref{mindelta} and \eqref{mindeltaprime}. This yields the following corollary.

\begin{cor}\label{cor:BEL}
Let $\alpha,\beta>0$ and $
\beta > c^*\frac{1}{\alpha}$, where $c^*=4-\frac{2\sqrt{3}}{9}$.  Then there exists no unitary operator $U$ in $L^2(\dR^2)$
such that 
\[
U^{-1}(-\Delta_{\delta',\beta})U \le -\Delta_{\delta,\alpha}.
\] 
\end{cor}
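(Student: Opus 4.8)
The plan is to derive a contradiction by combining the spectral bounds \eqref{mindelta} and \eqref{mindeltaprime} with the consequences of a hypothetical unitary intertwining inequality. Suppose, for contradiction, that there exists a unitary operator $U$ in $L^2(\dR^2)$ satisfying $U^{-1}(-\Delta_{\delta',\beta})U \le -\Delta_{\delta,\alpha}$. The first observation is that conjugation by a unitary operator preserves the spectrum, so in particular it preserves the bottom of the spectrum; hence $\min\sigma\bigl(U^{-1}(-\Delta_{\delta',\beta})U\bigr) = \min\sigma(-\Delta_{\delta',\beta})$.

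Next I would invoke the monotonicity of the bottom of the spectrum under the ordering of self-adjoint operators. Since the operator inequality $H_2 \le H_1$ implies $\lambda_1(H_2) \le \lambda_1(H_1)$ for the lowest points of the spectra (this is item (i) of Theorem~\ref{thm:variation} for $k=1$, which holds whether or not these points are eigenvalues), applying it to $H_2 = U^{-1}(-\Delta_{\delta',\beta})U$ and $H_1 = -\Delta_{\delta,\alpha}$ gives
\[
\min\sigma(-\Delta_{\delta',\beta}) = \min\sigma\bigl(U^{-1}(-\Delta_{\delta',\beta})U\bigr) \le \min\sigma(-\Delta_{\delta,\alpha}).
\]
So the hypothetical inequality forces $\min\sigma(-\Delta_{\delta',\beta}) \le \min\sigma(-\Delta_{\delta,\alpha})$.

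The final step is to show that the assumption $\beta > c^*/\alpha$ with $c^* = 4 - \tfrac{2\sqrt{3}}{9}$ contradicts this. Using \eqref{mindelta} we have $\min\sigma(-\Delta_{\delta,\alpha}) = -\alpha^2/3$, and using \eqref{mindeltaprime} we have the lower bound $\min\sigma(-\Delta_{\delta',\beta}) \ge -\bigl(\tfrac{12\sqrt{3}-2}{9}\bigr)^2 \beta^{-2}$. I would then verify the numerical inequality: the required strict inequality $\min\sigma(-\Delta_{\delta,\alpha}) < \min\sigma(-\Delta_{\delta',\beta})$ follows once $\tfrac{\alpha^2}{3} > \bigl(\tfrac{12\sqrt{3}-2}{9}\bigr)^2\beta^{-2}$, i.e. once $\beta > \tfrac{\sqrt{3}(12\sqrt{3}-2)}{9}\cdot\tfrac{1}{\alpha}$. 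A short computation shows $\tfrac{\sqrt{3}(12\sqrt{3}-2)}{9} = \tfrac{36 - 2\sqrt{3}}{9} = 4 - \tfrac{2\sqrt{3}}{9} = c^*$, so the hypothesis $\beta > c^*/\alpha$ yields exactly $\min\sigma(-\Delta_{\delta,\alpha}) < \min\sigma(-\Delta_{\delta',\beta})$, contradicting the inequality derived above. This contradiction shows no such $U$ can exist.

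The only genuine obstacle is the numerical bookkeeping in the last paragraph: one must check that the constant $c^*$ stated in the corollary matches the constant produced by equating the two spectral bounds, and that the remaining gap between $c^*$ and $4$ (namely $3 < c^* < 4$) is what makes the example informative — it shows the $\chi=2$ threshold $4/\alpha$ is genuinely too weak for a partition with $\chi = 3$. Everything else is a direct application of the spectral monotonicity already recorded in Theorem~\ref{thm:variation} together with the unitary invariance of the spectrum; the heavy analytic content sits entirely in \eqref{mindelta} and \eqref{mindeltaprime}, which are assumed (and proved in the appendix).
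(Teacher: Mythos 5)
Your proposal is correct and takes essentially the same route as the paper: the paragraph preceding Corollary~\ref{cor:BEL} derives $\min\sigma(-\Delta_{\delta,\alpha})<\min\sigma(-\Delta_{\delta',\beta})$ from \eqref{mindelta} and \eqref{mindeltaprime} under $\beta>c^*\frac{1}{\alpha}$, and the nonexistence of $U$ then follows, exactly as you argue, from unitary invariance of the spectrum together with the monotonicity of the spectral bottom recorded in Theorem~\ref{thm:variation}. Your verification that $\sqrt{3}\,(12\sqrt{3}-2)/9=4-\tfrac{2\sqrt{3}}{9}=c^*$ is the same bookkeeping the paper performs implicitly, so there is nothing to add.
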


\section{Essential spectra and bound states of Schr\"{o}dinger operators with $\delta$ and $\delta^\prime$-interactions}\label{sec4}

In this section we discuss some spectral properties of the Schr\"{o}dinger operators $-\Delta_{\delta,\alpha}$ and $-\Delta_{\delta',\beta}$, 
where the $\delta$ and $\delta^\prime$-interaction, respectively, is
supported on certain Lipschitz partitions of $\dR^d$. We are mainly interested in the following two situations: 
Lipschitz partitions with compact boundaries in Section~\ref{4.1} and Lipschitz partitions 
which are deformed on a compact subset of $\dR^d$ in Section~\ref{sec:ex2}.
Special attention is paid to bound states in the cases $d=2$ and $d=3$ in Section~\ref{sec23case}.

\subsection{Lipschitz partitions with compact boundary}\label{4.1}

Throughout this subsection we assume that the following hypothesis is satisfied.

\begin{hyp}\label{hyp:compact}
Let $\cP = \{\Omega_k\}_{k=1}^n$ be a Lipschitz partition of $\dR^d$, $d\ge 2$, such that  
the boundary $\Sigma=\cup_{k=1}^n\partial\Omega_k$ is compact.
\end{hyp}

By Hypothesis~\ref{hyp:compact} the partition $\cP = \{\Omega_k\}_{k=1}^n$ consists of $(n-1)$ bounded domains and one unbounded domain;
cf. Figure~\ref{fig:compact}. We shall call these type of partitions sometimes \emph{compact Lipschitz partitions}. 

\begin{figure}[H]
\begin{center}
\begin{picture}(170,190)(-20,0)
\qbezier(5,130)(55,190)(105,140)
\qbezier(105,140)(134,100)(85,80)
\qbezier(85,80)(55,60)(27,100)
\qbezier(27,100)(15,110)(15,110)
\qbezier(5,130)(-5,115)(15,110)
\qbezier(5,130)(35,100)(105,140)
\put(45,100){$\Omega_{1}$}
\put(45,130){$\Omega_{2}$}
\put(5,80){$\Omega_3$}
\put(5,25){$\cP = \{\Omega_k\}_{k=1}^3$,~ $\,\,\chi = 3$}
\put(-22,179){$\dR^2$}
\qbezier(-25,190)(95,190)(145,190)
\qbezier(-25,40)(95,40)(145,40)
\qbezier(-25,40)(-25,130)(-25,190)
\qbezier(145,40)(145,140)(145,190)
\end{picture}
\begin{picture}(170,190)(-30,0)
\qbezier(5,130)(55,190)(105,140)
\qbezier(105,140)(134,100)(85,80)
\qbezier(85,80)(55,60)(27,100)
\qbezier(27,100)(15,110)(15,110)
\qbezier(5,130)(-5,115)(15,110)
\qbezier(5,130)(35,100)(105,140)
\qbezier(55,118)(65,100)(55,74.5)
\put(20,25){$\cP = \{\Omega_k\}_{k=1}^4$,~ $\,\,\chi = 4$}
\put(35,100){$\Omega_{1}$}
\put(45,130){$\Omega_{2}$}
\put(5,80){$\Omega_3$}
\put(90,100){$\Omega_{4}$}
\put(-22,179){$\dR^2$}
\qbezier(-25,190)(95,190)(145,190)
\qbezier(-25,40)(95,40)(145,40)
\qbezier(-25,40)(-25,130)(-25,190)
\qbezier(145,40)(145,140)(145,190)

\end{picture}
\end{center}
\caption{Examples of compact Lipschitz partitions with chromatic numbers $3$ and $4$.}
\label{fig:compact}
\end{figure}
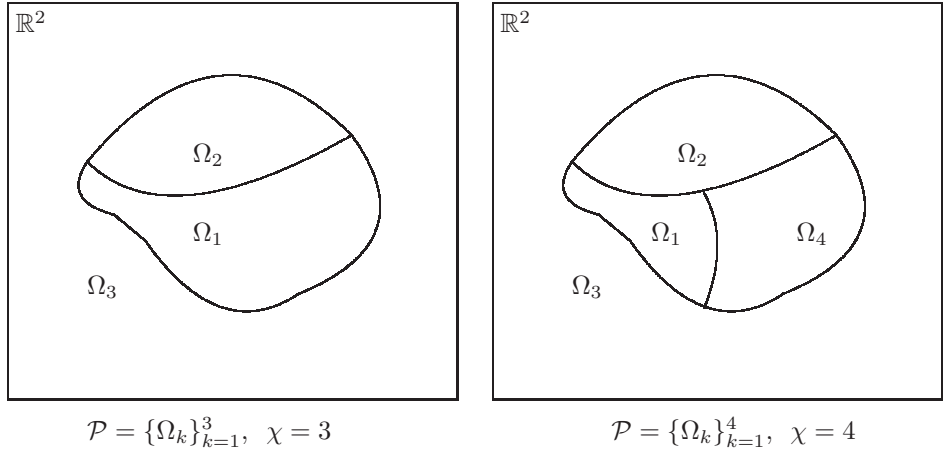

In the next theorem we show that under Hypothesis~\ref{hyp:compact} the operators $-\Delta_{\delta,\alpha}$ and $-\Delta_{\delta',\beta}$
are compact perturbations of the free Laplacian $-\Delta_{\rm free}$ defined on $H^2(\dR^d)$. A variant of Theorem~\ref{resdiff1}~(i) is also 
contained in \cite[Theorem 4]{H89} and in \cite[Theorem 3.1]{BEKS94}; cf. \cite{BELPAMM} for a detailed proof in the present situation.
We also mention that for a compact partition consisting of $C^\infty$-smooth domains it can be shown that 
the resolvent differences below belong to certain Schatten-von Neumann ideals depending on the space dimension $d$. We refer the reader
to \cite{BLL13} for more details.

\begin{thm}\label{resdiff1}
Let $\cP = \{\Omega_k\}_{k=1}^n$ be a compact Lipschitz partition of $\dR^d$ with boundary $\Sigma$ as in Hypothesis~\ref{hyp:compact}, let 
$\alpha,\beta:\Sigma\rightarrow\dR$ be such that $\alpha,\beta^{-1}\in L^\infty(\Sigma)$, and let $-\Delta_{\delta,\alpha}$ and
$-\Delta_{\delta',\beta}$ be the self-adjoint operators associated with $\cP$. Then the following statements hold.
\begin{itemize}
 \item [{\rm (i)}] For all $\lambda\in\rho(-\Delta_{\rm free})\cap\rho(-\Delta_{\delta,\alpha})$ the resolvent difference 
 \begin{equation*}
(-\Delta_{\rm free} -\lambda)^{-1} - (-\Delta_{\delta,\alpha}-\lambda)^{-1}
\end{equation*}
is a compact operator in $L^2(\dR^d)$.
\item [{\rm (ii)}] For all $\lambda\in\rho(-\Delta_{\rm free})\cap\rho(-\Delta_{\delta^\prime,\beta})$ the resolvent difference 
 \begin{equation*}
(-\Delta_{\rm free} -\lambda)^{-1} - (-\Delta_{\delta^\prime,\beta}-\lambda)^{-1}
\end{equation*}
is a compact operator in $L^2(\dR^d)$.
\end{itemize}
In particular, $\sess(-\Delta_{\delta,\alpha}) = \sess(-\Delta_{\delta',\beta}) = [0,\infty)$.
\end{thm}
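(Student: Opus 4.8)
The plan is to realise both $-\Delta_{\delta,\alpha}$ and $-\Delta_{\delta',\beta}$ as relatively form-compact perturbations and to reduce every resolvent difference to an operator that factors through traces on the \emph{compact} set $\Sigma$. The crucial input is that, since $\Sigma$ is compact by Hypothesis~\ref{hyp:compact}, the trace map $\tau\colon H^1(\dR^d)\to L^2(\Sigma)$, $\tau f=f|_\Sigma$, as well as the traces $\tau_k\colon H^1(\Omega_k)\to L^2(\partial\Omega_k)$ (each $\partial\Omega_k\subset\Sigma$ is compact), are \emph{compact} operators: they are bounded into $H^{1/2}(\Sigma)$, respectively $H^{1/2}(\partial\Omega_k)$, and $H^{1/2}(\Sigma)\hookrightarrow L^2(\Sigma)$ is a compact embedding for compact $\Sigma$ by Rellich's theorem. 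Given (i) and (ii), the final assertion follows from the stability of the essential spectrum under compact resolvent perturbations (Weyl's theorem) together with $\sess(-\Delta_{\rm free})=[0,\infty)$.

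For (i), fix $\lambda<0$ and set $R_0:=(-\Delta_{\rm free}-\lambda)^{-1}$. Recall from the proof of Proposition~\ref{prop:deltaform} that $\fra_{\delta,\alpha}=\fra_{\rm free}+\fra'$ with $\fra'[f,g]=-(\alpha\tau f,\tau g)_{L^2(\Sigma)}$ of form bound $<1$ relative to $\fra_{\rm free}$ (Lemma~\ref{lem:trace}). Since $R_0^{1/2}\colon L^2(\dR^d)\to H^1(\dR^d)$ is bounded and $\tau$ is compact, the operator $\tau R_0^{1/2}$ is compact; hence the bounded operator $K$ on $L^2(\dR^d)$ given by $(Kf,g)_{L^2(\dR^d)}=\fra'[R_0^{1/2}f,R_0^{1/2}g]$, that is $K=-(\tau R_0^{1/2})^*\,\alpha\,(\tau R_0^{1/2})$ with $\alpha$ the bounded multiplication operator on $L^2(\Sigma)$, is compact as a product with a compact factor. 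This is precisely the statement that $\fra'$ is form-compact relative to $\fra_{\rm free}$, and by the standard criterion for relatively form-compact perturbations the resolvent difference $(-\Delta_{\rm free}-\lambda)^{-1}-(-\Delta_{\delta,\alpha}-\lambda)^{-1}$ is compact; the extension to all $\lambda\in\rho(-\Delta_{\rm free})\cap\rho(-\Delta_{\delta,\alpha})$ then follows from the resolvent identity.

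For (ii) I would proceed in two steps. Writing $\fra_{\delta',\beta}=\fra_{\rm N}+\fra''$ as in the proof of Proposition~\ref{prop:deltaform}, the very same factorisation — now through the compact traces $\tau_k$ on the $\partial\Omega_k$ and the resolvent of $-\Delta_{\rm N}$ — shows that $\fra''$ is form-compact relative to $\fra_{\rm N}$, so that $(-\Delta_{\rm N}-\lambda)^{-1}-(-\Delta_{\delta',\beta}-\lambda)^{-1}$ is compact. It remains to compare the Neumann and the free Laplacian. Since $\fra_{\rm free}$ is exactly the restriction of $\fra_{\rm N}$ to $H^1(\dR^d)\subset\bigoplus_{k=1}^n H^1(\Omega_k)$, the two operators differ only through the decoupling imposed along the compact interface $\Sigma$, and $(-\Delta_{\rm free}-\lambda)^{-1}-(-\Delta_{\rm N}-\lambda)^{-1}$ is compact; this can again be reduced to the compactness of traces on $\Sigma$ via a Krein-type resolvent formula, and is the content of the cited results \cite{H89, BEKS94, BELPAMM}. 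Adding the two compact differences yields (ii).

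The routine part is the compactness of the trace operators and the algebraic factorisation of the resolvent differences. The main obstacle is the free-versus-Neumann comparison in the second step of (ii): unlike the $\delta$ and $\delta'$ perturbations, $-\Delta_{\rm free}$ and $-\Delta_{\rm N}$ are not related by an additive perturbation on a common form domain, their form domains being $H^1(\dR^d)$ and $\bigoplus_{k=1}^n H^1(\Omega_k)$, so the compactness of their resolvent difference must be established separately, e.g. through a Dirichlet-to-Neumann/Krein resolvent formula in which the relevant operators again act on $L^2(\Sigma)$ and are compact because $\Sigma$ is compact. Once this is in place, Weyl's theorem and $\sess(-\Delta_{\rm free})=[0,\infty)$ give $\sess(-\Delta_{\delta,\alpha})=\sess(-\Delta_{\delta',\beta})=[0,\infty)$.
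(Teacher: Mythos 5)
Your overall strategy is sound and genuinely different from the paper's. The paper proves (ii) (and, it remarks, (i)) by a single direct computation: with $u=(-\Delta_{\rm free}-\lambda_0)^{-1}f\in H^2(\dR^d)$ and $v=(-\Delta_{\delta',\beta}-\lambda_0)^{-1}g$, the difference $(Wf,g)_{L^2(\dR^d)}$ is evaluated via the quadratic form (the $\delta'$-boundary term vanishes because $u$, being in $H^2(\dR^d)$, has matching traces on each $\Sigma_{kl}$) and Green's identity on each $\Omega_k$, yielding $(Wf,g)=\sum_k(\partial_{\nu_k}u_k|_{\partial\Omega_k},v_k|_{\partial\Omega_k})_{L^2(\partial\Omega_k)}$; both factor maps are bounded into $\bigoplus_k H^{1/2}(\partial\Omega_k)$, which embeds compactly into $\bigoplus_k L^2(\partial\Omega_k)$ since $\Sigma$ is compact, so $W=T_2T_1^*$ is compact in one stroke, with no intermediate operator. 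You instead use relative form-compactness: for (i) this is a clean, standard alternative (your factorisation $K=-(\tau R_0^{1/2})^*\alpha(\tau R_0^{1/2})$ together with the symmetrised resolvent formula is correct, taking $\lambda$ negative enough that $\|K\|<1$), and for (ii) your step comparing $-\Delta_{\rm N}$ with $-\Delta_{\delta',\beta}$ works for the same reason, the traces $\tau_k$ being compact also for the one unbounded $\Omega_k$ because $\partial\Omega_k\subset\Sigma$ is compact. What the paper's route buys is that the free-versus-Neumann comparison never arises; what your route buys is that the perturbative steps become soft functional analysis.

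The one genuine soft spot is exactly the step you flag: the compactness of $(-\Delta_{\rm free}-\lambda)^{-1}-(-\Delta_{\rm N}-\lambda)^{-1}$ is asserted but not proved, and the references you lean on do not contain it --- \cite{H89} and \cite{BEKS94} treat Schr\"odinger operators with $\delta$-type (measure) potentials on $H^1(\dR^d)$, and \cite{BELPAMM} likewise concerns the $\delta$-case, none of them the Neumann decoupling along $\Sigma$. The fact itself is true and classical (it goes back to Birman's work \cite{B62} on perturbations by change of boundary conditions), but the cheapest way to close the hole is to run the paper's own argument with $\beta^{-1}=0$: take $u=(-\Delta_{\rm free}-\lambda_0)^{-1}f\in H^2(\dR^d)$ and $v=(-\Delta_{\rm N}-\lambda_0)^{-1}g\in\bigoplus_k H^1(\Omega_k)$, apply Green's first identity on each $\Omega_k$, and obtain $(Wf,g)=\sum_k(\partial_{\nu_k}u_k|_{\partial\Omega_k},v_k|_{\partial\Omega_k})_{L^2(\partial\Omega_k)}$, where $f\mapsto\oplus_k\partial_{\nu_k}u_k|_{\partial\Omega_k}$ and $g\mapsto\oplus_k v_k|_{\partial\Omega_k}$ are bounded into $\bigoplus_k H^{1/2}(\partial\Omega_k)$ and hence compact into $\bigoplus_k L^2(\partial\Omega_k)$. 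With that inserted (or with a correct citation to \cite{B62}), your two-step decomposition of (ii) is complete, and the final Weyl-theorem conclusion is as you state.
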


\begin{proof}
We shall only prove item (ii). 
The proof of item (i) is along the same lines and can also be found in the note \cite{BELPAMM}.
Let us fix $\lambda_0<\min \sigma(-\Delta_{\delta',\beta})$ and 
set
$$
W:=(-\Delta_{\rm free} -\lambda_0)^{-1} - (-\Delta_{\delta^\prime,\beta}-\lambda_0)^{-1}.
$$
For $f, g\in L^2(\dR^d)$ we define the functions 
\begin{equation*}
u := (-\Delta_{\rm free}-\lambda_0 )^{-1}f \quad\text{and} \quad v:= (-\Delta_{\delta',\beta}-\lambda_0)^{-1}g.
\end{equation*} 
Then we compute
\begin{equation}\label{Wfg3}
\begin{split}
(W f, g)_{L^2(\dR^d)} &= \bigl((-\Delta_{\rm free}-\lambda_0 )^{-1}f, g\bigr)_{L^2(\dR^d)}-\bigl(f,(-\Delta_{\delta',\beta}-\lambda_0)^{-1}g\bigr)_{L^2(\dR^d)} \\
&= \bigl(u,(-\Delta_{\delta',\beta}-\lambda_0)v \bigr)_{L^2(\dR^d)}-\bigl((-\Delta_{\rm free} -\lambda_0)u,v\bigr)_{L^2(\dR^d)} \\
&=(u,-\Delta_{\rm \delta',\beta}v)_{L^2(\dR^d)} - (-\Delta_{\rm free}u,v)_{L^2(\dR^d)}.
\end{split}
\end{equation}
Observe that $u\in H^2(\dR^d)\subset\dom\fra_{\delta',\beta}$ and that for any common boundary $\Sigma_{kl}$ 
with $k,l=1,2,\dots,n$ and $k\ne l$ the condition
 $u_k|_{\Sigma_{kl}} = u_l|_{\Sigma_{kl}}$ holds. Hence we have
\begin{equation}\label{delta'uv}
(u,-\Delta_{\delta',\beta}v)_{L^2(\dR^d)} =
\sum_{k=1}^n\big(\nabla u_k,\nabla v_k\big)_{L^2(\Omega_k;\dC^d)}, 
\end{equation}
where we used the definition of $\fra_{\delta',\beta}$ from \eqref{delta'form}. Furthermore,
we obtain with the help of Green's first identity (see e.g. \cite[Lemma 4.1]{McLean})
\begin{equation}\label{freeuv}
(-\Delta_{\rm free}u,v)_{L^2(\dR^d)} = \sum_{k=1}^n\big(\nabla u_k,\nabla v_k\big)_{L^2(\Omega_k;\dC^d)}  - 
\sum_{k=1}^n (\partial_{\nu_k} u_k|_{\partial\Omega_k}, v_k|_{\partial\Omega_k})_{L^2(\partial\Omega_k)};
\end{equation} 
here we 
also used that the restrictions $u_k$, $v_k$ satisfy $u_k\in H^2(\Omega_k)$, 
$v_k\in H^1(\Omega_k)$ and, hence, 
$\partial_{\nu_k}u_k|_{\partial\Omega_k}, v_k|_{\partial\Omega_k}  \in H^{1/2}(\partial\Omega_k)\subset L^2(\partial\Omega_k)$.
Combining \eqref{Wfg3} with \eqref{delta'uv} and \eqref{freeuv} we obtain
\begin{equation*}
\begin{split}
\big(Wf,g\big)_{L^2(\dR^d)} &= \sum_{k=1}^n \big(\partial_{\nu_k} u_k|_{\partial\Omega_k}, v_k|_{\partial\Omega_k}\big)_{L^2(\partial\Omega_k)}.
\end{split}
\end{equation*}

Let $\cG := \bigoplus_{k=1}^n L^2(\partial\Omega_k)$ and $\cG^{1/2} := \bigoplus_{k=1}^n H^{1/2}(\partial\Omega_k)$, and define the operators
$T_1,T_2\colon L^2(\dR^d)\rightarrow \cG$ by
\begin{equation*}
 \begin{split}
  T_1f :=& \bigoplus_{k=1}^n \partial_{\nu_k}\big[(-\Delta_{\rm free} -\lambda_0)^{-1}f\big]_k\big|_{\partial\Omega_k}
= \bigoplus_{k=1}^n \partial_{\nu_k} u_k |_{\partial\Omega_k}, \\
T_2g := & \bigoplus_{k=1}^n\big[(-\Delta_{\rm \delta',\beta} -\lambda_0)^{-1}g\big]_k\big|_{\partial\Omega_k}= 
\bigoplus_{k=1}^n v_k |_{\partial\Omega_k}.
 \end{split}
\end{equation*}
As $(-\Delta_{\rm free} -\lambda_0)^{-1}$ is continuous from $L^2(\dR^d)$ into $H^2(\dR^d)$ and $(-\Delta_{\rm \delta',\beta} -\lambda_0)^{-1}$
is continuous from $L^2(\dR^d)$ into $\dom \mathfrak a_{\delta^\prime,\beta}$ it follows from the continuity of the trace maps 
that both operators $T_1$ and $T_2$ are continuous from $L^2(\dR^d)$ into $\cG^{1/2}$; cf. \cite[Theorem 3.37]{McLean}.
Since $\cG^{1/2}$ is compactly embedded in $\cG$ both operators $T_1,T_2\colon L^2(\dR^d)\rightarrow \cG$ are compact. 
From $(Wf, g)_{L^2(\dR^d)} = \big(T_1f, T_2g\big)_{\cG}$
we conclude that
$$T_2T_1^*=W= (-\Delta_{\rm free} -\lambda_0)^{-1} - (-\Delta_{\delta^\prime,\beta}-\lambda_0)^{-1}$$ is a compact operator in $L^2(\dR^d)$. 
Now a standard argument shows that the resolvent difference is compact for all $\lambda\in\rho(-\Delta_{\rm free})\cap\rho(-\Delta_{\delta',\beta})$,
see, e.g., \cite[Lemma~2.2]{BLL12a}.

Finally, note that $\sigma(-\Delta_{\rm free}) =  \sess(-\Delta_{\rm free}) = [0,\infty)$ and hence the assertion on the
essential spectra of $-\Delta_{\delta,\alpha}$ and
$-\Delta_{\delta',\beta}$ follows from the compactness of the resolvent differences in (i) and (ii).
\end{proof}

The next statement on the negative eigenvalues of $-\Delta_{\delta,\alpha}$ and $-\Delta_{\delta',\beta}$
is an immediate consequence of Corollary~\ref{thm:main2} and the fact that the essential spectra of 
$-\Delta_{\delta,\alpha}$ and $-\Delta_{\delta',\beta}$ coincide.

\begin{cor}
Let the assumptions be as in Theorem~\ref{resdiff1} and assume, in addition, that
\[
0 < \beta \le  \frac{4}{\alpha}\sin^2\big(\pi / \chi\big),
\]
where $\chi$ is the chromatic number of the partition $\cP$.
Let $\{\lambda_{k}(-\Delta_{\delta,\alpha})\}_{k=1}^{\infty}$ and
$\{\lambda_{k}(-\Delta_{\delta',\beta})\}_{k=1}^{\infty}$ be the negative eigenvalues of $-\Delta_{\delta,\alpha}$ and $-\Delta_{\delta',\beta}$,
respectively, 
and let $N(-\Delta_{\delta,\alpha})$ and $N(-\Delta_{\delta',\beta})$ be their total multiplicities as in Definition~\ref{dfn:spec1}.
Then the following statements hold:
\begin{itemize}\setlength{\itemsep}{1.2ex}
\item [\rm (i)] $\lambda_k(-\Delta_{\delta',\beta})\le \lambda_k(-\Delta_{\delta,\alpha})$ for all $k\in\dN$;
\item[\rm (ii)] $N(-\Delta_{\delta,\alpha})\leq N(-\Delta_{\delta',\beta})$.
\end{itemize}
\end{cor}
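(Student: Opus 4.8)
The plan is to recognize that this corollary is a direct splice of the main operator inequality (Theorem~\ref{thm:main}) with the essential-spectrum identity supplied by Theorem~\ref{resdiff1}, so that essentially no new argument is needed beyond invoking Corollary~\ref{thm:main2} and checking that its hypotheses are met.

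First I would observe that the extra assumption $0<\beta\le\frac{4}{\alpha}\sin^2(\pi/\chi)$ imposed here is literally assumption \eqref{mainassump} of Theorem~\ref{thm:main}. Consequently Theorem~\ref{thm:main} applies and produces a unitary $U$ on $L^2(\dR^d)$ with $U^{-1}(-\Delta_{\delta',\beta})U\le-\Delta_{\delta,\alpha}$; equivalently, all hypotheses of Corollary~\ref{thm:main2} are in force. Statement (i) is then immediate: it is exactly Corollary~\ref{thm:main2}~(i), read off verbatim.

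For statement (ii) I would appeal to Corollary~\ref{thm:main2}~(iii), whose conclusion is precisely $N(-\Delta_{\delta,\alpha})\le N(-\Delta_{\delta',\beta})$. Its validity, however, is conditional on the equality $\min\sess(-\Delta_{\delta,\alpha})=\min\sess(-\Delta_{\delta',\beta})$. This is exactly where Hypothesis~\ref{hyp:compact} enters through Theorem~\ref{resdiff1}, which guarantees that for a compact Lipschitz partition $\sess(-\Delta_{\delta,\alpha})=\sess(-\Delta_{\delta',\beta})=[0,\infty)$, so in particular both bottoms equal $0$. The hypothesis of Corollary~\ref{thm:main2}~(iii) is thereby verified and (ii) follows.

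I do not anticipate a genuine obstacle, since every ingredient has already been proved; the only point deserving a word of care is the padding convention of Definition~\ref{dfn:spec1}, under which each sequence $\{\lambda_k(\cdot)\}$ is extended by the constant value $\min\sess(\cdot)=0$ once the (possibly finite) list of negative eigenvalues is exhausted. One must make sure that the monotonicity in (i) and the counting inequality in (ii) are read with respect to this padded enumeration, which is precisely the convention under which Theorem~\ref{thm:variation}, and hence Corollary~\ref{thm:main2}, was stated; with the common essential-spectrum bottom equal to $0$ this bookkeeping is automatic.
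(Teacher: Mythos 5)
Your proposal is correct and coincides with the paper's own argument: the corollary is stated there as an immediate consequence of Corollary~\ref{thm:main2} (whose hypothesis is literally the assumed bound on $\beta$) combined with Theorem~\ref{resdiff1}, which gives $\sess(-\Delta_{\delta,\alpha})=\sess(-\Delta_{\delta',\beta})=[0,\infty)$ and hence the equal essential-spectrum bottoms needed for part~(iii). Your remark on the padding convention of Definition~\ref{dfn:spec1} is a correct reading of the bookkeeping already built into Theorem~\ref{thm:variation}.
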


Finally we show that the Schr\"o\-dinger operator with a $\delta'$-interaction
of strength $\beta>0$ has at least one negative eigenvalue.

\begin{thm}
\label{boundstate1}
Let $\cP = \{\Omega_k\}_{k=1}^n$ be a compact Lipschitz partition of $\dR^d$ with boundary $\Sigma$ as in Hypothesis~\ref{hyp:compact}, let 
$\beta^{-1}\in L^\infty(\Sigma)$ be real, and let 
$-\Delta_{\delta',\beta}$ be the self-adjoint operator with $\delta^\prime$-interaction supported on $\Sigma$. If 
\[
\int_{\partial\Omega_k} \beta^{-1}(x)d\sigma_k(x) > 0
\]
holds for some bounded $\Omega_k$, $k\in 1,\dots,n$, then $N(-\Delta_{\delta',\beta}) \ge 1$. In particular, if $\beta>0$ is a real constant then
$-\Delta_{\delta',\beta}$ has at least one negative eigenvalue. 
\end{thm}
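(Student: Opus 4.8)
The plan is to produce a single admissible test function on which the quadratic form $\fra_{\delta',\beta}$ is strictly negative. Since Theorem~\ref{resdiff1} gives $\sess(-\Delta_{\delta',\beta})=[0,\infty)$, a negative value of the Rayleigh quotient forces $\min\sigma(-\Delta_{\delta',\beta})<0=\min\sess(-\Delta_{\delta',\beta})$, and hence at least one eigenvalue lies strictly below the essential spectrum; by Definition~\ref{dfn:spec1} this means precisely $N(-\Delta_{\delta',\beta})\ge 1$. So the whole argument reduces, via the min--max characterisation of the bottom of the spectrum, to exhibiting one $f\in\dom\fra_{\delta',\beta}$ with $\fra_{\delta',\beta}[f]<0$.

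Concretely, let $\Omega_k$ be the bounded domain with $\int_{\partial\Omega_k}\beta^{-1}\,d\sigma_k>0$, and define $f=\oplus_{j=1}^n f_j$ by $f_k\equiv 1$ on $\Omega_k$ and $f_j\equiv 0$ on $\Omega_j$ for every $j\ne k$. Because $\Omega_k$ is bounded the constant function $1$ lies in $H^1(\Omega_k)$, so $f\in\bigoplus_{j=1}^n H^1(\Omega_j)=\dom\fra_{\delta',\beta}$, and clearly $f\ne 0$. Every gradient vanishes, so the first sum in \eqref{delta'form} is zero, and in the double sum only the interfaces $\Sigma_{kl}$ with $l\ne k$ contribute; on each of them the jump equals $f_k|_{\Sigma_{kl}}-f_l|_{\Sigma_{kl}}=1$. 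Hence
\[
\fra_{\delta',\beta}[f]=-\sum_{l\ne k}\int_{\Sigma_{kl}}\beta_{kl}^{-1}\,d\sigma.
\]

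The one step requiring care is the measure-theoretic bookkeeping $\partial\Omega_k=\bigcup_{l\ne k}\Sigma_{kl}$ up to a $\sigma_k$-null set: this follows from the covering property $\dR^d=\bigcup_{j}\overline\Omega_j$ in Definition~\ref{def:partition}, since every boundary point of $\Omega_k$ lies in the closure of some other $\Omega_l$, so the interfaces $\Sigma_{kl}$ exhaust $\partial\Omega_k$ modulo measure zero. With this identity the right-hand side collapses to $-\int_{\partial\Omega_k}\beta^{-1}\,d\sigma_k<0$ by hypothesis, so $\fra_{\delta',\beta}[f]/\|f\|_{L^2(\dR^d)}^2<0$ and the conclusion $N(-\Delta_{\delta',\beta})\ge 1$ follows as described above. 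Finally, the \emph{in particular} assertion is immediate: a compact Lipschitz partition contains at least one bounded domain $\Omega_k$ (Hypothesis~\ref{hyp:compact} gives $n-1$ of them), and for a real constant $\beta>0$ one has $\int_{\partial\Omega_k}\beta^{-1}\,d\sigma_k=\beta^{-1}\sigma_k(\partial\Omega_k)>0$, so the hypothesis of the theorem is satisfied. Apart from the covering identity, every ingredient—admissibility of the constant test function, the evaluation of \eqref{delta'form} on it, and the passage from a negative Rayleigh quotient to a bound state—is routine.
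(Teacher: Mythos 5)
Your proposal is correct and is essentially the paper's own proof: the paper likewise tests $\fra_{\delta',\beta}$ on the characteristic function $\chi_{\Omega_k}$, notes $\nabla f=0$ so that $\fra_{\delta',\beta}[f]=-\int_{\partial\Omega_k}\beta^{-1}\,d\sigma<0$, and combines this with $\sess(-\Delta_{\delta',\beta})=[0,\infty)$ from Theorem~\ref{resdiff1}. Your only addition is making explicit the measure-theoretic identification $\partial\Omega_k=\bigcup_{l\ne k}\Sigma_{kl}$ up to a $\sigma_k$-null set, which the paper leaves implicit.
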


\begin{proof}
Let
$f=\chi_{\Omega_k}$ be the characteristic function of $\Omega_k$. Then  $f\in \dom \fra_{\delta',\beta}$, $\nabla f =0$, and hence
\[
\fra_{\delta',\beta}[f] = - \int_{\partial\Omega_k}\beta^{-1}(x)d\sigma(x) < 0.
\]
This implies
$\inf\sigma(-\Delta_{\delta',\beta}) < 0$. 
\end{proof}

\begin{remark}
There is no general analog of Theorem~\ref{boundstate1} for $\delta$-interactions.
In space dimensions $d \ge 3$ it follows implicitly from the Birman-Schwinger-type estimate in
\cite[Theorem 4.2\,(iii)]{BEKS94} that for $\|\alpha\|_\infty$ sufficiently
small the operator $-\Delta_{\delta,\alpha}$ has no bound states.
The existence of eigenvalues
   depends not only on $\alpha$, but also on the geometry of
   the support of the interaction; an example in the case $d=3$ is
   discussed in \cite{EF09}.
The picture is different in space dimension $d = 2$. 
In the simple case of a constant strength
$\alpha >0$ along the support of the  interaction at least one bound state always exists, see \cite{ET04}.
\end{remark}

\subsection{Locally deformed partitions of $\dR^d$}\label{sec:ex2}

In this section we consider non-compact partitions consisting of finitely many Lipschitz domains.
\begin{figure}[H]
\begin{center}
\begin{picture}(170,190)(-20,0)
\qbezier(5,130)(55,190)(105,140)
\qbezier(105,140)(135,100)(85,80)
\qbezier(85,80)(55,60)(27,100)
\qbezier(27,100)(15,110)(15,110)
\qbezier(5,130)(-5,115)(15,110)
\qbezier(5,130)(35,100)(105,140)
\qbezier(55,118)(65,100)(55,74.5)

\qbezier(-25,190)(95,190)(145,190)
\qbezier(-25,40)(95,40)(145,40)
\qbezier(-25,40)(-25,130)(-25,190)
\qbezier(145,40)(145,140)(145,190)


\qbezier(55,163)(50,170)(55,190)
\qbezier(3,116)(-5,110)(-25,116)
\qbezier(45,80)(35,60)(45,40)
\qbezier(116,110)(135,120)(145,110)
\put(35,100){$\Omega_{1}$}
\put(85,100){$\Omega_{3}$}
\put(50,135){$\Omega_{2}$}
\put(5,70){$\Omega_4$}
\put(5,160){$\Omega_5$}
\put(105,70){$\Omega_7$}
\put(105,160){$\Omega_6$}
\put(-22,179){$\dR^2$}
\put(5,15){$\cP = \{\Omega_k\}_{k=1}^7$,~ $\chi = 4$}
\end{picture}
\begin{picture}(170,190)(-30,0)
\begin{rotate}{18}
\qbezier(40,105)(90,160)(140,115)
\qbezier(140,115)(170,75)(120,45)
\qbezier(120,45)(90,35)(62,75)
\qbezier(62,75)(59,78)(50,80)
\qbezier(40,105)(30,90)(50,80)
\qbezier(45,83)(70,110)(140,115)
\end{rotate}

\qbezier(-25,190)(95,190)(145,190)
\qbezier(-25,40)(95,40)(145,40)
\qbezier(-25,40)(-25,130)(-25,190)
\qbezier(145,40)(145,140)(145,190)


\qbezier(55,160)(50,170)(55,190)
\qbezier(7,116)(-5,110)(-25,116)
\qbezier(45,84)(35,60)(45,40)
\qbezier(117,110)(135,120)(145,109)
\put(60,105){$\Omega_{2}'$}
\put(45,137){$\Omega_{1}'$}
\put(5,70){$\Omega_3'$}
\put(5,160){$\Omega_4'$}
\put(105,70){$\Omega_6'$}
\put(105,160){$\Omega_5'$}
\put(-22,179){$\dR^2$}
\put(5,15){$\cP' = \{\Omega_k'\}_{k=1}^6$,~ $\chi = 3$}
\end{picture}
\end{center}
\caption{A non-compact Lipschitz partition $\cP = \{\Omega_k\}_{k=1}^7$ of $\dR^2$ with chromatic number $\chi=4$ 
and a local deformation $\cP' = \{\Omega_k'\}_{k=1}^6$ with chromatic number $\chi = 3$.}
\label{fig:noncompact}
\end{figure}

Let in the following $\cP = \{\Omega_k\}_{k=1}^n$ and $\cP^\prime = \{\Omega_k^\prime\}_{k=1}^{n^\prime}$ be 
Lipschitz partitions of $\dR^d$ with boundaries $\Sigma$ 
and $\Sigma^\prime$, respectively. We say that $\cP$ and $\cP^\prime$ are {\it local deformations of each other} if there
exists a bounded domain $\cB$ such that
\begin{equation}\label{bsigma}
\Sigma \setminus \cB = \Sigma'\setminus \cB,
\end{equation}
see Figure~\ref{fig:noncompact}.
In addition it will be assumed that there exist $C^{1,1}$ components in the boundary $\Sigma$ (and $\Sigma^\prime$) and that
$\cB$ can be chosen in such a way that $\partial\cB\cap\Sigma$ is contained in these components. The following hypothesis
makes this more precise.

\begin{hyp}\label{hypolocal}
Let $\cP = \{\Omega_k\}_{k=1}^n$ and $\cP^\prime = \{\Omega_k^\prime\}_{k=1}^{n^\prime}$ be locally deformed Lipschitz partitions and
let $\cB$ be a bounded domain with smooth boundary $\partial\cB$ such that \eqref{bsigma} holds.
Let  $\cB_0$ and $\cB_1$ be bounded domains  such that $\overline \cB_0\subset \cB$, $\overline\cB\subset\cB_1$, and assume that
$$\Gamma:=\bigl(\cB_1\setminus\overline\cB_0\bigr) \,\cap\,\Sigma=\bigl(\cB_1\setminus\overline\cB_0\bigr) \,\cap\,\Sigma^\prime$$
consists of $C^{1,1}$ components of a Lipschitz dissection of $\Sigma$, or equivalently, of $\Sigma^\prime$.
\end{hyp}

In the next theorem we prove that the essential spectra of the Schr\"{o}dinger operators $-\Delta_{\delta,\alpha}$ and 
$-\Delta_{\delta^\prime,\beta}$ do not change under local deformations of Lipschitz partitions. Our proof is partly inspired
by \cite[Theorem 6.1 in English translation]{B62}, where similar arguments were used 
for elliptic operators with Robin and mixed boundary conditions
under local deformations of the boundary and local variations of the Robin coefficient.

\begin{thm}\label{thm:compact}
Let $\cP = \{\Omega_k\}_{k=1}^{n}$ and $\cP^\prime = \{\Omega_k^\prime\}_{k=1}^{n^\prime}$ be Lipschitz partitions of $\dR^d$ which are 
local deformations of each other such that Hypothesis~\ref{hypolocal} holds. 
Let $\alpha,\beta^{-1} \in L^\infty(\Sigma)$ and $\alpha^\prime,\beta^{\prime -1} \in L^\infty(\Sigma^\prime)$ be real and assume that 
\[
\alpha|_{\Sigma\setminus \cB_0}  = \alpha^\prime|_{\Sigma^\prime\setminus \cB_0},\qquad 
\beta|_{\Sigma\setminus \cB_0}  = \beta^\prime|_{\Sigma^\prime\setminus \cB_0}\qquad\text{and}\qquad 
\alpha|_{\Gamma}, \beta^{-1}|_{\Gamma}\in C^1(\Gamma).
\]
Let $-\Delta_{\delta,\alpha}$, $-\Delta_{\delta',\beta}$, and $-\Delta^\prime_{\delta,\alpha^\prime}$, $-\Delta^\prime_{\delta',\beta^\prime}$  
be the Schr\"odinger associated with the partitions $\cP$ and $\cP^\prime$, respectively.
Then the following statements hold.
\begin{itemize}
\item [\rm (i)] For all $\lambda\in\rho(-\Delta_{\delta,\alpha})\cap\rho(-\Delta^\prime_{\delta,\alpha^\prime})$ the resolvent difference
\[
(-\Delta_{\delta,\alpha} - \lambda)^{-1} - (-\Delta^\prime_{\delta,\alpha^\prime}-\lambda)^{-1}
\]
is a compact operator in $L^2(\dR^d)$. In particular, $\sigma_{\rm ess}(-\Delta_{\delta,\alpha}) = \sigma_{\rm ess}(-\Delta^\prime_{\delta,\alpha^\prime})$.
\item [\rm (ii)]
For all $\lambda\in\rho(-\Delta_{\delta',\beta})\cap\rho(-\Delta^\prime_{\delta',\beta^\prime})$ the resolvent difference
\[
(-\Delta_{\delta',\beta}- \lambda)^{-1} - (-\Delta^\prime_{\delta',\beta^\prime}-\lambda)^{-1}
\]
is a compact operator in $L^2(\dR^d)$. In particular, $\sigma_{\rm ess}(-\Delta_{\delta',\beta}) = \sigma_{\rm ess}(-\Delta^\prime_{\delta',\beta^\prime})$.
\end{itemize}
\end{thm}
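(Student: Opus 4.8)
The plan is to fix a real $\lambda_0$ below $\min\sigma(-\Delta_{\delta,\alpha})$ and $\min\sigma(-\Delta^\prime_{\delta,\alpha^\prime})$ (respectively below the two $\delta^\prime$-operators), to prove that the resolvent difference is compact at this single point $\lambda_0$, and then to invoke the standard argument of \cite[Lemma~2.2]{BLL12a} to transfer compactness to every $\lambda\in\rho(-\Delta_{\delta,\alpha})\cap\rho(-\Delta^\prime_{\delta,\alpha^\prime})$; the statements on $\sess$ then follow from the stability of the essential spectrum under compact perturbations of the resolvent. The common feature exploited throughout is that, by \eqref{bsigma} and Hypothesis~\ref{hypolocal}, the two partitions and the coefficients coincide on $\dR^d\setminus\overline{\cB_0}$, so that $\Sigma\setminus\overline{\cB_0}=\Sigma^\prime\setminus\overline{\cB_0}$, while the collar $\Gamma\subset\cB_1\setminus\overline{\cB_0}$ is a $C^{1,1}$ piece on which both surfaces agree.

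For item (i) the argument is direct because $\dom\fra_{\delta,\alpha}=\dom\fra^\prime_{\delta,\alpha^\prime}=H^1(\dR^d)$. Setting $u:=(-\Delta_{\delta,\alpha}-\lambda_0)^{-1}f$ and $v:=(-\Delta^\prime_{\delta,\alpha^\prime}-\lambda_0)^{-1}g$, both lie in $H^1(\dR^d)$, and the first representation theorem gives $(Wf,g)_{L^2(\dR^d)}=\fra^\prime_{\delta,\alpha^\prime}[u,v]-\fra_{\delta,\alpha}[u,v]$, where $W$ denotes the resolvent difference. The gradient terms cancel, leaving only $(\alpha u|_\Sigma,v|_\Sigma)_{L^2(\Sigma)}-(\alpha^\prime u|_{\Sigma^\prime},v|_{\Sigma^\prime})_{L^2(\Sigma^\prime)}$; since $\Sigma$, $\Sigma^\prime$ and the coefficients coincide outside $\overline{\cB_0}$, the contributions there cancel and only integrals over the bounded pieces $\Sigma\cap\cB_0$ and $\Sigma^\prime\cap\cB_0$ survive. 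Writing these through the maps $f\mapsto u|_{\Sigma\cap\cB_0}$, $g\mapsto v|_{\Sigma\cap\cB_0}$ and their primed analogues, I would obtain compactness of $W$ by factoring each such map as a continuous resolvent $L^2(\dR^d)\to H^1(\dR^d)$, a continuous trace into $H^{1/2}$ of the bounded surface piece, and the compact embedding $H^{1/2}\hookrightarrow L^2$ of that piece.

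The genuine difficulty is item (ii): here $\dom\fra_{\delta^\prime,\beta}=\bigoplus_k H^1(\Omega_k)$ and $\dom\fra^\prime_{\delta^\prime,\beta^\prime}=\bigoplus_k H^1(\Omega^\prime_k)$ are different subspaces of $L^2(\dR^d)$, so $u=(-\Delta_{\delta^\prime,\beta}-\lambda_0)^{-1}f$ need not belong to $\dom\fra^\prime_{\delta^\prime,\beta^\prime}$ and the clean form subtraction of (i) is unavailable. To overcome this I would introduce a smooth cutoff $\eta$ with $\eta\equiv1$ on a neighbourhood of $\overline{\cB_0}$, $\supp\eta\subset\cB_1$, and $\supp\nabla\eta\subset\Gamma$. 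On $\Gamma$ both surfaces coincide, are $C^{1,1}$, and $\beta^{-1}\in C^1$, so Lemma~\ref{h2bound} grants local $H^2$-regularity of $u$ and $v$ up to $\Gamma$ and well-defined $L^2$ Neumann traces there. One then starts from $(Wf,g)_{L^2(\dR^d)}=(u,-\Delta^\prime_{\delta^\prime,\beta^\prime}v)_{L^2(\dR^d)}-(-\Delta_{\delta^\prime,\beta}u,v)_{L^2(\dR^d)}$ and inserts the decomposition $1=\eta+(1-\eta)$: on $\supp(1-\eta)\subset\dR^d\setminus\overline{\cB_0}$ the two partitions and coefficients agree, so Green's identity (Definition~\ref{dfn:LaplacianNeumann} and \cite[Lemma~4.1]{McLean}) makes these contributions cancel exactly as in (i), while every remaining term carries a factor of $\eta$ or of the commutator $[\Delta,\eta]u=(\Delta\eta)u+2\nabla\eta\cdot\nabla u$, all supported in the bounded region $\cB_1$ or on $\Gamma$.

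Each surviving contribution then factors through a map that is either a restriction to the bounded set $\cB_1$ followed by the compact embedding $H^1(\cB_1)\hookrightarrow L^2(\cB_1)$, or a trace onto the bounded $C^{1,1}$ piece $\Gamma$ followed by the compact embedding $H^{1/2}(\Gamma)\hookrightarrow L^2(\Gamma)$; hence $W$ is compact. I expect the bookkeeping of the collar integration by parts — tracking which Neumann-trace terms cancel between the two operators on $\Gamma$ and verifying that the non-cancelling remainders are genuinely of the compact, $\cB_1$-localised form — to be the main obstacle, and this is precisely where the $C^{1,1}$ and $C^1$ smoothness assumptions of Hypothesis~\ref{hypolocal} together with Lemma~\ref{h2bound} enter. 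As in (i), compactness at $\lambda_0$ propagates to all admissible $\lambda$ by \cite[Lemma~2.2]{BLL12a}, which yields $\sess(-\Delta_{\delta^\prime,\beta})=\sess(-\Delta^\prime_{\delta^\prime,\beta^\prime})$.
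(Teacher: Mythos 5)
Your proposal is correct in substance, but it takes a genuinely different route from the paper. The paper never subtracts the two operators directly: it inserts auxiliary \emph{decoupled} operators $-\Delta_{\delta',\beta,\rm N}$ and $-\Delta^\prime_{\delta',\beta^\prime,\rm N}$ obtained by imposing Neumann conditions on the artificial interface $\partial\cB$, shows that each operator differs from its decoupled version by a compact resolvent difference of the form $T_2^*T_1$ --- where $T_2$ (a Dirichlet-type trace on $\partial\cB$) is compact via $H^{1/2}(\partial\cB)\hookrightarrow L^2(\partial\cB)$, and the boundedness of the Neumann-trace map $T_1$ is exactly where Lemma~\ref{h2bound} and the $C^{1,1}$/$C^1$ collar hypotheses are consumed --- and then compares the two decoupled operators through the direct sum decomposition $L^2(\dR^d)=L^2(\dR^d\setminus\overline{\cB})\oplus L^2(\cB)$: the exterior parts coincide, and the interior parts have compact resolvents. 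Your cutoff--commutator scheme avoids the auxiliary operators altogether, and it does close; in fact you can dispense with the collar integration by parts that worries you. Since multiplication by $1-\eta$ maps each form domain into their intersection, the first representation theorem gives $\bigl((1-\eta)u,-\Delta^\prime_{\delta',\beta^\prime}v\bigr)_{L^2(\dR^d)}=\fra^\prime_{\delta',\beta^\prime}[(1-\eta)u,v]$ and $\bigl(-\Delta_{\delta',\beta}u,(1-\eta)v\bigr)_{L^2(\dR^d)}=\fra_{\delta',\beta}[u,(1-\eta)v]$; in the difference the surface terms cancel identically (on $\supp(1-\eta)$ one has $\Sigma=\Sigma^\prime$, $\beta=\beta^\prime$, and the jump of $(1-\eta)u$ equals $(1-\eta)$ times the jump of $u$ because $\eta$ is continuous across $\Sigma$), leaving $(Wf,g)=(\eta u,-\Delta^\prime_{\delta',\beta^\prime}v)-(-\Delta_{\delta',\beta}u,\eta v)-(u\nabla\eta,\nabla v)+(\nabla u,v\nabla\eta)$, each pairing a compact map with a bounded one. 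Done at the form level like this, your route needs neither Lemma~\ref{h2bound} nor $L^2$ Neumann traces on $\Gamma$, and it handles general $n,n^\prime$ with no extra notation, whereas the paper carries out its argument only for $n=n^\prime=2$; what the paper's bracketing buys in exchange is a modular structure reusable for both (i) and (ii) and for Theorem~\ref{resdiff1}. Two small repairs to your write-up: $\supp\nabla\eta\subset\Gamma$ is impossible, since $\Gamma\subset\Sigma$ has Lebesgue measure zero in $\dR^d$ --- you mean $\supp\nabla\eta\subset\cB_1\setminus\overline{\cB_0}$, the collar whose intersection with $\Sigma$ is $\Gamma$; and the compactness of $f\mapsto\eta u$ must be argued piecewise, because $u$ lies only in $\bigoplus_k H^1(\Omega_k)$ and $\Omega_k\cap\cB_1$ need not be Lipschitz --- extend each $u_k$ by the extension operator of Section~\ref{ssec:Sobolev1}, multiply by $\eta$, and apply the Rellich theorem to compactly supported $H^1(\dR^d)$ functions. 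With these repairs, and the final appeal to \cite[Lemma 2.2]{BLL12a} (identical to the paper's), your argument is complete; your direct treatment of (i), exploiting that both form domains equal $H^1(\dR^d)$, is also valid and more elementary than the paper's, which handles (i) by the same decoupling strategy as (ii).
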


\begin{proof}
The proof of Theorem~\ref{thm:compact} will be given only for the simple case that both 
Lipschitz partitions consist of two domains only, that is, $n=n^\prime=2$. The general case requires more notation but follows the 
same strategy.
We verify (ii), the proof of (i) is similar. The fact that the essential spectra of $-\Delta_{\delta,\alpha}$ and $-\Delta^\prime_{\delta,\alpha^\prime}$,
and $-\Delta_{\delta',\beta}$ and $-\Delta^\prime_{\delta',\beta^\prime}$ coincide is a direct consequence of the compactness
of their resolvent differences in (i) and (ii).

Let us fix some notation; cf. Figure~\ref{fig3}. Set
$$
\Omega_{i1}:=\Omega_i\cap\cB,\quad \Omega_{i2}:=\Omega_i\cap(\dR^d\setminus\overline\cB),\qquad i=1,2,
$$
denote the restrictions of functions $f_k$ on $\Omega_k$ onto $\Omega_{kl}$ by $f_{kl}$, $k,l=1,2$, and let
$$
\Sigma_1:=\Sigma\cap\cB,\quad \Sigma_2:=\Sigma\cap(\dR^d\setminus\overline\cB).$$ 

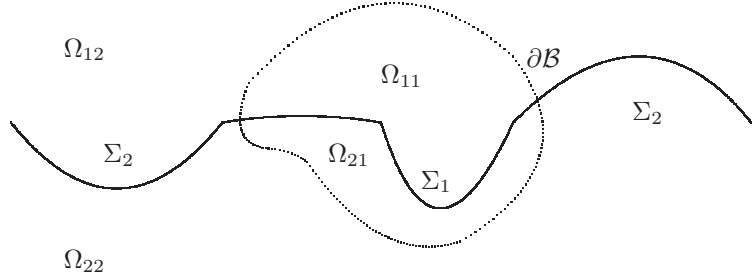
\begin{figure}[H]
\begin{center}
\begin{picture}(300,150)
\qbezier(10,75)(50,25)(90,75)
\qbezier(200,75)(250,125)(290,75)
\qbezier(90,75)(120,80)(150,75)
\qbezier(150,75)(170,10)(200,75)
\bezier{60}(100,90)(150,145)(200,100)
\bezier{50}(200,100)(230,60)(180,30)
\bezier{50}(180,30)(150,20)(122,60)
\bezier{10}(122,60)(119,63)(110,65)
\bezier{30}(100,90)(90,65)(110,65)
\put(30,100){$\Omega_{12}$}
\put(30,20){$\Omega_{22}$}
\put(130,60){$\Omega_{21}$}
\put(150,90){$\Omega_{11}$}
\put(45,60){$\Sigma_2$}
\put(165,50){$\Sigma_1$}
\put(245,76){$\Sigma_2$}
\put(205,96){$\partial\cB$}
\end{picture}
\end{center}
\caption{The hypersurface $\partial\cB$ splits the domain $\Omega_1$ into the parts $\Omega_{11}$ and $\Omega_{12}$, 
and the domain $\Omega_2$ into the parts $\Omega_{21}$ and $\Omega_{22}$. The hypersurface $\Sigma$ splits into $\Sigma_1$ and $\Sigma_2$.}
\label{fig3}
\end{figure}
\noindent We denote the restriction of $\beta$ onto $\Sigma_i$ by $\beta_i$, $i=1,2$. 
In the present situation the sesquilinear form $\mathfrak a_{\delta^\prime,\beta}$ in \eqref{delta'form} is given by 
\begin{equation*}
 \fra_{\delta',\beta}[f,g] = \sum_{k=1}^2 \big(\nabla f_k,\nabla g_k\big)_{L^2(\Omega_k;\dC^d)}-
\big(\beta^{-1}(f_1|_\Sigma - f_2|_\Sigma), g_1|_\Sigma - g_2|_\Sigma\big)_{L^2(\Sigma)}
\end{equation*}
with $\dom \fra_{\delta',\beta} =  H^1(\Omega_1)\oplus H^1(\Omega_2)$. Observe that the right hand side can
also be written in the form
\begin{equation*}
\sum_{k,l=1}^{2} (\nabla f_{kl}, \nabla g_{kl})_{L^2(\Omega_{kl};\dC^d)} 
 -\sum_{j=1}^2\big(\beta_j^{-1}(f_{1j}|_{\Sigma_j} -f_{2j}|_{\Sigma_j}),
g_{1j}|_{\Sigma_j} -g_{2j}|_{\Sigma_j}\bigr)_{L^2(\Sigma_j)}.
\end{equation*}

\noindent {\bf Step I.}
We introduce an auxiliary sesquilinear form by
\begin{equation*}
\begin{split}
\fra_{\delta',\beta,\rm N}[f,g]\! &:= \sum_{k,l=1}^{2} (\nabla f_{kl}, \nabla g_{kl})_{L^2(\Omega_{kl};\dC^d)} \\
& \qquad -\sum_{j=1}^2\big(\beta_j^{-1}(f_{1j}|_{\Sigma_j} -f_{2j}|_{\Sigma_j}),
g_{1j}|_{\Sigma_j} -g_{2j}|_{\Sigma_j}\bigr)_{L^2(\Sigma_j)},\\
\dom\fra_{\delta',\beta,\rm N} &= \bigoplus_{k,l=1}^{2} H^1(\Omega_{kl}).
\end{split}
\end{equation*}
As in the proof of Proposition~\ref{prop:deltaform} one verifies that $\fra_{\delta',\beta,\rm N}$ 
is a closed, densely defined form which is semibounded from below, and hence gives rise to a self-adjoint operator 
$-\Delta_{\delta',\beta,\rm N}$ in $L^2(\dR^d)$. Note that the functions in the domain of $-\Delta_{\delta',\beta,\rm N}$ satisfy
Neumann boundary conditions on $\partial\cB\cap\Omega_i$, $i=1,2$, and the same $\delta^\prime$-type boundary conditions at $\Sigma_i$, $i=1,2$, 
as the functions in the domain of $-\Delta_{\delta',\beta}$. In this step we show that 
\begin{equation}
\label{W3}
(-\Delta_{\delta',\beta} - \lambda)^{-1} - (-\Delta_{\delta',\beta,\rm N} -\lambda)^{-1}
\end{equation}
is a compact operator in $L^2(\dR^d)$ for all $\lambda\in\rho(-\Delta_{\delta',\beta})\cap\rho(-\Delta_{\delta',\beta,\rm N})$.

In fact, choose $\lambda_0 < \min\{\min\sigma(-\Delta_{\delta',\beta}),
\min\sigma(-\Delta_{\delta',\beta, \rm N})\}$ and let 
$W$ be the resolvent difference in \eqref{W3} with $\lambda = \lambda_0$.
For $f, g\in L^2(\dR^d)$ define 
\begin{equation*}
u := (-\Delta_{\delta',\beta} -\lambda_0)^{-1}f\quad\text{and} \quad v:= (-\Delta_{\delta',\beta,\rm N}-\lambda_0)^{-1}g.
\end{equation*} 
A straightforward computation as in \eqref{Wfg3} yields 
\begin{equation}
\label{Wfg4}
(Wf, g)_{L^2(\dR^d)}= (u,-\Delta_{\delta',\beta, \rm N}v)_{L^2(\dR^d)} - (-\Delta_{\delta',\beta}u,v)_{L^2(\dR^d)}.
\end{equation}
As $u\in\dom(-\Delta_{\delta',\beta})\subset \dom\fra_{\delta',\beta} \subset \dom\fra_{\delta',\beta,\rm N}$ we have
for the first term on the right hand side
\[
\begin{split}
&(u,-\Delta_{\delta',\beta, \rm N}v)_{L^2(\dR^d)}\\ 
&\quad=\sum_{k,l=1}^{2}  (\nabla u_{kl}, \nabla v_{kl})_{L^2(\Omega_{kl};\dC^d)}
 - \sum_{j=1}^2\bigl(\beta^{-1}_j(u_{1j}|_{\Sigma_j}-u_{2j}|_{\Sigma_j}),
v_{1j}|_{\Sigma_j}-v_{2j}|_{\Sigma_j}\bigr)_{L^2(\Sigma_j)}.
\end{split}
\]
In order to rewrite the second term on the right hand side of \eqref{Wfg4} note first that for $u\in\dom(-\Delta_{\delta',\beta})$
we have 
$$
\partial\nu_{j1}u_{j1}\vert_{\partial\cB\cap\Omega_j}+\partial\nu_{j2}u_{j2}\vert_{\partial\cB\cap\Omega_j}=0,\qquad j=1,2;
$$
here the Neumann traces exist in $H^{1/2}(\partial\cB\cap\Omega_j)$ due to the $H^2$-regularity of the functions in 
$\dom(-\Delta_{\delta^\prime,\beta})$ near $\partial\cB\cap\Omega_j$ (which follows from $u_j\in H^2_{\rm loc}(\Omega_j)$ and 
Lemma~\ref{h2bound}~(ii)). Moreover 
$u\in \dom(-\Delta_{\delta',\beta})$ satisfies the boundary conditions
$$
\partial\nu_{1j}u_{1j}\vert_{\Sigma_j}=\beta_j^{-1}(u_{1j}\vert_{\Sigma_j}-u_{2j}\vert_{\Sigma_j})=-\partial\nu_{2j}u_{2j}\vert_{\Sigma_j}
,\qquad j=1,2,
$$
by Theorem~\ref{thm:delta} (ii)-(c$^\prime$).
Hence we obtain for the second term on the right hand side of \eqref{Wfg4} when integrating by parts,
\[
\begin{split}
&(-\Delta_{\delta',\beta}u,v)_{L^2(\dR^d)}\\
&\quad = \sum_{k,l=1}^{2}  (\nabla u_{kl}, \nabla v_{kl})_{L^2(\Omega_{kl};\dC^d)}- \sum_{j=1}^2\bigl(\beta_j^{-1}(u_{1j}|_{\Sigma_j}-u_{2j}|_{\Sigma_j}),
v_{1j}|_{\Sigma_j}-v_{2j}|_{\Sigma_j}\bigr)_{L^2(\Sigma_{j})}\\
&\qquad\qquad-\sum_{j=1}^2\big(\partial_{\nu_{j1}}u_{j1}|_{\partial \cB\cap\Omega_j},v_{j1}|_{\partial \cB\cap\Omega_j} 
- v_{j2}|_{\partial \cB\cap\Omega_j}\big)_{L^2(\partial \cB\cap\Omega_j)}.
\end{split}
\]
Thus \eqref{Wfg4} has the form
\begin{equation}\label{Wfg5}
\begin{split}
(Wf,g)_{L^2(\dR^d)} & = \sum_{j=1}^2\big(\partial_{\nu_{j1}}u_{j1}|_{\partial \cB\cap\Omega_j},v_{j1}|_{\partial \cB\cap\Omega_j} - 
v_{j2}|_{\partial \cB\cap\Omega_j}\big)_{L^2(\partial \cB\cap\Omega_j)}\\
&= (T_1 f, T_2g)_{L^2(\partial \cB)},
\end{split}
\end{equation}
where the operators $T_1, T_2\colon L^2(\dR^d)\rightarrow L^2(\partial \cB)$ are defined by
\[
\begin{split}
T_1f :=& \bigoplus_{j=1}^2 \partial_{\nu_{j1}}\bigl[(-\Delta_{\delta',\beta} -\lambda_0)^{-1}f\bigr]_{j1}\bigl|_{\partial \cB \cap\Omega_j}=
          \bigoplus_{j=1}^2 \partial_{\nu_{j1}} u_{j1}|_{\partial \cB\cap\Omega_j}                       \\
T_2g :=& \bigoplus_{j=1}^2\Big[\bigl[(-\Delta_{\delta',\beta,\rm N} -\lambda_0)^{-1}g\bigr]_{j1}|_{\partial \cB\cap\Omega_j}-
\bigl[(-\Delta_{\delta',\beta,\rm N} -\lambda_0)^{-1}g\bigr]_{j2}|_{\partial \cB\cap\Omega_j}\Big]\\
 =& \bigoplus_{j=1}^2 \bigl[v_{j1}|_{\partial \cB\cap\Omega_j} - 
v_{j2}|_{\partial \cB\cap\Omega_j}\bigr].
\end{split}
\]
Since $(-\Delta_{\delta',\beta,\rm  N} -\lambda_0)^{-1}$ is bounded from $L^2(\dR^d)$ into $\dom\fra_{\delta',\beta,\rm N}$ it
follows from \cite[Theorem 3.37]{McLean} that the operator $T_2$ maps $L^2(\dR^d)$ boundedly into 
\[
H^{1/2}(\partial \cB\cap\Omega_1)\oplus H^{1/2}(\partial \cB\cap \Omega_2),
\]
which is compactly embedded in  $L^2(\partial \cB\cap\Omega_1)\oplus L^2(\partial \cB\cap\Omega_2)=L^2(\partial \cB)$. Hence 
$T_2\colon L^2(\dR^d)\rightarrow L^2(\partial \cB)$ is compact. We shall show below in Step II 
that the operator $T_1\colon L^2(\dR^d)\rightarrow L^2(\partial \cB)$
is bounded, so that by \eqref{Wfg5}
$$T_2^*T_1= W = (-\Delta_{\delta',\beta} - \lambda_0)^{-1} - (-\Delta_{\delta',\beta,\rm N} -\lambda_0)^{-1}$$
is a compact operator in $L^2(\dR^d)$. It then follows that the resolvent difference in \eqref{W3} is compact for 
all $\lambda\in\rho(-\Delta_{\delta',\beta})\cap\rho(-\Delta_{\delta',\beta,\rm N})$, see, e.g. \cite[Lemma 2.2]{BLL12a}.

\vskip 0.2cm
\noindent {\bf Step II.}
We verify that $T_1\colon L^2(\dR^d)\rightarrow L^2(\partial \cB)$ is bounded, which is essentially a consequence of
\cite[Theorem 4.18~(ii)]{McLean} and the $H^2$-regularity of the functions in $\dom(-\Delta_{\delta^\prime,\beta})$ near
$\partial\cB\cap\Omega_j$; cf. Lemma~\ref{h2bound}~(ii).
More precisely, let $0<s<t<1$ and let $\cB_s$ and $\cB_t$
be bounded domains with smooth boundaries such that
$$\overline \cB_0\subset\cB_s \subset\overline\cB_s\subset\cB \subset\overline\cB\subset\cB_t\subset\overline\cB_t\subset\cB_1.$$
Set $\cR_{j}:=(\cB_t\setminus\overline\cB_s)\cap\Omega_j$ and $\cS_j:=(\cB_1\setminus\overline\cB_0)\cap\Omega_j$, $j=1,2$.
Since $\Gamma=(\cB_1\setminus\overline\cB_0)\cap\Sigma$ is $C^{1,1}$ we conclude for $u=(-\Delta_{\delta',\beta} - \lambda_0)^{-1} f$ 
from \cite[Theorem 4.18~(ii)]{McLean} that
\begin{equation}\label{estiesti}
 \Vert u_j\vert_{\cR_j}\Vert_{H^2(\cR_j)}\leq C_j\bigl(\Vert u_j\vert_{\cS_j}\Vert_{H^1(\cS_j)} 
 +\Vert \partial_{\nu_j}u_j\vert_\Gamma\Vert_{H^{1/2}(\Gamma)}+\Vert f_j\vert_{\cS_j}\Vert_{L^2(\cS_j)}
 \bigr)
\end{equation}
holds for some constants $C_j$, $j=1,2$. The continuity of $(-\Delta_{\delta',\beta} - \lambda_0)^{-1}$ from $L^2(\dR^d)$
into $\dom\fra_{\delta',\beta}$ yields $\Vert u_j\vert_{\cS_j}\Vert_{H^1(\cS_j)}\leq C^\prime \Vert f\Vert_{L^2(\dR^d)}$ with some constant $C^\prime$.
Furthermore, the boundary conditions $\partial_{\nu_1}u_{1}|_{\Gamma} = \beta^{-1}(u_1|_{\Gamma} - u_2|_{\Gamma})=-\partial_{\nu_2}u_{2}|_{\Gamma}$, 
the continuity of the trace and of $(-\Delta_{\delta',\beta} - \lambda_0)^{-1}$ from $L^2(\dR^d)$
into $\dom\fra_{\delta',\beta}$ yields
\begin{equation*}
 \begin{split}
  \Vert \partial_{\nu_j}u_j\vert_\Gamma\Vert_{H^{1/2}(\Gamma)}& \leq D \bigl(\Vert u_1\vert_\Gamma\Vert_{H^{1/2}(\Gamma)} + 
\Vert u_2\vert_\Gamma\Vert_{H^{1/2}(\Gamma)}\bigr)\\
&\leq D^\prime \bigl(\Vert u_1\vert_{\cS_1}\Vert_{H^1(\cS_1)} +\Vert u_2\vert_{\cS_2}\Vert_{H^1(\cS_2)}\bigr)
\leq D^{\prime\prime} \Vert f\Vert_{L^2(\dR^d)}
 \end{split}
\end{equation*}
with some constants $D,D^\prime,D^{\prime\prime}$.
If $P_j$ denotes the orthogonal projection in $L^2(\dR^d)$ onto $L^2(\cR_j)$ then 
we conclude together with \eqref{estiesti} that 
$$\ran\bigl(P_j (-\Delta_{\delta',\beta} - \lambda_0)^{-1}\bigr)\subset H^2(\cR_j)$$ and that 
the operator $P_j (-\Delta_{\delta',\beta} - \lambda_0)^{-1}$ is bounded from
$L^2(\dR^d)$ into $H^2(\cR_j)$ for $j=1,2$. 
Hence $f\mapsto \partial_{\nu_{j1}}[(-\Delta_{\delta',\beta} - \lambda_0)^{-1}f]_{j1}\vert_{\partial\cB \cap\Omega_j}$
is bounded from $L^2(\dR^d)$ into $H^{1/2}(\partial\cB \cap\Omega_j)$, $j=1,2$, and, in particular, $T_1$ is bounded from 
$L^2(\dR^d)$ into $L^2(\partial\cB)$.

\vskip 0.2cm
\noindent {\bf Step III.}
As in Step I we introduce an auxiliary sesquilinear form by
\begin{equation*}
\begin{split}
\fra^\prime_{\delta',\beta^\prime,\rm N}[h,k]\! &:= \sum_{k,l=1}^{2} (\nabla h_{kl}, \nabla k_{kl})_{L^2(\Omega^\prime_{kl};\dC^d)} \\
& \qquad -\sum_{j=1}^2\big(\beta_j^{\prime\,-1}(h_{1j}|_{\Sigma^\prime_j} -h_{2j}|_{\Sigma^\prime_j}),
k_{1j}|_{\Sigma^\prime_j} -k_{2j}|_{\Sigma^\prime_j}\bigr)_{L^2(\Sigma^\prime_j)},\\
\dom\fra^\prime_{\delta',\beta^\prime,\rm N} &= \bigoplus_{k,l=1}^{2} H^1(\Omega^\prime_{kl}),
\end{split}
\end{equation*}
where $\Omega^\prime_{i1}:=\Omega^\prime_i\cap\cB$, $\Omega^\prime_{i2}:=\Omega^\prime_i\cap(\dR^d\setminus\overline\cB)$, $i=1,2$,
$g_{ij}, h_{ij}$ denote the corresponding restrictions of functions $g,h$, and
$\Sigma^\prime_1:=\Sigma^\prime\cap\cB$, $\Sigma^\prime_2:=\Sigma^\prime\cap(\dR^d\setminus\overline\cB)=\Sigma_2$.
The form $\fra_{\delta',\beta^\prime,\rm N}$ 
is closed, densely defined and semibounded from below,  and hence gives rise to a self-adjoint operator 
$-\Delta^\prime_{\delta',\beta^\prime,\rm N}$ in $L^2(\dR^d)$.
In the same way as in Step I and II one verifies that
\begin{equation}\label{compact3}
(-\Delta^\prime_{\delta',\beta^\prime} -\lambda)^{-1} - (-\Delta^\prime_{\delta',\beta^\prime,\rm N} -\lambda)^{-1}
\end{equation}
is compact for all $\lambda\in(-\Delta^\prime_{\delta',\beta^\prime})\cap(-\Delta^\prime_{\delta',\beta^\prime,\rm N})$.

\vskip 0.2cm
\noindent {\bf Step IV.}
Since the Lipschitz partitions $\cP$ and $\cP^\prime$ are local deformations of each other and Hypothesis~\ref{hypolocal}
holds the self-adjoint operators $-\Delta_{\delta',\beta,\rm N}$ and $-\Delta^\prime_{\delta',\beta^\prime,\rm N}$
from Steps I-III admit the direct sum decompositions
\[
-\Delta_{\delta',\beta,\rm N} = H_1\oplus H_2\quad\text{and}\quad -\Delta^\prime_{\delta',\beta^\prime,\rm N} = H_1\oplus  H_2^\prime
\]
with respect to the decomposition $L^2(\dR^d) = L^2(\dR^d\setminus \ov{\cB})\oplus L^2(\cB)$.
The operators $H_2$ and $H_2^\prime$ acting in $L^2(\cB)$ have
compact resolvents in view of the compact embeddings of the spaces $H^1(\cB\cap\Omega_1)\oplus H^1(\cB\cap\Omega_2)$ and
$H^1(\cB\cap\Omega^\prime_1)\oplus H^1(\cB\cap\Omega^\prime_2)$ into $L^2(\cB)$. This implies the compactness of
\begin{equation*}
(-\Delta_{\delta',\beta,\rm N} -\lambda)^{-1} - (-\Delta^\prime_{\delta',\beta^\prime,\rm N}-\lambda)^{-1}
\end{equation*}
for all $\lambda\in\rho(-\Delta_{\delta',\beta,\rm N})\cap\rho(-\Delta^\prime_{\delta',\beta^\prime,\rm N})$
and hence assertion (ii) follows together with the compactness of the resolvent differences in \eqref{W3} and \eqref{compact3}.
\end{proof}

The following corollary is an immediate consequence of Theorem~\ref{thm:compact} and the fact that for the Lipschitz partition $\cP^\prime=\{\dR^d_+,\dR^d_-\}$ 
and constants $\alpha,\beta>0$ the essential spectra of $-\Delta^\prime_{\delta,\alpha}$ and $-\Delta^\prime_{\delta',\beta}$ can be computed
by separation of variables:
\begin{equation*}
 \begin{split}
  \sigma(-\Delta^\prime_{\delta,\alpha}) &= \sess(-\Delta^\prime_{\delta,\alpha})= [-\alpha^2/4,\infty),\\
  \sigma(-\Delta^\prime_{\delta^\prime,\beta}) &= \sess(-\Delta^\prime_{\delta^\prime,\beta})= [-4/\beta^2 ,\infty).
 \end{split}
\end{equation*}

\begin{cor}\label{cor123}
Let $\cP = \{\Omega_k\}_{k=1}^n$ be a local deformation of the Lipschitz partition $\cP^\prime=\{\dR^d_+,\dR^d_-\}$ of $\dR^d$ 
and let $\alpha,\beta > 0$ be constant.
Then the essential spectra of $-\Delta_{\delta,\alpha}$ and $-\Delta_{\delta',\beta}$ are given by
\[
\sigma_{\rm ess}(-\Delta_{\delta,\alpha}) = [-\alpha^2/4,\infty)\quad\text{and}\quad
\sigma_{\rm ess}(-\Delta_{\delta',\beta}) = [-4/\beta^2 ,\infty).
\]
\end{cor}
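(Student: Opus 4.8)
The plan is to deduce the corollary from the deformation-invariance of the essential spectrum established in Theorem~\ref{thm:compact}, combined with an explicit separation-of-variables computation for the undeformed half-space partition $\cP^\prime = \{\dR^d_+, \dR^d_-\}$. The strategy splits into two independent pieces: verifying that the hypotheses of Theorem~\ref{thm:compact} are met for the pair $(\cP, \cP^\prime)$, and recording $\sess(-\Delta^\prime_{\delta,\alpha})$ and $\sess(-\Delta^\prime_{\delta',\beta})$ for the flat reference configuration.

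First I would check that Hypothesis~\ref{hypolocal} applies. By assumption $\cP$ is a local deformation of $\cP^\prime$, so there is a bounded domain $\cB$ with $\Sigma \setminus \cB = \Sigma^\prime \setminus \cB$. Since the reference boundary $\Sigma^\prime = \{x_d = 0\}$ is a hyperplane, it is $C^\infty$, and hence for any choice of nested bounded domains $\overline\cB_0 \subset \cB \subset \overline\cB \subset \cB_1$ the set $\Gamma = (\cB_1 \setminus \overline\cB_0) \cap \Sigma^\prime$ consists of $C^{1,1}$ (indeed smooth) components of a Lipschitz dissection. Because $\alpha$ and $\beta$ are constant on both partitions, the matching conditions $\alpha|_{\Sigma \setminus \cB_0} = \alpha^\prime|_{\Sigma^\prime \setminus \cB_0}$ and $\beta|_{\Sigma \setminus \cB_0} = \beta^\prime|_{\Sigma^\prime \setminus \cB_0}$ hold trivially, as do the regularity requirements $\alpha|_\Gamma, \beta^{-1}|_\Gamma \in C^1(\Gamma)$. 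Theorem~\ref{thm:compact} then yields that the relevant resolvent differences are compact, whence $\sess(-\Delta_{\delta,\alpha}) = \sess(-\Delta^\prime_{\delta,\alpha})$ and $\sess(-\Delta_{\delta',\beta}) = \sess(-\Delta^\prime_{\delta',\beta})$.

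Second, I would record the flat-case spectra by separating the transverse variable. Writing $x = (x^\prime, x_d) \in \dR^{d-1} \times \dR$ and applying the Fourier transform in $x^\prime$, the operator $-\Delta^\prime_{\delta,\alpha}$ decomposes as a direct integral whose fibres are the one-dimensional Schr\"odinger operator $-d^2/dx_d^2$ on $\dR$ with a $\delta$-interaction at the origin, shifted by the nonnegative transverse energy $|\xi^\prime|^2$. The one-dimensional $\delta$-operator has the single negative eigenvalue $-\alpha^2/4$ (eigenfunction $e^{-\alpha|x_d|/2}$) and continuous spectrum $[0,\infty)$; adding $|\xi^\prime|^2 \ge 0$ over all fibres gives $\sigma(-\Delta^\prime_{\delta,\alpha}) = \sess(-\Delta^\prime_{\delta,\alpha}) = [-\alpha^2/4, \infty)$. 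The $\delta'$-case is identical in structure: the fibre operator now carries the boundary conditions of Theorem~\ref{thm:delta}~(ii), whose unique bound state $\mathrm{sign}(x_d)\,e^{-2|x_d|/\beta}$ lies at energy $-4/\beta^2$, so that $\sess(-\Delta^\prime_{\delta',\beta}) = [-4/\beta^2, \infty)$. Combining the two steps gives the asserted identities for the deformed operators.

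I expect no serious obstacle here: the content of the corollary lies entirely in Theorem~\ref{thm:compact}, and the remaining work is the routine verification of Hypothesis~\ref{hypolocal} (immediate because the reference boundary is smooth and the coefficients are constant) together with the standard one-dimensional $\delta$- and $\delta'$-interaction computations. The only point requiring a moment of care is the sign of the $\delta'$ boundary condition: with the paper's convention the relevant one-dimensional problem has normal derivative continuous across $x_d = 0$ and a jump in the function proportional to $\beta$, which is exactly what produces the bound-state energy $-4/\beta^2$ for $\beta > 0$.
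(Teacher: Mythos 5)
Your proposal is correct and takes essentially the same route as the paper, which obtains Corollary~\ref{cor123} as an immediate consequence of Theorem~\ref{thm:compact} combined with the separation-of-variables identities $\sigma(-\Delta^\prime_{\delta,\alpha})=\sess(-\Delta^\prime_{\delta,\alpha})=[-\alpha^2/4,\infty)$ and $\sigma(-\Delta^\prime_{\delta^\prime,\beta})=\sess(-\Delta^\prime_{\delta^\prime,\beta})=[-4/\beta^2,\infty)$ for the flat partition $\cP^\prime=\{\dR^d_+,\dR^d_-\}$. Your explicit verification of Hypothesis~\ref{hypolocal} (smoothness of the hyperplane, constant coefficients) and of the one-dimensional fibre computations merely spells out details the paper leaves implicit, and both are carried out correctly, including the sign conventions for the $\delta$ and $\delta^\prime$ boundary conditions.
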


The next corollary is a consequence of Theorem~\ref{thm:variation}, Corollary~\ref{thm:main2} and Corollary~\ref{cor123}.

\begin{cor}\label{cor123123}
Let $\cP = \{\Omega_k\}_{k=1}^n$ be a local deformation of the Lipschitz partition $\cP^\prime=\{\dR^d_+,\dR^d_-\}$, 
assume that the chromatic number of $\cP$ is $\chi=2$ and that the constants $\alpha,\beta > 0$ satisfy 
$$\beta = \frac{4}{\alpha},\quad\text{and hence}\quad\sigma_{\rm ess}(-\Delta_{\delta,\alpha}) = \sigma_{\rm ess}(-\Delta_{\delta',\beta})=[-\alpha^2/4,\infty).$$
Let $\{\lambda_{k}(-\Delta_{\delta,\alpha})\}_{k=1}^{\infty}$ and
$\{\lambda_{k}(-\Delta_{\delta',\beta})\}_{k=1}^{\infty}$ be the 
eigenvalues of $-\Delta_{\delta,\alpha}$ and $-\Delta_{\delta',\beta}$ below $-\alpha^2/4$,
respectively, 
and let $N(-\Delta_{\delta,\alpha})$ and $N(-\Delta_{\delta',\beta})$ be their total multiplicities as in Definition~\ref{dfn:spec1}.
Then the following statements hold:
\begin{itemize}\setlength{\itemsep}{1.2ex}
\item [\rm (i)] $\lambda_k(-\Delta_{\delta',\beta})\le \lambda_k(-\Delta_{\delta,\alpha})$ for all $k\in\dN$;
\item[\rm (ii)] $N(-\Delta_{\delta,\alpha})\leq N(-\Delta_{\delta',\beta})$.
\end{itemize}
\end{cor}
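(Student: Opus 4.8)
The plan is to read Corollary~\ref{cor123123} as a direct application of the operator inequality of Theorem~\ref{thm:main}, specialized to $\chi=2$ in Corollary~\ref{cor2}, combined with the variational statements of Theorem~\ref{thm:variation} and the essential-spectrum computation of Corollary~\ref{cor123}; the only subtlety is transporting the variational conclusions back across the conjugating unitary.

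First I would check that the hypotheses of the operator inequality are satisfied. Since $\chi=2$ and the constants obey $\beta = 4/\alpha$, the assumption $0 < \beta \le \tfrac{4}{\alpha}\sin^2(\pi/2) = \tfrac{4}{\alpha}$ of Corollary~\ref{cor2} holds (this is precisely the boundary case, in which equality is permitted). Hence Corollary~\ref{cor2} provides a unitary operator $U$ on $L^2(\dR^d)$ with
\[
U^{-1}(-\Delta_{\delta',\beta})U \le -\Delta_{\delta,\alpha}.
\]
I set $H_2 := U^{-1}(-\Delta_{\delta',\beta})U$ and $H_1 := -\Delta_{\delta,\alpha}$, so that $H_2 \le H_1$ in the sense of Definition~\ref{dfn:forms}, and both operators are bounded from below.

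Next I would record the consequences of unitary equivalence. As $H_2$ is unitarily equivalent to $-\Delta_{\delta',\beta}$, these two operators share their spectrum, their essential spectrum, and the eigenvalues with multiplicities below the bottom of the essential spectrum; in particular $\lambda_k(H_2) = \lambda_k(-\Delta_{\delta',\beta})$ for all $k$, $\min\sess(H_2) = \min\sess(-\Delta_{\delta',\beta})$, and $N(H_2) = N(-\Delta_{\delta',\beta})$. Invoking Corollary~\ref{cor123}, and using that $\beta = 4/\alpha$ gives $-4/\beta^2 = -\alpha^2/4$, I obtain
\[
\min\sess(H_1) = \min\sess(-\Delta_{\delta,\alpha}) = \min\sess(-\Delta_{\delta',\beta}) = \min\sess(H_2) = -\frac{\alpha^2}{4}.
\]

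Finally I would apply Theorem~\ref{thm:variation} to the pair $H_2 \le H_1$. Part (i) yields $\lambda_k(H_2) \le \lambda_k(H_1)$ for all $k\in\dN$, which after substituting the unitary-invariant identities is exactly claim (i), namely $\lambda_k(-\Delta_{\delta',\beta}) \le \lambda_k(-\Delta_{\delta,\alpha})$. Since the bottoms of the essential spectra coincide, part (iii) gives $N(H_1) \le N(H_2)$, i.e. claim (ii), $N(-\Delta_{\delta,\alpha}) \le N(-\Delta_{\delta',\beta})$. There is no genuine analytic obstacle; the only points deserving a word of care are that the boundary case $\beta = 4/\alpha$ still falls under the hypothesis of Corollary~\ref{cor2}, and that the ordered eigenvalue sequence and the counting function $N$ are invariant under the conjugation by $U$, both of which are immediate from unitary equivalence of the associated spectral projections.
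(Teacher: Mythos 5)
Your proposal is correct and follows essentially the same route as the paper, which derives the corollary from Theorem~\ref{thm:variation}, Corollary~\ref{thm:main2} (the spectral consequence of the main operator inequality, which you re-derive in the $\chi=2$ case via Corollary~\ref{cor2}), and the essential-spectrum identity of Corollary~\ref{cor123}. Your explicit verification that the boundary case $\beta=4/\alpha$ is admissible and that the eigenvalue sequence and counting function pass through the conjugating unitary is exactly the (routine) content the paper leaves implicit.
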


\subsection{Locally deformed partitions of $\dR^2$ and $\dR^3$}\label{sec23case}
In this subsection special attention is paid to bound states of $-\Delta_{\delta,\alpha}$ and $-\Delta_{\delta',\beta}$
induced by local deformations of certain Lipschitz partitions of $\dR^2$ and $\dR^3$. We first
characterize the essential spectra of $-\Delta_{\delta,\alpha}$ and 
$-\Delta_{\delta',\beta}$ 
associated with partitions, which are local deformations of a partition $\{\Omega,\dR^2\setminus\ov\Omega\}$ with $\Omega$ being a wedge, see Figure~\ref{fig5}. 
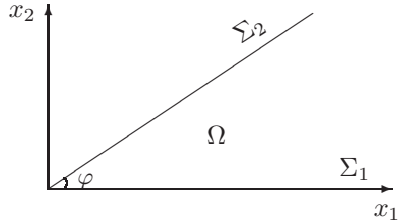
\begin{figure}[H]
\begin{center}
\begin{picture}(200,80)
\put(20,10){\vector(1,0){130}}
\put(20,10){\vector(0,1){70}}
\put(20,10){\line(3,2){100}}
\put(80,27){$\Omega$}
\put(89,65){\begin{turn}{33.3}$\Sigma_2$\end{turn}}
\put(130,15){\begin{turn}{0}$\Sigma_1$\end{turn}}
\put(143,0){\begin{turn}{0}$x_1$\end{turn}}
\put(5,75){\begin{turn}{0}$x_2$\end{turn}}
\qbezier(26,14)(28,13)(26,10)
\put(31,12){\small{$ \varphi$}}
\end{picture}
\end{center}
\caption{A wedge $\Omega\subset\dR^2$ with angle $\varphi\in(0,\pi]$ and boundary consisting of the two rays $\Sigma_1$ and $\Sigma_2$;
the axis $x_1$ coincides with the ray $\Sigma_1$.}
\label{fig5}
\end{figure}
\begin{thm}\label{thm:essspec}
Let $\cP = \{\Omega_k\}_{k=1}^n$ be a local deformation of the Lipschitz partition $\cP' = \{\Omega,\dR^2\setminus\ov{\Omega}\}$ of $\dR^2$, 
where $\Omega$ is a wedge in $\dR^2$ and let $\alpha,\beta > 0$ be constant.
Then the essential spectra of $-\Delta_{\delta,\alpha}$ and $-\Delta_{\delta',\beta}$ are given by
\[
\sigma_{\rm ess}(-\Delta_{\delta,\alpha}) = [-\alpha^2/4,\infty)\quad\text{and}\quad
\sigma_{\rm ess}(-\Delta_{\delta',\beta}) = [-4/\beta^2 ,\infty).
\]
\end{thm}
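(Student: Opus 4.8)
The plan is to reduce the assertion to the undeformed wedge partition $\cP'=\{\Omega,\dR^2\setminus\ov\Omega\}$ and then compute the essential spectra for the wedge directly. Since $\cP$ is a local deformation of $\cP'$ and $\alpha,\beta>0$ are constant, the coefficient and regularity requirements of Theorem~\ref{thm:compact} are trivially met once one chooses the bounded domain $\cB$ in Hypothesis~\ref{hypolocal} large enough to contain both the vertex of the wedge and the deformation region; then $\partial\cB$ meets $\Sigma'$ only on the straight, hence $C^{1,1}$, parts of the two rays. Thus Theorem~\ref{thm:compact} yields $\sess(-\Delta_{\delta,\alpha})=\sess(-\Delta'_{\delta,\alpha})$ and $\sess(-\Delta_{\delta',\beta})=\sess(-\Delta'_{\delta',\beta})$, where the primed operators are associated with the wedge. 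It therefore suffices to prove $\sess(-\Delta'_{\delta,\alpha})=[-\alpha^2/4,\infty)$ and $\sess(-\Delta'_{\delta',\beta})=[-4/\beta^2,\infty)$; I would treat the $\delta$-case in detail, the $\delta'$-case being completely analogous with the transverse one-dimensional $\delta$-interaction replaced by a $\delta'$-interaction whose lowest spectral point is $-4/\beta^2$ instead of $-\alpha^2/4$.

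For the inclusion $[-\alpha^2/4,\infty)\subseteq\sess(-\Delta'_{\delta,\alpha})$ I would construct singular (Weyl) sequences localised far out along one ray. Fixing coordinates so that $\Sigma_1$ is the positive $x_1$-axis, writing $\lambda=-\alpha^2/4+k^2$ with $k\ge 0$, I set $u_m(x_1,x_2):=\chi_m(x_1)\,e^{ikx_1}\,e^{-\alpha|x_2|/2}$, where $\chi_m$ is a plateau cut-off supported in $[m,2m]$. Since $e^{-\alpha|x_2|/2}$ is the ground state of the transverse one-dimensional $\delta$-interaction with eigenvalue $-\alpha^2/4$, the function $u_m$ satisfies the $\delta$-boundary condition along $\Sigma_1$, and a direct computation gives $\|(-\Delta'_{\delta,\alpha}-\lambda)u_m\|=o(\|u_m\|)$; the exponential transverse decay makes the overlap of $\supp u_m$ with the second ray $\Sigma_2$ (at distance $\sim m\sin\varphi$) exponentially small, so cutting it off changes nothing to leading order. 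As $u_m\rightharpoonup 0$, this is a singular sequence for every $\lambda\ge -\alpha^2/4$, whence the whole half-line lies in $\sess$.

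For the reverse inclusion I would show $\inf\sess(-\Delta'_{\delta,\alpha})\ge -\alpha^2/4$ by a Persson-type characterisation of the bottom of the essential spectrum: it suffices to bound $\fra_{\delta,\alpha}[f]\ge-(\alpha^2/4+o(1))\|f\|^2$ for all $f$ supported in $\{|x|>R\}$ as $R\to\infty$. On this set $\Sigma$ consists of the two disjoint rays, and I introduce a smooth, purely angular partition of unity $\{\eta_0,\eta_1,\eta_2\}$ separating a neighbourhood of $\Sigma_1$, a neighbourhood of $\Sigma_2$, and the interaction-free sectors; since each $\eta_j$ depends on the angle only, $|\nabla\eta_j|\le C/|x|\le C/R$ on $\{|x|>R\}$. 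The IMS localisation identity gives $\fra_{\delta,\alpha}[f]=\sum_j\fra_{\delta,\alpha}[\eta_j f]-\sum_j\||\nabla\eta_j|f\|^2$ with error $\le(C^2/R^2)\|f\|^2$. The interaction-free term obeys $\fra_{\delta,\alpha}[\eta_0 f]=\|\nabla(\eta_0 f)\|^2\ge 0$, while for the term near $\Sigma_1$ I freeze the longitudinal variable and slice transversally: on each slice $\eta_1 f$ lies in $H^1_0$ of an interval containing the trace point, so extending by zero and using the whole-line bound $\int_\dR|g'|^2-\alpha|g(0)|^2\ge-\tfrac{\alpha^2}{4}\int_\dR|g|^2$ (Dirichlet bracketing) yields $\fra_{\delta,\alpha}[\eta_1 f]\ge-\tfrac{\alpha^2}{4}\|\eta_1 f\|^2$ after discarding the nonnegative longitudinal kinetic energy; likewise for $\Sigma_2$. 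Summing over $j$ gives $\fra_{\delta,\alpha}[f]\ge-(\alpha^2/4+C^2/R^2)\|f\|^2$, and $R\to\infty$ proves the bound.

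The main obstacle is this last step: controlling the operator near the wedge vertex and along the two rays simultaneously. The device that makes it work is the combination of a purely angular partition of unity, whose localisation error is $O(1/R^2)$ on $\{|x|>R\}$ and hence invisible to the essential spectrum, with the elementary one-dimensional transverse estimate, the latter being applicable slice-by-slice precisely because the angular cut-off forces Dirichlet endpoints and so only raises the one-dimensional energy. For the $\delta'$-case the only change is that the transverse comparison operator is the one-dimensional $\delta'$-interaction, for which the analogous sharp bound $\int_{\dR\setminus\{0\}}|g'|^2-\beta^{-1}|g(0^+)-g(0^-)|^2\ge-\tfrac{4}{\beta^2}\int_\dR|g|^2$ replaces the $\delta$-estimate, and $f$ is now allowed to jump across $\Sigma$; both the Weyl-sequence construction and the localisation argument carry over verbatim.
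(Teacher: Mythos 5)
Your proposal is correct, and the first half coincides with the paper's proof: the reduction to the undeformed wedge via Theorem~\ref{thm:compact} (including the observation that $\partial\cB$ can be arranged to meet $\Sigma'$ only on the straight parts of the rays) and the singular sequences with transverse profile $e^{-\alpha|x_2|/2}$, resp.\ $\sign(x_2)e^{-2|x_2|/\beta}$, pushed to infinity along one ray are exactly what the paper does in Step~II of its proof of Theorem~\ref{thm:essspec} (for the $\delta$-case the paper simply cites \cite[Proposition 5.4]{EN03}, which you instead prove directly). Where you genuinely diverge is the lower bound $\min\sess\ge -\alpha^2/4$, resp.\ $-4/\beta^2$: the paper uses Neumann bracketing, decomposing $\dR^2$ into two half-strips glued along the interaction, a bounded piece and two interaction-free wedges, so that the decoupled operator is an explicit orthogonal sum whose essential spectrum starts at the principal eigenvalue $\varepsilon(\beta,l)$ of a one-dimensional Neumann problem on $(-l,l)$ with a $\delta'$-point interaction, and then invokes \cite[Lemma 3.3]{EJ13} to let $l\to\infty$; you instead use an angular IMS partition of unity with localisation error $O(1/R^2)$ on $\{|x|>R\}$ plus a slicewise one-dimensional estimate with Dirichlet endpoints forced by the angular cut-off. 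Your route is more self-contained (it needs neither the eight-domain decomposition nor the limit result of \cite{EJ13}, and the sharp whole-line bounds $-\alpha^2/4$ and $-4/\beta^2$ are elementary), but it rests on a Persson-type characterisation of $\inf\sess$ for Schr\"odinger operators with $\delta$ and $\delta'$-interactions on unbounded hypersurfaces, which you assert rather than prove; this is not off-the-shelf for forms with domain $\bigoplus_k H^1(\Omega_k)$ and would itself need an argument (it can be supplied with the same IMS tools together with compactness of the traces on the compact part, in the spirit of Theorem~\ref{resdiff1}), whereas the paper's bracketing sidesteps Persson entirely because the comparison operator's essential spectrum is computed exactly and the form inequality $\fra_{\delta',\beta,\rm N}'\le\fra_{\delta',\beta}'$ plus Theorem~\ref{thm:variation} does the rest. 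One small point you handled correctly but should make explicit in a write-up: without the transverse cut-off your $u_m$ fails the boundary condition on the second ray (so it is not in the operator domain); with the cut-off, as in the paper's $\varphi_2$, the extra error is exponentially small and the argument goes through.
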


\begin{proof}
According to Theorem~\ref{thm:compact} it suffices to show the statements for the operators $-\Delta^\prime_{\delta,\alpha}$ 
and $-\Delta^\prime_{\delta',\beta}$ associated with
the Lipschitz partition $\cP' = \{\Omega,\dR^2\setminus\ov{\Omega}\}$. In fact, the assertion for 
$-\Delta^\prime_{\delta,\alpha}$ can be found in \cite[Proposition 5.4]{EN03}, and hence we verify 
$\sigma_{\rm ess}(-\Delta^\prime_{\delta',\beta}) = [-4/\beta^2 ,\infty)$ only.

\noindent {\bf Step I.}
Decompose $\dR^2$ into eight domains as in Figure~\ref{fig4}, 
where $\Omega_1$, $\Omega_1'$, $\Omega_2$, $\Omega_2'$ coincide (up to rotations and translations) with $[0,l]\times \dR_+$ for some $l>0$;
 $\Omega_3$, $\Omega_3'$ are bounded Lipschitz domains, and $\Omega_4$ and $\Omega_5$ are wedges with angles $\varphi$ and $2\pi - \varphi$, respectively.
We choose this partition in such a way that $\Omega$ coincides with $\Omega_1\cup\Omega_2\cup\Omega_3\cup\Omega_4$ up to a set of 
Lebesgue measure zero.
\begin{figure}[H]
\begin{center}
\begin{picture}(200,160)
\thinlines
\multiput(100,70)(-8,16){5}{\line(-1,2){6}}
\multiput(100,70)(8,16){5}{\line(1,2){6}}
\multiput(100,10)(-8,16){8}{\line(-1,2){6}}
\multiput(100,10)(8,16){8}{\line(1,2){6}}
\multiput(97,68)(-5,-3){5}{\begin{turn}{32.3}\line(1,0){3}\end{turn}}
\multiput(99,70)(5,-3){5}{\begin{turn}{-32.3}\line(1,0){3}\end{turn}}
\thicklines
\put(100,40){\line(-1,2){49}}
\put(100,40){\line(1,2){49}}
\put(95,100){$\Omega_4$}
\put(15,50){$\Omega_5$}
\put(60,90){$\Omega_1'$}
\put(72,99){$\Omega_1$}
\put(118,100){$\Omega_2$}
\put(128,90){$\Omega_2'$}
\put(95,56){$\Omega_3$}
\put(95,27){$\Omega_3'$}
\end{picture}
\end{center}
\caption{A partition of $\dR^2$ into eight domains. The wedge $\Omega$ coincides with $\Omega_1\cup\Omega_2\cup\Omega_3\cup\Omega_4$ up 
to a set of Lebesgue measure zero.}
\label{fig4} 
\end{figure}
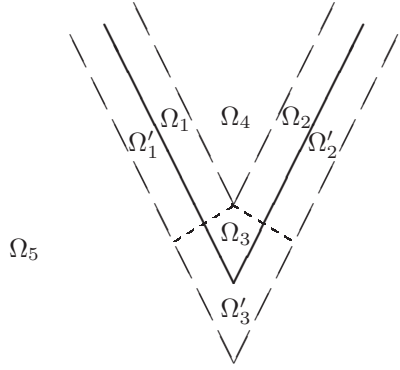
Let $\cP$ be the corresponding partition and set $\Sigma_k := \ov{\Omega_k}\cap\ov{\Omega_k'}$ for $k=1,2,3$. Clearly,
$\partial\Omega=\Sigma_1\cup\Sigma_2\cup\Sigma_3$.
Observe that such a decomposition can be constructed for any $l >0$. We use the notation $f_\Omega := f|_{\Omega}$. Consider the quadratic form
\[
\begin{split}
\fra_{\delta',\beta, \rm N}'[f] &:=\sum_{\Omega\in\cP} \|\nabla f_\Omega\|^2_{L^2(\Omega;\dC^d)} -\sum_{k=1}^3\beta^{-1}\|f_{\Omega_k}|_{\Sigma_k}- f_{\Omega_k'}|_{\Sigma_{k}}\|^2_{L^2(\Sigma_{k})},\\
\dom\fra_{\delta',\beta, \rm N}' &:= \oplus_{\Omega\in\cP}H^1(\Omega)
\end{split}
\]
Similarly as in the proof of Proposition~\ref{prop:deltaform} one verifies that the form
$\fra_{\delta',\beta, \rm N}'$ is closed, densely defined, symmetric and semibounded from below. 
The corresponding self-adjoint operator $-\Delta_{\delta',\beta, \rm N}'$ can be decomposed into the orthogonal sum
of five self-adjoint operators
\begin{equation}
\label{orth}
-\Delta_{\delta',\beta, \rm N}' = \bigoplus_{k=1}^5 H_k,
\end{equation}
where $H_i$ acts in $L^2(\Omega_i)\oplus L^2(\Omega_i')$, $i=1,2,3$, and $H_4$ and $H_5$ are the 
self-adjoint Neumann Laplacians on the wedges $\Omega_4$ and $\Omega_5$
in $L^2(\Omega_4)$ and $L^2(\Omega_5)$, respectively. 
Hence we have
\begin{equation}
\label{H12}
\sess(H_4) = \sess(H_5) = [0,+\infty). 
\end{equation}
The operator $H_3$ acts on a bounded domain and in view of the compact embedding
of the space $H^1(\Omega_3)\oplus H^1(\Omega_3')$ into $L^2(\Omega_3)\oplus L^2(\Omega_3')$ we obtain
\begin{equation}
\label{H5}
\sess(H_3) = \varnothing.
\end{equation}
Separation of variables shows that
the essential spectra of the operators $H_1$ and $H_2$ have the form
\begin{equation}
\label{H34}
\sess(H_1) = \sess(H_2) = [\varepsilon(\beta,l),+\infty),
\end{equation}
where $\varepsilon(\beta,l)$ is the principal eigenvalue of the self-adjoint one-dimensional Schr\"o\-din\-ger operator on the interval $(-l,l)$ with
Neumann boundary conditions at the endpoints $-l$ and $l$, and a $\delta'$-interaction of strength $\beta$ at the origin. According to \cite[Lemma 3.3]{EJ13}
\begin{equation}
\label{EJ}
\varepsilon(\beta,l) < -\frac{4}{\beta^2}\quad\text{and}\quad \lim\limits_{l\rightarrow +\infty} \varepsilon(\beta,l) = -\frac{4}{\beta^2}.
\end{equation}
From the decomposition \eqref{orth} and the characterizations \eqref{H12}, \eqref{H5}, \eqref{H34} we conclude
\[
\sess(-\Delta_{\delta',\beta, \rm N}') = [\varepsilon(\beta,l),+\infty).
\]
Clearly, $
\fra_{\delta',\beta, \rm N}' \le \fra_{\delta',\beta}'$
holds in the sense of Definition~\ref{dfn:forms} and hence 
\[
\min\sess(-\Delta_{\delta',\beta}')\ge \varepsilon(\beta,l)
\]
by Theorem~\ref{thm:variation}\,(ii).
As we noted above, the construction in the proof can be realized for any $l >0$. Thus by \eqref{EJ}
\[
\min\sess(-\Delta_{\delta',\beta}')\ge -\frac{4}{\beta^2}.
\]
{\bf Step II.} In view of Step I it suffices to show that
for any $\lambda\in [-4/\beta^2,+\infty)$ there exists a  singular sequence for the operator $-\Delta_{\delta',\beta}'$ corresponding to $\lambda$. Let us fix the axes $(x_1,x_2)$ such that the axis $x_1$ coincides with the side $\Sigma_1$ of the wedge $\Omega$, see Figure~\ref{fig5}.
Let us fix two functions $\varphi_1, \varphi_2 \in C^\infty_0([0,\infty))$ with $\supp \varphi_1$ and $\supp \varphi_1$ in $[0,2)$ such that $\varphi_1(x) = \varphi_2(x) =1$ in the vicinity of the point $x = 0$
and $0\le \varphi_2(x) \le 1$. Consider the sequence of functions
\[
\psi_{n,p}(x) := \frac{1}{\sqrt{n}}\varphi_1\Big(\frac{1}{n}|x_1-x_{1}^{(n)}|\Big)\varphi_2\Big(\frac{1}{n}|x_2|\Big)
\sign(x_2)e^{-2|x_2|/\beta} e^{ipx_1},\quad n\in\dN,
\]
where $p \ge 0$ is arbitrary and the sequence  $\{x^{(n)}_1\}$ tends to $+\infty$ sufficiently fast, so that the sequence of the supports 
$\supp \psi_{n,p}$ does not intersect the ray $\Sigma_2$ of the wedge. We denote
by $\psi_{n,p,\Omega}$ and $\psi_{n,p,\dR^2\setminus\ov\Omega}$ the restriction of $\psi_{n,p}$ onto $\Omega$ and $\dR^2\setminus\ov\Omega$,
respectively. Computing the traces of $\psi_{n,p}$ from both sides of $\Sigma_1$ we find
\[
\partial_\nu\psi_{n,p, \Omega}|_{\Sigma_1} = \frac{2}{\beta}\frac{1}{\sqrt{n}}\varphi_1\Big(\frac{1}{n}|x_1-x_{1}^{(n)}|\Big)e^{ipx_1} =
\frac{2}{\beta}\psi_{n,p, \Omega}|_{\Sigma_1} =  -\frac{2}{\beta}\psi_{n,p, \dR^2\setminus\ov\Omega}|_{\Sigma_1}
\]
with the normal $\nu$ pointing outwards of $\Omega$.
Thus we conclude from Theorem~\ref{thm:delta}\,(ii) that the functions $\psi_{n,p}$ are in $\dom(-\Delta_{\delta',\beta}')$. Obviously, the sequence of the functions $\{\psi_{n,p}\}$ converges weakly to zero. 
Moreover, with the help of the dominated convergence theorem we get 
\[
\lim\limits_{n\rightarrow\infty}\|\psi_{n,p}\|^2 = \|\varphi_1\|_{L^2(\dR)}^2\int_{\dR}e^{-4|x|/\beta}dx = \frac{\beta}{2}\|\varphi_1\|_{L^2(\dR)}^2\ne 0.
\]
One can check via direct computation that
\[
-\Delta_{\delta',\beta}' \psi_{n,p} = \Big(-\frac{4}{\beta^2} + p^2\Big)\psi_{n,p} + O\Big(\frac{1}{n}\Big),\quad n\rightarrow\infty,
\]
which yields
\begin{equation}
\label{limit}
\|(-\Delta_{\delta',\beta}' + 4/\beta^2 - p^2)\psi_{n,p}\|_{L^2(\dR^2)}\rightarrow 0,\quad n\rightarrow +\infty.
\end{equation}
Therefore, the sequence
\[
\wt\psi_{n,p} := \frac{\psi_{n,p}}{\|\psi_{n,p}\|},\quad n\in\dN,
\]
is a singular sequence for the operator $-\Delta_{\delta',\beta}'$
corresponding to the point $-4/\beta^2 + p^2$.
Since the choice of $p$ is arbitrary, the claim is proven.
\end{proof}

The next corollary is a consequence of Theorems~\ref{thm:variation} and~\ref{thm:essspec}.

\begin{cor}
\label{cor1234}
Let $\cP = \{\Omega_k\}_{k=1}^n$ be a local deformation of the Lipschitz partition $\cP' = \{\Omega,\dR^2\setminus\ov{\Omega}\}$ 
with $\Omega$ being a wedge, assume that the chromatic number of $\cP$ is $\chi=2$ and that the constants $\alpha,\beta > 0$ satisfy 
$$\beta = \frac{4}{\alpha},\quad\text{and hence}\quad\sigma_{\rm ess}(-\Delta_{\delta,\alpha}) = \sigma_{\rm ess}(-\Delta_{\delta',\beta})=[-\alpha^2/4,\infty).$$
Let $\{\lambda_{k}(-\Delta_{\delta,\alpha})\}_{k=1}^{\infty}$ and
$\{\lambda_{k}(-\Delta_{\delta',\beta})\}_{k=1}^{\infty}$ be the eigenvalues of $-\Delta_{\delta,\alpha}$ and $-\Delta_{\delta',\beta}$
below $-\alpha^2/4$,
respectively, 
and let $N(-\Delta_{\delta,\alpha})$ and $N(-\Delta_{\delta',\beta})$ be their total multiplicities as in Definition~\ref{dfn:spec1}.
Then the following statements hold:
\begin{itemize}\setlength{\itemsep}{1.2ex}
\item [\rm (i)] $\lambda_k(-\Delta_{\delta',\beta})\le \lambda_k(-\Delta_{\delta,\alpha})$ for all $k\in\dN$;
\item[\rm (ii)] $N(-\Delta_{\delta,\alpha})\leq N(-\Delta_{\delta',\beta})$.
\end{itemize}
\end{cor}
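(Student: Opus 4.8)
The plan is to read off the statement as a direct synthesis of Theorem~\ref{thm:essspec}, the operator inequality of Theorem~\ref{thm:main} (in the form of Corollary~\ref{cor2}), and the variational principle of Theorem~\ref{thm:variation}. The key observation driving everything is that the equality $\beta = 4/\alpha$ plays a double role: on the one hand it makes the threshold of the $\delta'$-essential spectrum coincide with that of the $\delta$-essential spectrum, since $-4/\beta^2 = -\alpha^2/4$; on the other hand it sits exactly at the boundary of the admissible region for $\chi = 2$, because $\tfrac{4}{\alpha}\sin^2(\pi/2) = \tfrac{4}{\alpha} = \beta$.

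First I would invoke Theorem~\ref{thm:essspec} to obtain $\sess(-\Delta_{\delta,\alpha}) = [-\alpha^2/4,\infty)$ and $\sess(-\Delta_{\delta',\beta}) = [-4/\beta^2,\infty)$; substituting $\beta = 4/\alpha$ gives $\sess(-\Delta_{\delta,\alpha}) = \sess(-\Delta_{\delta',\beta}) = [-\alpha^2/4,\infty)$, so in particular the two thresholds $\min\sess$ agree. Next, since $\cP$ has chromatic number $\chi = 2$ and $0 < \beta \le 4/\alpha$, Corollary~\ref{cor2} (equivalently Theorem~\ref{thm:main}) furnishes a unitary operator $U$ on $L^2(\dR^2)$ with $U^{-1}(-\Delta_{\delta',\beta})U \le -\Delta_{\delta,\alpha}$. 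I would then set $H_1 := -\Delta_{\delta,\alpha}$ and $H_2 := U^{-1}(-\Delta_{\delta',\beta})U$, noting that unitary equivalence preserves the eigenvalues below the essential spectrum together with their multiplicities, the counting function $N(\cdot)$, and the value of $\min\sess$; hence $\lambda_k(H_2) = \lambda_k(-\Delta_{\delta',\beta})$, $N(H_2) = N(-\Delta_{\delta',\beta})$, and $\min\sess(H_2) = -\alpha^2/4 = \min\sess(H_1)$.

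Finally I would apply Theorem~\ref{thm:variation} to the pair $H_2 \le H_1$. Part~(i) gives $\lambda_k(H_2)\le\lambda_k(H_1)$ for every $k\in\dN$, which after the identification above is exactly assertion~(i). Because $\min\sess(H_1) = \min\sess(H_2)$, part~(iii) is applicable and yields $N(H_1)\le N(H_2)$, that is $N(-\Delta_{\delta,\alpha})\le N(-\Delta_{\delta',\beta})$, which is assertion~(ii). There is no genuine obstacle in this argument, as it is a routine combination of the established results; the only point deserving care is the use of the equality $\beta = 4/\alpha$ rather than a strict inequality. If one only had $0<\beta<4/\alpha$, then $-4/\beta^2 < -\alpha^2/4$, the two essential-spectrum thresholds would separate, the hypothesis of Theorem~\ref{thm:variation}~(iii) would fail, and only the eigenvalue comparison~(i) would remain available — so the coincidence of the essential spectra forced by $\beta = 4/\alpha$ is precisely what unlocks the counting inequality~(ii).
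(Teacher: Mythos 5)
Your proof is correct and follows essentially the same route the paper intends: the equality of essential spectra from Theorem~\ref{thm:essspec} with $\beta=4/\alpha$, the operator inequality $U^{-1}(-\Delta_{\delta',\beta})U\le-\Delta_{\delta,\alpha}$ from Theorem~\ref{thm:main} (Corollary~\ref{cor2}, where the borderline case $\beta=4/\alpha$ is admissible since the hypothesis is a non-strict inequality), and then Theorem~\ref{thm:variation}~(i) and~(iii) applied after noting unitary invariance of the eigenvalues, counting function, and $\min\sess$ --- which is exactly the combination the paper packages as Corollary~\ref{thm:main2}. Your closing remark correctly identifies why $\beta=4/\alpha$, rather than a strict inequality, is what makes part~(iii) of Theorem~\ref{thm:variation} applicable.
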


The following corollary shows the existence of negative bound states of $-\Delta_{\delta',\beta}$ for locally deformed broken lines in $\dR^2$.
The assertion follows directly from \cite[Theorem 5.2]{EI01} and Corollary~\ref{cor1234}. We mention that 
in \cite{EI01} more general weakly deformed curves were considered.

\begin{cor}
Let $\cP = \{\Omega,\dR^2\setminus\ov\Omega\}$  be a local deformation of the Lipschitz partition $\cP' = \{\Omega',\dR^2\setminus\ov{\Omega'}\}$, 
where $\Omega'$ is a wedge with angle $\varphi\in(0,\pi]$. In the case $\varphi = \pi$ let $\cP \ne \cP'$.
Assume, in addition, that $\partial\Omega$ is piecewise $C^1$-smooth. Then $N(-\Delta_{\delta',\beta}) \ge 1$ holds for any $\beta > 0$.
\end{cor}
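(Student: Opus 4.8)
The plan is to reduce the existence of a negative bound state for the $\delta'$-operator to the corresponding (known) result for the $\delta$-operator, and then to transport it back via the eigenvalue comparison in Corollary~\ref{cor1234}. First I would fix $\beta > 0$ arbitrarily and set $\alpha := 4/\beta > 0$, so that $\beta = \tfrac{4}{\alpha}$ and the threshold-matching hypothesis of Corollary~\ref{cor1234} is met. Since the partition $\cP = \{\Omega,\dR^2\setminus\ov\Omega\}$ consists of exactly two neighbouring domains, its chromatic number is $\chi = 2$, so all hypotheses of Corollary~\ref{cor1234} are in force. By Theorem~\ref{thm:essspec} both operators then share the essential spectrum $[-\alpha^2/4,\infty) = [-4/\beta^2,\infty)$, and a negative bound state is precisely an eigenvalue lying strictly below this common threshold.

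Next I would establish that the $\delta$-operator has at least one such bound state, that is $N(-\Delta_{\delta,\alpha}) \ge 1$. This is exactly the content of \cite[Theorem 5.2]{EI01}, applied to the locally deformed broken line $\partial\Omega$: the hypothesis that $\partial\Omega$ is piecewise $C^1$-smooth supplies the regularity required there, while the stipulation $\cP \ne \cP'$ in the borderline case $\varphi = \pi$ guarantees that the interaction support is genuinely bent. The latter is essential, since a straight line carries no bound state for the attractive $\delta$-interaction; only a true (weak) deformation produces an eigenvalue below $-\alpha^2/4$.

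Finally, Corollary~\ref{cor1234}~(ii) gives the inequality $N(-\Delta_{\delta,\alpha}) \le N(-\Delta_{\delta',\beta})$, whence $N(-\Delta_{\delta',\beta}) \ge 1$. As $\beta > 0$ was arbitrary, the claim follows for every $\beta > 0$. Schematically, the argument is the chain
\[
N(-\Delta_{\delta',\beta}) \;\ge\; N(-\Delta_{\delta,\alpha}) \;\ge\; 1,
\]
in which the first inequality is the comparison of eigenvalue counts from Corollary~\ref{cor1234} and the second is the $\delta$-interaction result of \cite{EI01}.

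The only genuine obstacle is to verify that the geometric hypotheses here match those under which \cite[Theorem 5.2]{EI01} is stated: one must confirm that the present locally deformed wedge falls within the class of weakly deformed curves treated there, and in particular that the exclusion $\cP \ne \cP'$ at $\varphi = \pi$ is precisely the condition needed to rule out the flat, bound-state-free configuration. Once this identification is made, all the operator-theoretic steps are immediate consequences of the comparison inequality and of the computation of the essential spectrum recorded above.
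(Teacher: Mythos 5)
Your proposal is correct and follows essentially the same route as the paper: the paper's proof consists precisely of combining \cite[Theorem 5.2]{EI01} (existence of a bound state for $-\Delta_{\delta,\alpha}$ on the deformed broken line, with the case $\varphi=\pi$, $\cP\ne\cP'$ excluding the straight, bound-state-free configuration) with Corollary~\ref{cor1234}~(ii) applied with $\alpha=4/\beta$ and $\chi=2$. Your additional remark about matching the geometric hypotheses is also consistent with the paper, which notes that \cite{EI01} treats even more general weakly deformed curves.
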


In the next proposition we show the existence of bound states for $\delta$ and $\delta'$-operators
for special local deformations of the partition $\{\dR^2_+,\dR^2_-\}$.

\begin{prop}
Let $\Omega_1\subset\dR^2_+$ be a bounded Lipschitz domain and consider 
the Lipschitz partition $\cP = \{\Omega_k\}_{k=1}^3$ of $\dR^2$, where $\Omega_2=\dR^2_+\setminus\overline\Omega_1$ 
and $\Omega_3=\dR^2_-$ as in Figure~\ref{hyphyp}.
\noindent 
Let $\alpha,\beta > 0$ be constant and let the Schr\"odinger operators
$-\Delta_{\delta,\alpha}$ and $-\Delta_{\delta',\beta}$ be associated with
$\cP$. Then the following statements hold.
\begin{itemize}\setlength{\itemsep}{1.2ex}
\item [\rm (i)] $\sess(-\Delta_{\delta,\alpha}) = [-\alpha^2/4,+\infty)$ and  $N(-\Delta_{\delta,\alpha})\ge 1$;
\item [\rm (ii)] $\sess(-\Delta_{\delta',\beta}) = [-4/\beta^2,+\infty)$ and $N(-\Delta_{\delta',\beta})\ge 1$.
\end{itemize}
\end{prop}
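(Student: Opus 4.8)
The plan is to separate the two assertions in each item: the essential-spectrum statements follow from the local-deformation machinery of Section~\ref{sec:ex2}, while the inequalities $N\ge1$ require a variational construction of a single trial state below the bottom of the essential spectrum. First I would note that $\cP$ is a local deformation of the straight-line partition $\cP'=\{\dR^2_+,\dR^2_-\}$: since $\overline{\Omega_1}$ is compact and contained in $\dR^2_+$, its boundary $\partial\Omega_1$ lies inside a bounded ball $\cB$, and outside $\cB$ the boundary $\Sigma=\partial\Omega_1\cup\dR$ coincides with $\Sigma'=\dR$. Choosing $\cB_0\supset\overline{\Omega_1}$, the transition surface $\Gamma$ is a piece of the line $\dR$, hence $C^{1,1}$, and the constants $\alpha,\beta^{-1}$ are trivially $C^1$ there, so Hypothesis~\ref{hypolocal} holds. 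Corollary~\ref{cor123} then gives at once $\sess(-\Delta_{\delta,\alpha})=[-\alpha^2/4,\infty)$ and $\sess(-\Delta_{\delta',\beta})=[-4/\beta^2,\infty)$.

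For the bound state in (i) I would use the transverse ground profile $\psi_0(x_2)=e^{-\alpha|x_2|/2}$ (which realizes the energy $-\alpha^2/4$) and test with $f=\psi_0(x_2)g_R(x_1)$, where $g_R\in C_0^\infty(\dR)$ is a plateau equal to $1$ on the $x_1$-projection of $\Omega_1$ with $\int_\dR|g_R'|^2\to0$ as $R\to\infty$. Since $\Sigma=\dR\sqcup\partial\Omega_1$ as a measure space, a direct computation yields
\[
\fra_{\delta,\alpha}[f]+\tfrac{\alpha^2}{4}\|f\|_{L^2(\dR^2)}^2=\tfrac{2}{\alpha}\int_\dR|g_R'|^2\,dx_1-\alpha\big\|f|_{\partial\Omega_1}\big\|_{L^2(\partial\Omega_1)}^2 .
\]
The first term tends to $0$, while the second is a fixed negative number (as $g_R\equiv1$ on the relevant range and $\psi_0>0$ on $\partial\Omega_1$); hence for $R$ large the right-hand side is strictly negative, so $\min\sigma(-\Delta_{\delta,\alpha})<-\alpha^2/4=\min\sess(-\Delta_{\delta,\alpha})$ and therefore $N(-\Delta_{\delta,\alpha})\ge1$.

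For (ii) the analogous transverse profile is $\theta(x_2)=\sign(x_2)e^{-2|x_2|/\beta}$, realizing the energy $-4/\beta^2$. The point is that the extra interface $\partial\Omega_1$ now contributes only through a \emph{jump} of $f$ across it, so the continuous product $f^{(0)}=\theta(x_2)g_R(x_1)$ gains nothing on $\partial\Omega_1$ and only gives $\fra_{\delta',\beta}[f^{(0)}]+\tfrac{4}{\beta^2}\|f^{(0)}\|^2=\tfrac{\beta}{2}\int|g_R'|^2=:\eps_R\to0$. I would therefore add a perturbation $h$ supported on $\overline{\Omega_1}$, with $h|_{\Omega_1}\equiv c$ constant and $h\equiv0$ elsewhere, and test with $f^{(0)}+th$. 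Writing $\mathcal{Q}(u,v):=\fra_{\delta',\beta}[u,v]+\tfrac{4}{\beta^2}(u,v)_{L^2(\dR^2)}$ and using that $h$ has vanishing gradient and jumps only across $\partial\Omega_1$, the cross term collapses to $\mathcal{Q}(f^{(0)},h)=\tfrac{4}{\beta^2}c\int_{\Omega_1}\theta\,dx$ and the self term to $\mathcal{Q}(h,h)=c^2\big(\tfrac{4}{\beta^2}|\Omega_1|-\beta^{-1}|\partial\Omega_1|\big)=:c^2M$. Choosing $c=-1$ makes $\mathcal{Q}(f^{(0)},h)$ strictly negative and independent of $R$; fixing $t>0$ small enough that $2t\,\mathcal{Q}(f^{(0)},h)+t^2M<0$, and then $R$ large enough that $\eps_R$ is smaller than this negative quantity, produces $\fra_{\delta',\beta}[f^{(0)}+th]<-\tfrac{4}{\beta^2}\|f^{(0)}+th\|^2$, whence $N(-\Delta_{\delta',\beta})\ge1$.

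The main obstacle is precisely the $\delta'$-case: because the additional attraction on $\partial\Omega_1$ is felt only through a jump, one cannot use a single spreading mode and must introduce the second, $\Omega_1$-localized, degree of freedom, then verify that its gain outweighs both the spreading cost $\eps_R$ and the quadratic self-energy $c^2M$ (whose sign is not controlled). The order of limits — fix $t$ first, send $R\to\infty$ afterwards — is what forces the Rayleigh quotient below $-4/\beta^2$; the essential-spectrum claims and the $\delta$-bound state are, by contrast, routine.
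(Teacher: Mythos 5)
Your proposal is correct, and the essential-spectrum part together with the $\delta$-bound-state construction coincides with the paper's own argument: the paper also invokes the local-deformation result (Corollary~\ref{cor123}) for the essential spectra and uses the trial functions $f_n(x)=\varphi(x_1/n)e^{-(\alpha/2)|x_2|}$, arriving at exactly your identity $\fra_{\delta,\alpha}[f_n]+\tfrac{\alpha^2}{4}\|f_n\|^2=\tfrac{2}{\alpha n}\|\varphi'\|^2-\alpha\|f_n|_{\Sigma_{12}}\|^2$.

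Where you genuinely diverge is item (ii). The paper does not construct a $\delta'$-trial function at all: it observes that $-\Delta_{\delta,4\beta^{-1}}$ has a bound state by part (i), notes that the partition has chromatic number $\chi=2$ (which requires reading $\overline{\Omega_1}\subset\dR^2_+$, so that $\Omega_1$ and $\Omega_3$ are not neighbouring --- an assumption your argument also uses implicitly when you take $\Sigma_{12}=\partial\Omega_1$ and $\theta>0$ on $\overline{\Omega_1}$), and then applies Corollary~\ref{cor1234} with $\beta=4/\alpha$, i.e.\ the consequence $N(-\Delta_{\delta,\alpha})\le N(-\Delta_{\delta',\beta})$ of the main operator inequality, together with the coincidence of the thresholds $-\alpha^2/4=-4/\beta^2$. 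Your route instead builds the bound state directly: the two-mode trial function $f^{(0)}+th$ with $h=-\chi_{\Omega_1}$ is needed precisely because, as you say, the continuous product state does not feel the $\delta'$-interaction on $\partial\Omega_1$; your computations check out ($\cQ(f^{(0)})=\tfrac{\beta}{2}\|g_R'\|^2$, the cross term reduces to $\tfrac{4}{\beta^2}c\int_{\Omega_1}\theta\,dx$ since the $\Sigma_{12}$-jump of $f^{(0)}$ and the $\Sigma_{23}$-jump of $h$ both vanish, and $\cQ(h)=c^2M$ with uncontrolled sign), and the order of quantifiers --- fix $t$ small so that $2t\,\cQ(f^{(0)},h)+t^2M<0$, then send $R\to\infty$ --- is handled correctly, provided you keep $g_R\equiv 1$ on the $x_1$-projection of $\overline{\Omega_1}$ for all $R$ so that the cross term is $R$-independent. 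The trade-off: the paper's proof is a one-line transfer once the machinery of Theorem~\ref{thm:main} is in place and illustrates the purpose of that inequality, but it is tied to $\chi=2$ and the exact coupling $\beta=4/\alpha$; your argument is longer but self-contained and elementary, works entirely at the $\delta'$ level, and would survive in situations where the chromatic-number inequality is unavailable or not sharp.
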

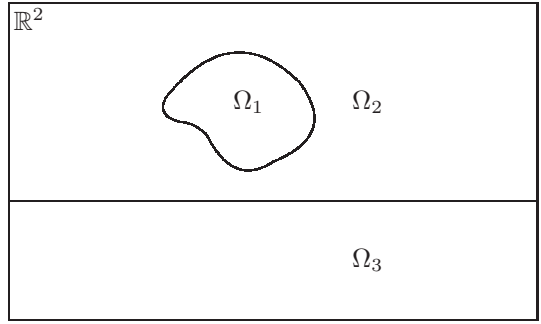
\begin{figure}[H]
\begin{center}
\begin{picture}(200,120)
\qbezier(60,85)(85,115)(110,90)
\qbezier(110,90)(125,70)(100,60)
\qbezier(100,60)(85,50)(75,70)
\qbezier(75,70)(70,75)(65,75)
\qbezier(60,85)(55,77.5)(65,75)
\put(85,80){$\Omega_{\rm 1}$}
\put(130,80){$\Omega_{\rm 2}$}
\qbezier(0,45)(100,45)(200,45)
\put(130,20){$\Omega_3$}
\qbezier(0,0)(0,60)(0,120)
\qbezier(200,0)(200,60)(200,120)
\qbezier(0,0)(100,00)(200,0)
\qbezier(0,120)(100,120)(200,120)
\put(2,110){$\dR^2$}
\end{picture}
\end{center}
\caption{The partition of $\dR^2$ via a straight line and a compact Lipschitz contour, which consists of a bounded domain $\Omega_1$ and 
two unbounded domains $\Omega_2$ and $\Omega_3$.}
\label{hyphyp}
\end{figure}

\begin{proof}
The characterization of the essential spectra in (i) and (ii) is a direct consequence of Corollary~\ref{cor123}. 
We shall show the assertion $N(-\Delta_{\delta,\alpha})\ge 1$ in (i) first. For this we can assume that
$\Sigma_{23}$ is the hyperplane defined by $x_2 = 0$, where $x=(x_1,x_2)\in\dR^2$.
Let $\varphi\in C^\infty_0(\dR)$ be equal to one in the neighbourhood of the origin and consider the sequence of functions
\[
f_n(x) := \varphi\Big(\frac{1}{n}x_1\Big)e^{-(\alpha/2)|x_2|}\in H^1(\dR^2),\quad n\in\dN,
\]
and the sequence of real values
$$I_n := \fra_{\delta,\alpha}[f_n] + \frac{\alpha^2}{4}\|f_n\|^2_{L^2(\dR^2)},\quad n\in\dN.$$
From the definition of the form $\fra_{\delta,\alpha}$ in \eqref{deltaform},
\begin{equation*}
 \begin{split}
  \|f_n\|^2_{L^2(\dR^2)}&=\frac{2n}{\alpha}\|\varphi\|_{L^2(\dR)}^2,\\
  \|\nabla f_n\|^2_{L^2(\dR^2;\dC^2)}&=\frac{2}{n\alpha}\|\varphi^\prime\|_{L^2(\dR)}^2+\frac{\alpha n}{2}\|\varphi\|_{L^2(\dR)}^2,
 \end{split}
\end{equation*}
and $\|f_n|_{\Sigma_{23}}\|_{L^2(\Sigma_{23})}^2=n\|\varphi\|_{L^2(\dR)}^2$ we obtain 
\begin{equation*}
\begin{split}
I_n &= \|\nabla f_n\|^2_{L^2(\dR^2;\dC^2)}- \alpha\|f_n|_{\Sigma_{12}}\|_{L^2(\Sigma_{12})}^2 -\alpha \|f_n|_{\Sigma_{23}}\|_{L^2(\Sigma_{23})}^2 + \frac{\alpha^2}{4}\|f_n\|^2_{L^2(\dR^2)}\\ 
&=\frac{2}{\alpha n}\|\varphi'\|_{L^2(\dR)}^2 - \alpha \|f_n|_{\Sigma_{12}}\|^2_{L^2(\Sigma_{12})}.
\end{split}
\end{equation*}
Let $D > 0$ be the distance from $\Sigma_{23}$ to the farthest point of $\Sigma_{12}$.  For large $n\in\dN$
\[
I_n  \le \frac{2}{\alpha n}\|\varphi'\|_{L^2(\dR)}^2 - \alpha e^{-\alpha D}|\Sigma_{12}|,
\]
and hence for sufficiently large $n\in\dN$
we obtain that $I_n < 0$, which proves the existence of at least one bound state for the operator $-\Delta_{\delta,\alpha}$.

In order to show that $-\Delta_{\delta',\beta}$ has at least one bound state we note that $-\Delta_{\delta,4\beta^{-1}}$ has at least one bound state
by the considerations above. Hence Corollary~\ref{cor1234} implies 
$N(-\Delta_{\delta',\beta})\ge 1$ . 
\end{proof}

The next result shows the existence of negative bound states of $-\Delta_{\delta',\beta}$ for certain 
hypersurfaces in $\dR^3$. The assertion follows directly from  and \cite[Theorem 4.3]{EK03} and Corollary~\ref{cor123123}. 
We mention that in \cite{EK03} more general weakly deformed planes were considered.

\begin{cor}
Let $\cP = \{\Omega,\dR^3\setminus\ov\Omega\}$ be a local deformation of the Lipschitz partition $\cP' = \{\dR^3_+,\dR^3_-\}$ such 
that $\cP \ne \cP'$. If, in addition, $\partial\Omega$ is $C^4$-smooth and admits a global natural parametrization 
in the sense of \cite[\S 2-3, Definition 2]{doCarmo} then $N(-\Delta_{\delta',\beta}) \ge 1$ for all sufficiently small $\beta > 0$. 
\end{cor}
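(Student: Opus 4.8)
The plan is to deduce the $\delta'$-statement from a known existence result for geometrically induced bound states of the $\delta$-operator, using the operator ordering established in Corollary~\ref{cor123123}. First I would observe that the partition $\cP = \{\Omega,\dR^3\setminus\ov\Omega\}$ consists of exactly two domains, which are neighbouring since they share the boundary $\partial\Omega$ of positive surface measure; hence its chromatic number is $\chi = 2$. Fixing $\alpha := 4/\beta$, the pair $(\alpha,\beta)$ satisfies the borderline relation $\beta = 4/\alpha = (4/\alpha)\sin^2(\pi/2)$, which is precisely the hypothesis of Theorem~\ref{thm:main} (and of Corollary~\ref{cor123123}) in the case $\chi = 2$.

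With this choice of $\alpha$, Corollary~\ref{cor123} shows that the essential spectra coincide,
\[
\sess(-\Delta_{\delta,\alpha}) = [-\alpha^2/4,\infty) = [-4/\beta^2,\infty) = \sess(-\Delta_{\delta',\beta}),
\]
so the threshold hypothesis of Corollary~\ref{cor123123} is met. Its part (ii) then yields $N(-\Delta_{\delta,\alpha}) \le N(-\Delta_{\delta',\beta})$. Thus it suffices to prove that the $\delta$-operator has at least one eigenvalue below its essential spectrum, that is $N(-\Delta_{\delta,\alpha}) \ge 1$, for $\alpha = 4/\beta$ large enough.

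This last assertion is exactly the content of \cite[Theorem 4.3]{EK03}: for a surface that is a nontrivial local deformation of the plane in $\dR^3$, the $C^4$-smoothness and the existence of a global natural parametrization of $\partial\Omega$ guarantee that the associated curvature-induced effective potential is well defined and sufficiently decaying, and that $-\Delta_{\delta,\alpha}$ possesses at least one eigenvalue below the threshold $-\alpha^2/4$ once the coupling $\alpha$ is sufficiently large. Applying this with $\alpha = 4/\beta$ and combining it with the inequality $N(-\Delta_{\delta,\alpha}) \le N(-\Delta_{\delta',\beta})$ gives $N(-\Delta_{\delta',\beta}) \ge 1$ for all sufficiently small $\beta > 0$.

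The essential content of the argument is therefore imported from \cite{EK03}; the only work on our side is to match its hypotheses (which is why the smoothness and parametrization conditions on $\partial\Omega$ are imposed), to verify that $\chi = 2$, and to check that the two thresholds align under the choice $\beta = 4/\alpha$. The main structural point to keep in mind, and the reason the conclusion is weaker than in the two-dimensional wedge corollary, is that the cited $\delta$-result operates only in the strong-coupling regime: the translation runs for large $\alpha$ alone, which is precisely why the statement is restricted to sufficiently small $\beta$ rather than to all $\beta > 0$.
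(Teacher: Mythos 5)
Your proposal is correct and coincides with the paper's own (very terse) proof: the authors likewise deduce the statement directly from \cite[Theorem 4.3]{EK03} combined with Corollary~\ref{cor123123}, with the choice $\alpha=4/\beta$ aligning the essential spectrum thresholds via Corollary~\ref{cor123}. Your closing remark correctly identifies why the conclusion is restricted to small $\beta$ — the imported $\delta$-result is a strong-coupling statement, so large $\alpha$ is needed — which is exactly the mechanism behind the paper's formulation.
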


\section{Appendix: Sobolev spaces on wedges and a symmetric star graph with three leads in $\dR^2$}

In this appendix we verify the statements
\begin{equation*}
\min(-\Delta_{\delta,\alpha})=-\frac{\alpha^2}{3}\quad\text{and}\quad 
\min(-\Delta_{\delta^\prime,\beta}) \geq - \Bigg(\frac{12\sqrt{3}-2}{9}\Bigg)^2\frac{1}{\beta^2}
\end{equation*}
from \eqref{mindelta} and \eqref{mindeltaprime}, where $\alpha,\beta>0$ are real constants and the $\delta$ and $\delta^\prime$-interaction
is supported on the symmetric star graph with three leads in Figure~\ref{stargraph}. We first provide some useful estimates and decompositions
for $H^1$-functions on wedges.

\subsection{Sobolev spaces on wedges in $\dR^2$}

\label{ssec:Sobolev2}
Let $\Omega$ be a wedge with angle $\varphi\in(0,\pi]$ as in the figure below. The estimates for functions $f\in H^1(\Omega)$
in Lemma~\ref{lem:wedgetrace1} and Lemma~\ref{lem:wedgetrace2} below will be used in the proofs of
\eqref{mindelta} and \eqref{mindeltaprime}.

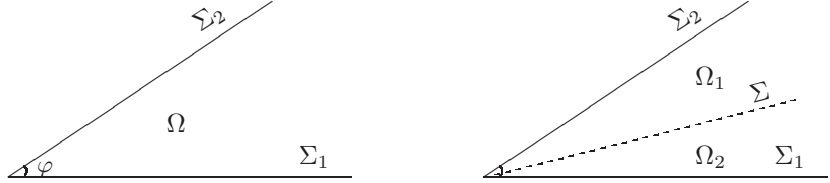
\begin{figure}[H]
\begin{center}
\begin{picture}(400,80)
\put(20,10){\line(1,0){130}}
\put(20,10){\line(3,2){100}}
\put(80,27){$\Omega$}
\put(89,65){\begin{turn}{33.3}$\Sigma_2$\end{turn}}
\put(130,15){\begin{turn}{0}$\Sigma_1$\end{turn}}
\qbezier(26,14)(28,13)(26,10)
\put(31,12){\small{$ \varphi$}}
\put(200,10){\line(1,0){130}}
\put(200,10){\line(3,2){100}}
\put(280,45){$\Omega_1$}
\put(280,15){$\Omega_2$}
\put(269,65){\begin{turn}{33.3}$\Sigma_2$\end{turn}}
\put(310,15){\begin{turn}{0}$\Sigma_1$\end{turn}}
\put(300,38){\begin{turn}{15.0}$\Sigma$\end{turn}}
 \multiput(200,10)(4,1){30}{\begin{turn}{15.0}\line(1,0){2}\end{turn}}
\qbezier(206,14)(208,13)(206,10)
\end{picture}
\end{center}
\caption{A wedge $\Omega\subset\dR^2$ with angle $\varphi\in(0,\pi]$ and boundary consisting of the two rays $\Sigma_1$ and $\Sigma_2$;
the ray $\Sigma$ separates $\Omega$ into two wedges $\Omega_1$ and $\Omega_2$.}
\label{fig:wedge}
\end{figure}

The first lemma is a reformulation of \cite[Lemma 2.6]{LP08}.

\begin{lem}\label{lem:wedgetrace1}
Let $\Omega\subset\dR^2$
be a wedge with angle $\varphi\in(0,\pi]$ and boundary $\partial\Omega$. Then for every $f\in H^1(\Omega)$ 
and all $\gamma >0$ the estimate
\[
\|\nabla f\|^2_{L^2(\Omega;\dC^2)}- \gamma\|f|_{\partial\Omega}\|^2_{L^2(\partial\Omega)} \ge -\frac{\gamma^2}{\sin^2(\varphi/2)}\|f\|_{L^2(\Omega)}^2
\]
holds. For $\varphi\in(0,\pi)$ the estimate is sharp.
\end{lem}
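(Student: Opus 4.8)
The plan is to prove the inequality by a completion-of-squares argument built around a single constant vector field adapted to the opening angle, rather than by separation of variables. Place the vertex of $\Omega$ at the origin with the bisector along a unit vector $e$ pointing into the opening, so that $\Omega = \{\,r(\cos\theta\, e + \sin\theta\, e^\perp) : r>0,\ |\theta|<\varphi/2\,\}$ and the outward unit normals on the two rays $\Sigma_1,\Sigma_2$ are $\nu_{1,2} = -\sin(\varphi/2)\,e \pm \cos(\varphi/2)\,e^\perp$. I would then set
\[
v := -\frac{\gamma}{\sin(\varphi/2)}\, e
\]
and record the two elementary facts that make the argument run: $|v|^2 = \gamma^2/\sin^2(\varphi/2)$ and $v\cdot\nu_1 = v\cdot\nu_2 = \gamma$. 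The latter is the crucial geometric input; the same constant $\gamma$ is produced on both rays precisely because $v$ is aligned with the bisector.

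Next I would expand, for $f\in H^1(\Omega)$,
\[
0 \le \big\|\nabla f - v f\big\|_{L^2(\Omega;\dC^2)}^2 = \|\nabla f\|_{L^2(\Omega;\dC^2)}^2 - \int_\Omega v\cdot\nabla\big(|f|^2\big)\,dx + |v|^2\,\|f\|_{L^2(\Omega)}^2,
\]
using $2\Real(\overline f\, v\cdot\nabla f) = v\cdot\nabla(|f|^2)$. Since $v$ is constant, $v\cdot\nabla(|f|^2) = \operatorname{div}(|f|^2 v)$, and the divergence theorem together with $v\cdot\nu=\gamma$ on $\partial\Omega$ converts the middle term into $\gamma\,\|f|_{\partial\Omega}\|_{L^2(\partial\Omega)}^2$. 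Substituting $|v|^2 = \gamma^2/\sin^2(\varphi/2)$ and rearranging yields exactly
\[
\|\nabla f\|_{L^2(\Omega;\dC^2)}^2 - \gamma\,\|f|_{\partial\Omega}\|_{L^2(\partial\Omega)}^2 = \big\|\nabla f - v f\big\|_{L^2(\Omega;\dC^2)}^2 - \frac{\gamma^2}{\sin^2(\varphi/2)}\,\|f\|_{L^2(\Omega)}^2 \ge -\frac{\gamma^2}{\sin^2(\varphi/2)}\,\|f\|_{L^2(\Omega)}^2.
\]

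The only genuinely technical point is the justification of the Gauss--Green formula on the unbounded convex Lipschitz domain $\Omega$. I would first establish $\int_\Omega\operatorname{div}(|f|^2 v)\,dx = \gamma\|f|_{\partial\Omega}\|_{L^2(\partial\Omega)}^2$ for $f$ in the dense subspace $\{\,g|_\Omega : g\in C_0^\infty(\dR^2)\,\}$ of $H^1(\Omega)$ (density following from the extension operator), where the classical divergence theorem applies to the smooth compactly supported field $|f|^2 v$; note $|f|^2\in W^{1,1}(\Omega)$ with $\nabla(|f|^2)=2\Real(\overline f\nabla f)\in L^1$. Then I would pass to the limit: the gradient and $L^2$ terms are continuous in the $H^1(\Omega)$-norm, and the boundary term is continuous as well by the trace estimate of Lemma~\ref{lem:trace}; see also \cite{McLean}. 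This density-and-continuity step is where the corner at the vertex and the unboundedness of $\Omega$ must be absorbed, and it is the main obstacle to a fully rigorous argument.

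Finally, for the sharpness when $\varphi\in(0,\pi)$ I would exhibit the minimiser $f_0(x) := e^{v\cdot x}$, for which $\nabla f_0 - v f_0 = 0$, so the displayed identity becomes an equality. It remains to verify $f_0\in H^1(\Omega)$: since $v\cdot x = -\frac{\gamma}{\sin(\varphi/2)}\,r\cos\theta \le -\gamma\cot(\varphi/2)\,r$ for $|\theta|<\varphi/2$, and $\cot(\varphi/2)>0$ exactly when $\varphi<\pi$, the function $f_0$ decays exponentially in $r$, whence $f_0$ and $\nabla f_0 = v f_0$ lie in $L^2(\Omega)$. Thus the constant $\gamma^2/\sin^2(\varphi/2)$ is attained and cannot be improved; for $\varphi=\pi$ this test function fails to be square-integrable, which is consistent with the exclusion of that case from the sharpness statement.
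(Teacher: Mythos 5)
Your proof is correct, but it takes a genuinely different route from the paper: the paper supplies no argument at all for this lemma, stating only that it is a reformulation of \cite[Lemma 2.6]{LP08}, where the constant $\gamma^2/\sin^2(\varphi/2)$ arises as (minus) the principal eigenvalue of the Robin Laplacian on a sector. Your completion of squares around the constant field $v=-\gamma\sin^{-1}(\varphi/2)\,e$ is precisely the ground-state substitution attached to the extremal function $e^{v\cdot x}$, and compared with the citation it buys three things: a self-contained elementary argument, a uniform treatment of $\varphi=\pi$ (together with an explanation of why sharpness fails there, namely the would-be minimiser leaving $L^2$), and an explicit optimiser. The technical step you single out is handled correctly: restrictions of $C_0^\infty(\dR^2)$ functions are dense in $H^1(\Omega)$ via the extension operator of Section~\ref{ssec:Sobolev1}, the classical divergence theorem applies to the compactly supported field $|f|^2v$ on the Lipschitz wedge (the vertex is negligible, e.g.\ by truncating near the corner), and the limit passage in the boundary term is exactly what Lemma~\ref{lem:trace} provides for unbounded Lipschitz domains, with $\varepsilon$ fixed giving boundedness of the trace map. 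One small point worth making explicit: for sharpness you invoke the displayed \emph{identity} for $f_0=e^{v\cdot x}$, which is not the restriction of a compactly supported function; this is legitimate because your density-and-continuity argument establishes the identity for every $f\in H^1(\Omega)$, so plugging in $f_0$ (which lies in $H^1(\Omega)$ exactly when $\varphi<\pi$, as you verify) indeed shows the constant is attained and cannot be improved.
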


We provide a variant of Lemma~\ref{lem:wedgetrace1} which will be useful in the proof of \eqref{mindeltaprime}.
We note for completeness that the estimate below is not sharp for $\varphi\in(0,\pi)$.

\begin{lem}\label{lem:wedgetrace2}
Let $\Omega\subset\dR^2$
be a wedge with angle $\varphi\in(0,\pi]$ and boundary
$\partial\Omega$.
Let $\Sigma$ be a ray separating $\Omega$ into two wedges as in Figure \ref{fig:wedge}. 
Then for every $f\in H^1(\Omega)$ with $f|_{\Sigma} = 0$
and all $\gamma >0$ the estimate
\[
\|\nabla f\|^2_{L^2(\Omega;\dC^2)}-\gamma\|f|_{\partial\Omega}\|^2_{L^2(\partial\Omega)} \ge - \gamma^2\|f\|_{L^2(\Omega)}^2
\]
holds. 
\end{lem}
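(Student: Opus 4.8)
The plan is to exploit the separating ray $\Sigma$ to decompose the wedge $\Omega$ into the two sub-wedges $\Omega_1$ and $\Omega_2$ of angles $\varphi_1$ and $\varphi_2$ with $\varphi_1+\varphi_2=\varphi\le\pi$, and then to treat each piece independently. Writing $f_i:=f|_{\Omega_i}$ and recalling from Figure~\ref{fig:wedge} that $\Sigma_2\subset\partial\Omega_1$ and $\Sigma_1\subset\partial\Omega_2$ while the common ray $\Sigma$ carries vanishing trace, the three quantities entering the inequality split additively:
\[
\|\nabla f\|^2_{L^2(\Omega;\dC^2)}=\sum_{i=1}^2\|\nabla f_i\|^2_{L^2(\Omega_i;\dC^2)},\qquad
\|f\|^2_{L^2(\Omega)}=\sum_{i=1}^2\|f_i\|^2_{L^2(\Omega_i)},
\]
together with $\|f|_{\partial\Omega}\|^2_{L^2(\partial\Omega)}=\|f_1|_{\Sigma_2}\|^2_{L^2(\Sigma_2)}+\|f_2|_{\Sigma_1}\|^2_{L^2(\Sigma_1)}$. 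Consequently it suffices to establish, for each sub-wedge, the one-sided bound
\[
\|\nabla f_i\|^2_{L^2(\Omega_i;\dC^2)}-\gamma\|f_i|_{\Sigma_{\mathrm{out}}}\|^2_{L^2(\Sigma_{\mathrm{out}})}\ge-\gamma^2\|f_i\|^2_{L^2(\Omega_i)},
\]
where $\Sigma_{\mathrm{out}}$ denotes the ray of $\partial\Omega_i$ lying in $\partial\Omega$, and then to sum the two estimates.

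The key step is to reduce each mixed sub-wedge problem to the half-plane by \emph{extension by zero}. Fix $i$, place the vertex at the origin with $\Sigma_{\mathrm{out}}$ along the positive $x_1$-axis, so that $\Omega_i=\{(r,\theta):0<\theta<\varphi_i\}$ with $\Sigma$ the ray at angle $\varphi_i\le\varphi\le\pi$. Let $H=\{x_2>0\}$ be the upper half-plane; then $\Omega_i\subset H$ and $\Sigma_{\mathrm{out}}\subset\partial H$, while $\Sigma$ lies in the interior of $H$. Since $f_i|_\Sigma=0$, the extension $g_i$ of $f_i$ by zero to $H$ belongs to $H^1(H)$, and by construction
\[
\|\nabla g_i\|_{L^2(H;\dC^2)}=\|\nabla f_i\|_{L^2(\Omega_i;\dC^2)},\qquad \|g_i\|_{L^2(H)}=\|f_i\|_{L^2(\Omega_i)},
\]
whereas the trace of $g_i$ on $\partial H$ equals $f_i$ on $\Sigma_{\mathrm{out}}$ and vanishes on the complementary ray, so that $\|g_i|_{\partial H}\|_{L^2(\partial H)}=\|f_i|_{\Sigma_{\mathrm{out}}}\|_{L^2(\Sigma_{\mathrm{out}})}$.

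Finally I would apply Lemma~\ref{lem:wedgetrace1} to $g_i$ on $H$, viewed as a wedge of angle $\pi$: since $\sin^2(\pi/2)=1$ the constant there is exactly $\gamma^2$, giving $\|\nabla g_i\|^2_{L^2(H;\dC^2)}-\gamma\|g_i|_{\partial H}\|^2_{L^2(\partial H)}\ge-\gamma^2\|g_i\|^2_{L^2(H)}$. Translating back through the identities above yields the desired one-sided estimate, and summing over $i=1,2$ produces the claimed inequality. The bound is not sharp for $\varphi\in(0,\pi)$ because embedding each thin sub-wedge into a full half-plane discards the energy gain coming from the Dirichlet condition $f|_\Sigma=0$ and from the confinement of the narrower sub-wedges; a sharper angle-dependent constant could be extracted, but $-\gamma^2$ is all that is needed for \eqref{mindeltaprime}. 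The only genuinely nontrivial point, and the sole place where the hypothesis $f|_\Sigma=0$ is used, is the $H^1(H)$-membership of the zero extension $g_i$, which rests on the standard fact that extension by zero across a Lipschitz portion of the boundary preserves $H^1$-regularity precisely when the trace there vanishes.
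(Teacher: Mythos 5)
Your proposal is correct and follows essentially the same route as the paper's own proof: both split $\Omega$ along $\Sigma$ into the two sub-wedges, use the hypothesis $f|_{\Sigma}=0$ to extend each restriction by zero to a half-plane containing the outer ray in its boundary, and then apply Lemma~\ref{lem:wedgetrace1} with $\varphi=\pi$ (where $\sin^2(\pi/2)=1$ gives the constant $\gamma^2$) before summing the two estimates. The differences are purely cosmetic, such as your explicit bookkeeping of the trace on the half-plane boundary and the closing remark on non-sharpness.
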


\begin{proof}
Let $f\in H^1(\Omega)$ with $f|_{\Sigma} = 0$ and denote the restrictions of $f$ to the wedges 
$\Omega_1$ and $\Omega_2$ by $f_1$ and $f_2$,
respectively. Extend the wedge $\Omega_1$ with degree $\varphi_1 <\varphi$ to the half-plane $\dR^2_+$ 
by gluing the wedge $\Omega_1'$ with the
angle $\pi - \varphi_1$ as in Figure~\ref{gluing}.

\begin{figure}[H]
\label{fig1}
\begin{center}
\begin{picture}(250,80)
\put(100,10){\line(1,0){130}}
\put(100,10){\line(3,2){100}}
\multiput(100,10)(-7,0){14}{\line(-1,0){3}}  
\put(160,27){$\Omega_1$}
\put(60,27){$\Omega_1'$}
\put(169,65){\begin{turn}{33.3}$\Sigma$\end{turn}}
\put(210,15){\begin{turn}{0}$\Sigma_1$\end{turn}}
\qbezier(106,14)(108,13)(106,10)
\put(110,12){\small{$ \varphi_1$}}
\qbezier(105,15)(98,15)(97,10)
\put(110,12){\small{$ \varphi_1$}}
\put(75,14){\small{$\pi\!\!-\!\! \varphi_1$}}

\end{picture}
\end{center}
\caption{Extension of the wedge $\Omega_1$ to the half-plane $\dR^2_+$ via the wedge $\Omega_1'$.}
\label{gluing}
\end{figure}
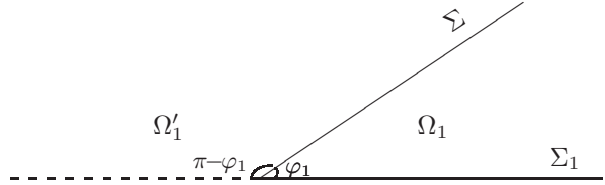
As $f_1\in H^1(\Omega_1)$ and $f_1|_{\Sigma} = 0$ we can extend $f_1$ by zero to  
$\wt f\in H^1(\dR^2_+)$. Then for $\gamma > 0$
\begin{equation*}
\begin{split}
 \|\nabla f_1\|^2_{L^2(\Omega_1;\dC^2)}-\gamma\|f_1|_{\Sigma_1}\|_{L^2(\Sigma_1)}^2&=
 \|\nabla \wt f\|^2_{L^2(\dR^2_+;\dC^2)}-\gamma\|\wt f|_{\partial\dR^2_+}\|_{L^2(\partial\dR^2_+)}^2\\
 & \ge - \gamma^2\|\wt f\|_{L^2(\dR^2_+)}^2=- \gamma^2\|f_1\|_{L^2(\Omega_1)}^2
\end{split}
 \end{equation*} holds
by Lemma~\ref{lem:wedgetrace1}. The same argument shows that for $\gamma > 0$ the function 
$f_2\in H^1(\Omega_2)$ satisfies
\begin{equation*}
 \|\nabla f_2\|^2_{L^2(\Omega_2;\dC^2)}-\gamma\|f_2|_{\Sigma_2}\|_{L^2(\Sigma_1)}^2 \ge - \gamma^2\|f_2\|_{L^2(\Omega_2)}^2.
 \end{equation*}
Summing up the above estimates we obtain the estimate in the lemma.
\end{proof}

It turns out to be useful in the proof of \eqref{mindeltaprime} to decompose functions in $H^1(\Omega)$ 
as sums of even and odd functions with respect to the angle bisector of the wedge $\Omega$.

\begin{lem}\label{lem:decomposition}
Let $\Omega\subset\dR^2$ be a wedge with angle $\varphi \in (0,\pi]$ and let $\Sigma$ be the angle bisector which separates $\Omega$ 
into two wedges with angles $\varphi/2 \in (0,\pi/2]$.
Then every $f \in H^1(\Omega)$ can be decomposed into the sum $f_{\rm o} + f_{\rm e}$ such that the following conditions
{\rm (a)}-{\rm (e)} hold:
\begin{itemize}\setlength{\parskip}{1.2mm}
\item [\rm (a)] $f_{\rm o}, f_{\rm e} \in H^1(\Omega)$;
\item [\rm (b)] $(f_{\rm o}, f_{\rm e})_{L^2(\Omega)} = 0$;
\item [\rm (c)] $(\nabla f_{\rm o},\nabla f_{\rm e})_{L^2(\Omega;\dC^2)} = 0$;
\item [\rm (d)] $f_{\rm e}|_{\Sigma_1} = f_{\rm e}|_{\Sigma_2}$ and
$f_{\rm o}|_{\Sigma_1} = - f_{\rm o}|_{\Sigma_2}$;
\item [\rm (e)] $f_{\rm o}|_{\Sigma} = 0$.
\end{itemize}
\end{lem}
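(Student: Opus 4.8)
The plan is to realise the decomposition through the geometric reflection of $\Omega$ across its bisector. Let $R\colon\dR^2\to\dR^2$ be the orthogonal reflection whose fixed-point line contains $\Sigma$. Since $\Sigma$ bisects the wedge, $R$ maps $\Omega$ onto itself, interchanging the two sub-wedges $\Omega_1\leftrightarrow\Omega_2$, interchanging the bounding rays $\Sigma_1\leftrightarrow\Sigma_2$, and fixing $\Sigma$ pointwise. Define the composition operator $T_R f:=f\circ R$ and set
\[
f_{\rm e}:=\tfrac12\bigl(f+T_R f\bigr),\qquad f_{\rm o}:=\tfrac12\bigl(f-T_R f\bigr),
\]
so that $f=f_{\rm e}+f_{\rm o}$ with $T_R f_{\rm e}=f_{\rm e}$ and $T_R f_{\rm o}=-f_{\rm o}$. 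All five properties will be read off from the fact that $T_R$ is a self-adjoint unitary involution and that it intertwines traces with the boundary reflection.

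First I would establish (a) by observing that $R$ is a rigid motion with $R(\Omega)=\Omega$ and $R^2=\mathrm{id}$, so the change of variables $y=Rx$ (whose Jacobian is $1$) together with the orthogonality of $R$ in the chain rule $\nabla(T_R f)(x)=R^{\top}(\nabla f)(Rx)$ shows that $T_R$ preserves both $\|f\|_{L^2(\Omega)}$ and $\|\nabla f\|_{L^2(\Omega;\dC^2)}$. Hence $T_R$ is a unitary involution on $L^2(\Omega)$ and on $H^1(\Omega)$, and in particular $f_{\rm e},f_{\rm o}\in H^1(\Omega)$, giving (a). For (b) and (c) I would use that $f_{\rm e}$ and $f_{\rm o}$ are the projections of $f$ onto the $+1$ and $-1$ eigenspaces of the self-adjoint operator $T_R$. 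Self-adjointness of $T_R$ on $L^2(\Omega)$ yields $(f_{\rm e},f_{\rm o})_{L^2(\Omega)}=(T_R f_{\rm e},f_{\rm o})_{L^2(\Omega)}=(f_{\rm e},T_R f_{\rm o})_{L^2(\Omega)}=-(f_{\rm e},f_{\rm o})_{L^2(\Omega)}$, so this inner product vanishes, which is (b). The very same computation for the $H^1$ inner product gives $(f_{\rm e},f_{\rm o})_{L^2(\Omega)}+(\nabla f_{\rm e},\nabla f_{\rm o})_{L^2(\Omega;\dC^2)}=0$; subtracting (b) leaves $(\nabla f_{\rm e},\nabla f_{\rm o})_{L^2(\Omega;\dC^2)}=0$, which is (c).

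For (d) and (e) I would pass to the boundary. Parametrising $\Sigma_1$ and $\Sigma_2$ by distance to the vertex, the reflection $R$ maps the distance-$r$ point of $\Sigma_1$ isometrically to the distance-$r$ point of $\Sigma_2$ and vice versa; under this identification the trace of $T_R f$ on $\Sigma_1$ equals the trace of $f$ on $\Sigma_2$, and conversely. Consequently $f_{\rm e}|_{\Sigma_1}=\tfrac12\bigl(f|_{\Sigma_1}+f|_{\Sigma_2}\bigr)=f_{\rm e}|_{\Sigma_2}$ and $f_{\rm o}|_{\Sigma_1}=\tfrac12\bigl(f|_{\Sigma_1}-f|_{\Sigma_2}\bigr)=-\,f_{\rm o}|_{\Sigma_2}$, which is (d). Since $R$ fixes $\Sigma$ pointwise, the trace of $T_R f$ on $\Sigma$ coincides with that of $f$, whence $f_{\rm o}|_{\Sigma}=\tfrac12\bigl(f|_{\Sigma}-f|_{\Sigma}\bigr)=0$, giving (e).

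The one step requiring genuine care—the main obstacle—is the rigorous justification of these trace identities on the \emph{unbounded} Lipschitz wedge, i.e.\ that the reflection commutes with the trace operator and that the above boundary identifications are legitimate. I would handle this by first verifying the pointwise relations for $f\in C^\infty(\overline\Omega)\cap H^1(\Omega)$, where they are immediate, and then extending to all $f\in H^1(\Omega)$ by density together with the continuity of the trace map (as in Lemma~\ref{lem:trace}), using that $T_R$ is an $H^1$-isometry so that convergence in $H^1(\Omega)$ is preserved under $T_R$. Everything else is a routine consequence of $T_R$ being a self-adjoint unitary involution.
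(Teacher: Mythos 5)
Your proposal is correct and takes essentially the same route as the paper: the paper fixes coordinates so that the bisector $\Sigma$ is the ordinate axis, defines $f_{\rm e}$ and $f_{\rm o}$ via the reflection $(x,y)\mapsto(-x,y)$ on smooth functions (verifying (b)--(e) directly, with (b) and (c) via the vanishing of the $L^2$ pairing of even and odd functions), and then extends to all of $H^1(\Omega)$ by density using that the even/odd projections are $H^1$-contractions. Your packaging of this as a self-adjoint unitary involution $T_R$ and your density argument for the trace identities are just a mild abstraction of the paper's explicit computation.
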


\begin{proof}
Fix the Cartesian coordinate system such that the vertex of the wedge $\Omega$ is the origin and
the angle bisector is the ordinate axis as in Figure~\ref{cartesian}
\begin{figure}[H]
\label{fig}
\begin{center}
\begin{picture}(200,100)
\put(100,100){\line(1,-2){40}}
\put(100,100){\line(-1,-2){40}}
\put(100,100){\vector(0,-1){90}}
\put(100,100){\vector(1,0){70}}
\put(85,50){$\Omega$}
\put(55,30){\begin{turn}{66.6}$\Sigma_1$\end{turn}}
\put(135, 35){\begin{turn}{-60}$\Sigma_2$\end{turn}}
\put(105,12){\small{$y$}}
\put(165,90){\small{$x$}}
\put(92,99){\small{$O$}}
\end{picture}
\end{center}
\caption{The wedge $\Omega$ in the Cartesian coordinate system.}
\label{cartesian}
\end{figure}
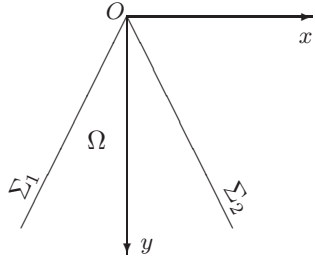
Let
$C^\infty_0(\ov\Omega) := \{f|_{\Omega}\colon f\in C^\infty_0(\dR^2)\}$ and 
note that for $h_1,h_2\in C^\infty_0(\ov\Omega)$ with
\[
h_1(x,y) = h_1(-x,y)\quad\text{and}\quad h_2(x,y) = - h_2(-x,y)
\]
the equality
\begin{equation}
\label{zero}
(h_1,h_2)_{L^2(\Omega)} = 0
\end{equation}
holds. Let us introduce the mappings
\[
\begin{split}
P_{\rm even, 0}\colon C^\infty_0(\ov\Omega)\rightarrow C^\infty_0(\ov\Omega),&\qquad
(P_{\rm even, 0}f)(x,y) := \frac{f(x,y) + f(-x,y)}{2},\\
P_{\rm odd, 0}\colon C^\infty_0(\ov\Omega)\rightarrow C^\infty_0(\ov\Omega),&\qquad
(P_{\rm odd, 0}f)(x,y) := \frac{f(x,y) - f(-x,y)}{2},
\end{split}
\]
and define $f_{\rm e} := P_{\rm even,0}f$ and $f_{\rm o} := P_{\rm odd,0}f$. Then obviously, $f = f_{\rm e} +f_{\rm o}$, and 
$f_{\rm e}$ and $f_{\rm o}$ satisfy (d) and (e) by
their definition. Condition (b) holds according to \eqref{zero}.
Computing partial derivatives we get 
\[
\begin{split}
\partial_1 f_{\rm e} = \frac{(\partial_1 f)(x,y) - (\partial_1 f)(-x,y)}{2},&\qquad
\partial_1 f_{\rm o} = \frac{(\partial_1 f)(x,y) + (\partial_1 f)(-x,y)}{2},\\
\partial_2 f_{\rm e} = \frac{(\partial_2 f)(x,y) + (\partial_2 f)(-x,y)}{2},&\qquad
\partial_2 f_{\rm o} = \frac{(\partial_2 f)(x,y) - (\partial_2 f)(-x,y)}{2}.
\end{split}
\]
Hence, according to \eqref{zero} we conclude 
\[
(\partial_1 f_{\rm e}, \partial_1 f_{\rm o})_{L^2(\Omega)} = 0\quad \text{and}\quad (\partial_2 f_{\rm e}, \partial_2 f_{\rm o})_{L^2(\Omega)} = 0.
\]
Therefore, condition (c) holds as well. Note that
\[
\|P_{\rm even,0}f\|_{H^1(\Omega)} \le \|f\|_{H^1(\Omega)}\quad\text{and}\quad\|P_{\rm odd,0}f\|_{H^1(\Omega)} \le \|f\|_{H^1(\Omega)},
\]
where $\|f\|_{H^1(\Omega)}^2 = \|f\|^2_{L^2(\Omega)} + \|\nabla f\|^2_{L^2(\Omega;\dC^2)}$. Hence, the operators
$P_{\rm even,0}$ and $P_{\rm odd,0}$ can be extended by continuity to operators 
$P_{\rm even}\colon H^1(\Omega)\rightarrow H^1(\Omega)$ and $P_{\rm odd}\colon H^1(\Omega)\rightarrow H^1(\Omega)$ 
with $\dom P_{\rm even} = \dom P_{\rm odd} = H^1(\Omega)$.
For $f\in H^1(\Omega)$ define $f_{\rm e} := P_{\rm even}f$ and $f_{\rm o} := P_{\rm odd}f$. 
Clearly, $f = f_{\rm e} + f_{\rm o}$ holds, and the conditions (a), (b), (c),  (d) and (e) are satisfied. 
This completes the proof of the lemma.
\end{proof}

\subsection{Proof of \eqref{mindelta}}

The assertion $\min\sigma(-\Delta_{\delta,\alpha}) = - \frac{\alpha^2}{3}$ is essentially a consequence of Lemma~\ref{lem:wedgetrace1}.
In fact, recall first that the operator $-\Delta_{\delta,\alpha}$ corresponds to the quadratic form
\[
\fra_{\delta,\alpha}[f] = \|\nabla f\|^2_{L^2(\dR^2;\dC^2)} - \alpha\|f|_{\Sigma}\|^2_{L^2(\Sigma)},\quad \dom\fra_{\delta,\alpha}  = H^1(\dR^2).
\]
Let $f\in\dom\fra_{\delta,\alpha}$ such that $\|f\|_{L^2} = 1$, 
and denote the restrictions by $f_k := f|_{\Omega_k}$, $k=1,2,3$. 
From Lemma~\ref{lem:wedgetrace1} with $\gamma = \alpha/2$ and $\varphi = 2\pi/3$ we obtain the estimates
\begin{equation}\label{123wedge}
\begin{split}
\|\nabla f_k\|^2_{L^2(\Omega_k;\dC^2)} - \frac{\alpha}{2}\|f_k|_{\partial\Omega_k}\|^2_{L^2(\partial\Omega_k)}
\ge-\frac{\alpha^2}{3}\|f_k\|^2_{L^2(\Omega_k)}
\end{split}
\end{equation}
for $k=1,2,3$. Since 
\begin{equation*}
\fra_{\delta,\alpha}[f] = \sum_{k=1}^3\|\nabla f_k\|^2_{L^2(\Omega_k;\dC^2)} -\frac{\alpha}{2}\sum_{k=1}^3
\|f_k|_{\partial\Omega_k}\|^2_{L^2(\partial\Omega_k)}
\end{equation*}
and
$\sum_{k=1}^3\|f_k\|_{L^2(\Omega_k)}^2 = 1$
we conclude from \eqref{123wedge} that
\[
\fra_{\delta,\alpha}[f] \ge -\frac{\alpha^2}{3},
\]
and hence  $\min\sigma(-\Delta_{\delta,\alpha}) \geq - \frac{\alpha^2}{3}$. 
Furthermore, according to Lemma~\ref{lem:wedgetrace1} there exist $f_k\in H^1(\Omega_k)$ such that equality holds in \eqref{123wedge}. 
This yields
\[
\min\sigma(-\Delta_{\delta,\alpha}) = -\frac{\alpha^2}{3}
\] 
and completes the proof of \eqref{mindelta}. We remark that the estimate $\min\sigma(-\Delta_{\delta,\alpha}) \leq -\frac{\alpha^2}{3}$
follows also from  \cite[Theorem 3.2]{BEW09}.

\subsection{Proof of \eqref{mindeltaprime}}

The proof of the estimate 
\begin{equation*}
\min(-\Delta_{\delta^\prime,\beta}) \geq - \Bigg(\frac{12\sqrt{3}-2}{9}\Bigg)^2\frac{1}{\beta^2}
\end{equation*}
is carried out in three steps, followed by a separate proof of the inequality \eqref{abc} below. 

\vskip 0.15cm
\noindent{\bf Step I.} Recall first that $-\Delta_{\delta',\beta}$ corresponds to the quadratic form
\[
\begin{split}
\fra_{\delta',\beta}[f] &=\sum_{k=1}^3\|\nabla f_k\|_{L^2(\Omega_k;\dC^2)}^2 -\beta^{-1}
\|f_1|_{\Sigma_{12}} - f_{2}|_{\Sigma_{12}}\|_{L^2(\Sigma_{12})}^2  \\
&\qquad\qquad -\beta^{-1}\|f_2|_{\Sigma_{23}} - f_{3}|_{\Sigma_{23}}\|_{L^2(\Sigma_{23})}^2-\beta^{-1}
\|f_3|_{\Sigma_{13}} - f_{1}|_{\Sigma_{13}}\|_{L^2(\Sigma_{13})}^2,\\
\dom\fra_{\delta',\beta} &=\bigoplus_{k=1}^3 H^1(\Omega_k),
\end{split}
\]
where $f_k = f|_{\Omega_k}$, $k=  1,2,3$. We split the problem into two separate problems for odd and even
components. For $f\in\dom\fra_{\delta',\beta}$ with $\|f\|_{L^2(\dR^2)} = 1$ we decompose the restrictions $f_k = f|_{\Omega_k}$
as in Lemma~\ref{lem:decomposition}.
Let $\{f_{k,\rm e}\}_{k=1}^3$ and $\{f_{k,\rm o}\}_{k=1}^3$ be the corresponding even and odd components, and let
\begin{equation*}
\begin{split}
 \theta_1 := f_{1,\rm e}|_{\Sigma_{12}}=f_{1,\rm e}|_{\Sigma_{13}},&\qquad \qquad\eta_1:=f_{1,\rm o}|_{\Sigma_{12}}=f_{1,\rm o}|_{\Sigma_{13}},\\
 \theta_2 := f_{2,\rm e}|_{\Sigma_{23}}=f_{1,\rm e}|_{\Sigma_{12}},&\qquad \qquad\eta_2:=f_{1,\rm o}|_{\Sigma_{23}}=f_{1,\rm o}|_{\Sigma_{12}},\\
 \theta_3 := f_{3,\rm e}|_{\Sigma_{13}}=f_{1,\rm e}|_{\Sigma_{23}},&\qquad \qquad\eta_3:=f_{1,\rm o}|_{\Sigma_{13}}=f_{1,\rm o}|_{\Sigma_{23}}.
\end{split}
\end{equation*}
Using Lemma~\ref{lem:decomposition}
we obtain
\[
\fra_{\delta',\beta}[f] = \sum_{k=1}^3 \|\nabla f_{k,\rm e}\|^2_{L^2(\Omega_k;\dC^2)}
+ \sum_{k=1}^3 \|\nabla f_{k,\rm o}\|^2_{L^2(\Omega_k;\dC^2)} - \frac{1}{\beta}\,S 
\]
where the value $S$ is given by 
$$
\big\|\theta_1 - \theta_2 + \eta_1 +\eta_2\big\|^2_{L^2(\dR_+)}
+\big\|\theta_2 - \theta_3 + \eta_2 +\eta_3\big\|^2_{L^2(\dR_+)}+
\big\|\theta_3 - \theta_1 + \eta_3 +\eta_1\big\|^2_{L^2(\dR_+)};
$$
here we have identified $L^2(\Sigma_{ij})=L^2(\dR_+)$, $i,j=1,2,3$. We shall show later that the above term can be estimated by 
\begin{equation}\label{abc}
S \le \big(4 - \omega(1 -t)\big) \sum_{k=1}^3\|\theta_k\|^2_{L^2(\dR_+)} + \big(4 + 3\omega t^{-1}\big)\sum_{k=1}^3\|\eta_k\|^2_{L^2(\dR_+)}
\end{equation}
for all $t > 0$ and all $\omega\in[0,1]$. With the help of this inequality we find
\begin{equation}
\label{CC}
\fra_{\delta',\beta}[f] \ge C_{\rm e} + C_{\rm o},
\end{equation}
where $C_{\rm e}$ and $C_{\rm o}$ are given by
\[
C_{\rm e}:=  \sum_{k=1}^3 \|\nabla f_{k,\rm e}\|^2_{L^2(\Omega_k;\dC^2)} -\frac{1}{\beta}\bigl(4-\omega(1-t)\bigr)
\sum_{k=1}^3\|\theta_k\|_{L^2(\dR_+)}^2
\]
and
\[
C_{\rm o} := 
\sum_{k=1}^3 \|\nabla f_{k,\rm o}\|^2_{L^2(\Omega_k;\dC^2)} - \frac{1}{\beta}\bigl(4+3 \omega t^{-1}\bigr)\sum_{k=1}^3\|\eta_k\|^2_{L^2(\dR_+)}.
\]
\vskip 0.15cm
\noindent {\bf Step II.} In this step we estimate $C_{\rm e}$ and $C_{\rm o}$.
We start with $C_{\rm e}$.
Applying Lemma~\ref{lem:wedgetrace1} with 
$\gamma = \tfrac{1}{2\beta}(4-\omega(1-t))$ and $\varphi = 2\pi/3$ 
to the functions $\{f_{k,\rm e}\}_{k=1}^3$ we get
\[
\|\nabla f_{k,\rm e}\|^2_{L^2(\Omega_k;\dC^2)} - \frac{1}{\beta}\bigl(4-\omega(1-t)\bigr)\|\theta_k\|^2_{L^2(\dR_+)} \ge -\frac{1}{3\beta^2}
\bigl(4-\omega(1-t)\bigr)^2\|f_{k,\rm e}\|^2_{L^2(\Omega_k)}
\]
for $k=1,2,3$.
Summing up these three inequalities we find
\begin{equation}
\label{Ce}
C_{\rm e} \ge -\frac{1}{3\beta^2}
\bigl(4-\omega(1-t)\bigr)^2\sum_{k=1}^3\|f_{k,\rm e}\|^2_{L^2(\Omega_k)}
\end{equation}
for all $t > 0$ and all $\omega\in[0,1]$.
Next we estimate $C_{\rm o}$. Note that $f_{k,{\rm o}}|_{\Sigma_k} = 0$ for $k=1,2,3$, and
hence we can apply Lemma~\ref{lem:wedgetrace2} with $\gamma = \frac{1}{2\beta}(4+3 \omega t^{-1})$ to the 
functions $\{f_{k,{\rm o}}\}_{k=1}^3$. This yields 
\[
\|\nabla f_{k,\rm o}\|^2_{L^2(\Omega_k;\dC^2)} - \frac{1}{\beta}\bigl(4+3 \omega t^{-1}\bigr)\|\eta_k\|^2_{L^2(\dR_+)} \ge -
\frac{1}{4\beta^2}\bigl(4+3 \omega t^{-1}\bigr)^2\|f_{k,\rm e}\|^2_{L^2(\Omega_k)}
\]
for $k=1,2,3$. Summing up these three inequalities gives
\begin{equation}
\label{Co}
C_{\rm o} \ge -\frac{1}{4\beta^2}\bigl(4+3 \omega t^{-1}\bigr)^2\sum_{k=1}^3\|f_{k,\rm o}\|^2_{L^2(\Omega_k)}
\end{equation}
for all $t > 0$ and all $\omega\in[0,1]$.

\vskip 0.15cm
\noindent{\bf Step III.}
Note that 
\[
\sum_{k=1}^3 \big(\|f_{k,\rm e}\|_{L^2(\Omega_k)}^2 + \|f_{k,\rm o}\|_{L^2(\Omega_k)}^2\big) = \|f\|^2_{L^2(\dR^2)} = 1.
\]
Thus, \eqref{CC}, \eqref{Ce} and \eqref{Co} imply
\begin{equation}\label{mnb}
\fra_{\delta',\beta}[f] \ge - \inf_{t > 0}
\min_{\omega\in[0,1]}\max\left\{\frac{1}{3}\bigl(4-\omega(1-t)\bigr)^2,\frac{1}{4}\left(4+\frac{3 \omega}{t}\right)^2\right\}\frac{1}{\beta^2}
\end{equation}
and for \eqref{mindeltaprime} to hold it remains to show that the value on the right hand side is equal to $-(\frac{12\sqrt{3}-2}{9})^2\frac{1}{\beta^2}$.
For this consider the functions 
\begin{equation*}
M_1(\omega,t) := \frac{1}{3}\bigl(4-\omega(1-t)\bigr)^2\quad\text{and}\quad M_2(\omega,t) 
:= \frac{1}{4}\left(4+\frac{3 \omega}{t}\right)^2.
\end{equation*}
We have
\[
M_1(0,1) = \frac{16}{3},\quad M_2(0,1) = 4,\quad\text{and}\quad M_1(\omega,t) \ge \frac{16}{3},\,\,\,t \ge 1,\,\omega\in[0,1],
\]
and therefore
\begin{equation}
\label{1infty}
\min_{t\in [1,+\infty)}\min_{\omega\in[0,1]}\max\big\{M_1(\omega,t), M_2(\omega,t)\big\} = \frac{16}{3}.
\end{equation}
Suppose now that $t\in(0,1)$ is fixed. Then $M_1(\cdot,t)$ is continuous and decreasing, 
whereas $M_2(\cdot,t)$ is continuous and increasing, and a straightforward computation shows that for
\begin{equation*}
\omega_* := \frac{(8 - 4\sqrt{3})t}{3\sqrt{3} + 2(1-t)t}\in [0,1]
\end{equation*}
we have $M_1(\omega_*,t) = M_2(\omega_*,t)$. Hence for $t\in(0,1)$ fixed we find
\begin{equation*}
\begin{split}
\min_{\omega\in[0,1]}\max\big\{M_1(\omega,t), M_2(\omega,t)\big\} & = M_1(\omega_*,t) = M_2(\omega_*,t)\\
&=\frac{1}{4}\Bigg(4 + \frac{3(8 - 4\sqrt{3})}{3\sqrt{3} + 2(1-t)t}\Bigg)^2.
\end{split}
\end{equation*}
Next, we minimize with respect to $t\in(0,1)$. Clearly, the above value is minimal in $t\in(0,1)$
if $t = 1/2$. 
Therefore we obtain
\begin{equation}
\label{01}
\min_{t\in(0,1)}\min_{\omega\in[0,1]} \max\{M_1(\omega,t), M_2(\omega,t)\}= M_2(\omega_*,1/2)= \Bigg(\frac{12\sqrt{3}-2}{9}\Bigg)^2.
\end{equation}
Now \eqref{1infty} and \eqref{01} together imply 
\[
\min_{t>0}\!\min_{\omega\in[0,1]}\!\! \max\{M_1(\omega,t), M_2(\omega,t)\} \!=\!  \min\Bigg\{\frac{16}{3},\Bigg(\frac{12\sqrt{3}-2}{9}\Bigg)^2\Bigg\}
\!=\!
\Bigg(\frac{12\sqrt{3}-2}{9}\Bigg)^2\!,
\]
and hence the assertion \eqref{mindeltaprime} on the minimum of the spectrum of $-\Delta_{\delta',\beta}$ follows from \eqref{mnb}.

\begin{proof}[Proof of the estimate \eqref{abc}]
Let $\theta_1,\theta_2,\theta_3\in L^2(\dR_+) $ and  $\eta_1, \eta_2, \eta_3\in L^2(\dR_+)$. We shall not use an index for the norm in $L^2(\dR_+)$
in this proof. From
\[
\begin{split}
\|\theta_1 - \theta_2 + \eta_1 + \eta_2\|^2 &\le 2\|\theta_1 +\eta_1\|^2 + 2\|\theta_2 - \eta_2\|^2,\\
\|\theta_2- \theta_3 + \eta_2 + \eta_3 \|^2 &\le 2\|\theta_2 +\eta_2\|^2 + 2\|\theta_3 - \eta_3\|^2,\\
\|\theta_3- \theta_1 + \eta_3 + \eta_1\|^2 &\le 2\|\theta_3 +\eta_3\|^2 + 2\|\theta_1 - \eta_1\|^2,\\
\end{split}
\]
and the parallelogram identity we conclude
\begin{equation}\label{Estimate1}
S \le 2\sum_{k=1}^3\big(\|\theta_k - \eta_k\|^2 + \|\theta_k + \eta_k\|^2\big)\le 4\sum_{k=1}^3 \big(\|\theta_k\|^2 + \|\eta_k\|^2\big).
\end{equation}

On the other hand we have
\begin{equation}\label{S}
\begin{split}
S &= \|\theta_1 - \theta_2\|^2 + \|\theta_2 - \theta_3\|^2 + \|\theta_3 - \theta_1\|^2 + \|\eta_1 + \eta_2\|^2 + \|\eta_2+\eta_3\|^2 + \|\eta_3 + \eta_1\|^2\\[0.2ex]
&\quad+2\Real\big[\big(\theta_1-\theta_2, \eta_1 +\eta_2\big)\big] + 2\Real\big[\big(\theta_2-\theta_3,\eta_2 +\eta_3\big)\big] + 2\Real\big[\big(\theta_3  -\theta_1,\eta_3 +\eta_1\big)\big]\\
&= \|\theta_1 - \theta_2\|^2 + \|\theta_2 - \theta_3\|^2 + \|\theta_3 - \theta_1\|^2 + \|\eta_1 + \eta_2\|^2 + \|\eta_2+\eta_3\|^2 + \|\eta_3 + \eta_1\|^2\\[0.2ex]
&\quad+2\Real\big[\big(\theta_1, \eta_2 -\eta_3\big)\big] + 
2\Real\big[\big(\theta_2,\eta_3 -\eta_1\big)\big] + 2\Real\big[\big(\theta_3,\eta_1 -\eta_2\big)\big]
\end{split}
\end{equation}
and the Cauchy-Schwarz inequality together with the inequality $2ab \le ta^2 + \tfrac{1}{t}b^2$, $a,b >0$, $t >0$, 
yields
\[
\begin{split}
\Big|2\Real\big[\big(\theta_1,\eta_2 -\eta_3\big)\big]\Big| &\le 2 \|\theta_1\|\cdot\|\eta_2 - \eta_3\|\le t \|\theta_1\|^2 + \tfrac{1}{t}\|\eta_2 - \eta_3\|^2,\\[0.2ex]
\Big|2\Real\big[\big(\theta_2,\eta_3 -\eta_1\big)\big]\Big| &\le 2 \|\theta_2\|\cdot\|\eta_3 - \eta_1\|\le t\|\theta_2\|^2 +\tfrac{1}{t}\|\eta_3 - \eta_1\|^2,\\[0.2ex]
\Big|2\Real\big[\big(\theta_3,\eta_1 -\eta_2\big)\big]\Big| &\le 2 \|\theta_3\|\cdot\|\eta_1 - \eta_2\|\le t   \|\theta_3\|^2 + \tfrac{1}{t}\|\eta_1 - \eta_2\|^2.
\end{split}
\]
Combining the latter with \eqref{S} and making use of 
$\Vert\eta_i+\eta_j\Vert^2\leq 2\Vert\eta_i\Vert^2 + 2\Vert\eta_j\Vert^2$, $i,j=1,2,3$, we arrive at
\begin{equation}
\label{ineqt}
\begin{split}
S &\le \|\theta_1 - \theta_2\|^2 + \|\theta_2 - \theta_3\|^2  + \|\theta_3 - \theta_1\|^2 + 
4\|\eta_1\|^2 + 4\|\eta_2\|^2 + 4\|\eta_3\|^2  \\
& +t\bigl(\|\theta_1\|^2 + \|\theta_2\|^2 + \|\theta_3\|^2\bigr) + \tfrac{1}{t}\bigl(\|\eta_1- \eta_2\|^2 + 
\|\eta_2 -\eta_3\|^2 + \|\eta_3-\eta_1\|^2\bigr)
\end{split}
\end{equation}
for all $t > 0$. Moreover, as 
\begin{equation*}
\begin{split}
\|\theta_1 - \theta_2\|^2 + \|\theta_2 - \theta_3\|^2 + \|\theta_3 - \theta_1\|^2 
& = 3\|\theta_1\|^2 + 3\|\theta_2|^2 + 3\|\theta_3\|^2 - \|\theta_1 + \theta_2 + \theta_3\|^2\\
&\le 3\|\theta_1\|^2 + 3\|\theta_2\|^2 + 3\|\theta_3\|^2,
\end{split}
\end{equation*}
and analogously
\begin{equation*}
\|\eta_1 - \eta_2\|^2 + \|\eta_2 - \eta_3\|^2 + \|\eta_3 - \eta_1\|^2 \le 3\|\eta_1\|^2 + 3\|\eta_2\|^2 + 3\|\eta_3\|^2,
\end{equation*}
we can further estimate \eqref{ineqt} by
\begin{equation}
\label{Estimate2}
S\le \big(3+t\big)\sum_{k=1}^3\|\theta_k\|^2 + \sum_{k=1}^3 \big(4+ \tfrac{3}{t}\big)\|\eta_k\|^2.
\end{equation}
In order to obtain the estimate \eqref{abc} let
$\omega \in [0,1]$, consider $S = \omega S + (1 - \omega) S$ and 
estimate $\omega S$ as in \eqref{Estimate2} and $(1 - \omega)S$ as in \eqref{Estimate1}. 
\end{proof}


\end{document}